\documentclass[11pt]{article}
\usepackage[margin=1in,a4paper]{geometry}

\newcommand{\TitleMetadataForHyperref}{Robust Shattering Arguments}
\newcommand{\AuthorMetadataForHyperref}{Mohsen Ghaffari, Magnús M. Halldórsson, Yannic Maus, Alexandre Nolin}

\usepackage{1a_generic_macros}

\usepackage{1b_paper_macros}

\usepackage{tikz}

\begin{document}

\title{Robust Shattering Arguments}

\author{
Mohsen Ghaffari {\small\orcidlink{0000-0003-4213-9898}},
Magnús M.\ Halldórsson {\small\orcidlink{0000-0002-5774-8437}},
Yannic Maus {\small\orcidlink{0000-0003-4062-6991}},
and Alexandre Nolin {\small\orcidlink{0000-0002-3952-0586}}
}

\date{}
\maketitle

\begin{abstract}
Graph shattering is a central technique underlying sublogarithmic-time distributed algorithms in the $\mathsf{LOCAL}$ model. Its analysis typically relies on bounding the probability that large sets of distant nodes remain unresolved, often via independence assumptions justified by locality. 

We show that these assumptions fail for pre-shattering procedures that run for super-constant rounds, where dependencies accumulate over time. As a result, several standard shattering arguments in the literature are incomplete, including those for maximal independent set, $(\Delta+1)$-coloring, and the distributed Lovász Local Lemma (LLL).

We provide a systematic repair of these analyses. Our main contribution is a corrected shattering analysis of the Fischer--Ghaffari LLL algorithm. In addition, we develop general tools that capture common patterns in modern algorithms and yield the required decay bounds without relying on independence. We also present explicit counterexamples to commonly used shattering lemmas.

Overall, we establish a robust and reusable foundation for shattering arguments in the presence of long-range dependencies.
\end{abstract}

\begin{center}
\large
Due to ample requests we are sharing this preliminary draft though we are currently working on a generalization of the claims in this paper. 
\end{center}

\clearpage
\tableofcontents
\clearpage

\section{Introduction}

Rigorous reasoning is the foundation of theoretical computer science. Yet the field’s conference-driven pace can at times encourage arguments that are intuitive but not fully justified. This tension is particularly pronounced in probabilistic analyses, where subtle dependencies can invalidate otherwise compelling lines of reasoning.

A central example is the technique of \emph{graph shattering}, which underlies many of the fastest randomized graph algorithms across computational models. The shattering paradigm proceeds in two phases: a randomized \emph{pre-shattering phase} that solves most of the instance, followed by a \emph{post-shattering phase} that completes the computation on the remaining subgraph. The key property is that the residual graph decomposes into small connected components, enabling efficient deterministic processing.

Shattering has become indispensable in the design of sublogarithmic-time distributed algorithms in the \LOCAL model~\cite{linial1992locality}. It underlies state-of-the-art results for problems such as maximal independent set (MIS)~\cite{Ghaffari_soda16,Ghaffari19}, $(\Delta+1)$-coloring~\cite{CLP20,HKNT_stoc22}, maximal matching~\cite{BEPS16_jacm}, $\Delta$-coloring~\cite{GHKM18}, sinkless orientation~\cite{GS_soda17}, and the constructive Lovász Local Lemma (LLL)~\cite{FG17,GHK18,CHLPU20,MU21,Davies_soda23}. Moreover, the randomized complexity of such problems is lower bounded by the deterministic complexity on exponentially smaller instances~\cite{CKP19}, closely matching the structure produced by shattering.

At the heart of any shattering argument lies a probabilistic claim about the residual graph after the pre-shattering phase. Specifically, one must show that
\begin{quote}
(*) every connected component of the remaining graph has size $\poly(\log n)$ with high probability.
\end{quote}

The standard approach is to argue that each node remains unresolved with small probability, and that dependencies are sufficiently local.
Under such locality, events on well-separated nodes behave independently, allowing probabilities to multiply. 
This rules out the existence of large residual components as they necessarily contain many such nodes.

The crucial assumption is that dependencies are local. While this holds for constant-round algorithms, many of the most powerful shattering-based results rely on pre-shattering procedures that run for super-constant rounds. In this regime, dependencies accumulate over time, and the standard independence arguments no longer apply.

In this work, we show that a number of existing shattering proofs rely on independence assumptions that do not hold in general. As a consequence, several widely cited arguments are incomplete or incorrect. Nevertheless, we demonstrate that the underlying algorithmic results are largely correct, and we provide new analyses that restore them on rigorous foundations.

While parts of the correct reasoning are known informally, they are neither systematically documented nor consistently reflected in the literature. Incorrect arguments have been repeatedly reused across central results. Our goal is to provide a unified and fully rigorous foundation for shattering arguments.

\subsection{Our Contribution}

Our contributions are primarily foundational: we identify a systematic issue in a widely used analytical technique, demonstrate concrete failure modes, and develop tools that restore correctness while preserving the power of the shattering framework.

A recurring assumption in the analysis of shattering-based algorithms---that events at sufficiently distant nodes can be \emph{easily} treated as if independent---is invalid in general once the pre-shattering phase spans $\omega(1)$ rounds.
While independence follows from locality in constant-round algorithms, it breaks down when dependencies accumulate over time.
We demonstrate that this issue is not merely technical: it affects several widely cited results, whose analyses rely on independence or related ``local conditioning'' arguments that do not hold.

\paragraph{Explicit counterexamples.}
To make this failure concrete, we construct simple counterexamples showing that both common proof strategies for establishing shattering can fail:
(i) arguments that rely on independence of distant events, and 
(ii) arguments that attempt to replace independence by robustness under adversarial choices of randomness outside a local neighborhood.
These examples demonstrate that neither approach suffices to establish the key separated-set bound required for shattering.

\paragraph{A corrected analysis of the distributed LLL (main technical contribution).}
Our main technical contribution is a new analysis of the Fischer--Ghaffari algorithm for the distributed Lovász Local Lemma.
We identify a flaw in the original shattering argument—specifically, an incorrect independence claim—and replace it with a rigorous analysis that avoids independence entirely.
Our approach models the pre-shattering phase as a sequential sampling process and relates partial executions to fully independent assignments.
This yields the required $\Delta^{-\Omega(|S|)}$ bound for well-separated sets and restores the correctness of the algorithm’s complexity guarantees.
As a consequence, previously claimed separations in the distributed time hierarchy are reinstated.

\paragraph{General tools for handling dependencies over time.}
Beyond the LLL setting, we develop general techniques for establishing shattering in the presence of long-range dependencies.
Our approach isolates the core requirement needed for shattering—a strong decay bound on the probability that a separated set survives—and provides reusable ways to obtain it without relying on independence.
These tools apply to several classes of algorithms where dependencies evolve over multiple rounds, including symmetry-breaking procedures such as MIS, maximal matching, and graph coloring.

\paragraph{A unified and rigorous framework.}
Taken together, our results provide a clean and robust foundation for the graph shattering paradigm.
We clarify the precise conditions under which shattering holds, separate combinatorial and probabilistic components of the argument, and eliminate recurring sources of error.
This yields a reusable framework that can be applied to existing algorithms and guides the design and analysis of future ones.

\subsection{Content of the paper}
In \Cref{sec:tecOverview}, we provide a technical overview of shattering arguments, highlighting key challenges and common pitfalls, and outlining our approach. In \Cref{sec:counterExample}, we present counterexamples to commonly used assumptions in prior work. In \Cref{sec:shattering}, we isolate general and robust shattering tools used throughout the paper. As a side benefit, it yields somewhat better parameters for shattering than before.
In \Cref{sec:fg-lll}, we revisit the Fischer--Ghaffari LLL algorithm and provide a corrected analysis. In \Cref{sec:multiPhase,sec:opportunity}, we extend our framework to more general settings, including opportunity-based and multi-phase processes that can be used to re-establish shattering results for several fundamental distributed algorithms; in particular, in \Cref{sec:multiPhase} we prove shattering for the state of the art $\Delta+1$-vertex coloring algorithm.

In \Cref{sec:ballCover}, we present a small ball cover argument for post-shattering components. This is particularly useful in regimes where the component size bound of $\poly(\Delta)\log n$ is not sufficiently small compared to $n$, e.g., when $\Delta \gg \poly \log n$. The small ball cover is a general method to further decompose such components into manageable regions, enabling efficient post-shattering algorithms.

In \Cref{app:OpportunitiesApplications}, we provide applications of our general framework to opportunity-based processes. In particular, we use these tools to provide a self-contained proof of \Cref{thm:MIS} for the state of the art algorithm to compute maximal independent sets.

\section{Technical Overview}
\label{sec:tecOverview}

\subsection{Structure of Shattering Arguments}

We begin by isolating the structure of standard shattering arguments, identifying the step where difficulties arise, and outlining how our work addresses them.

At a high level, existing proofs that a randomized pre-shattering phase leaves only small connected components consist of two largely independent ingredients.

\newcommand{\argOneText}{(A)}
\newcommand{\argTwoText}{(B)}
\newcommand{\argOneLink}{\hyperlink{arg1}{\argOneText}}
\newcommand{\argTwoLink}{\hyperlink{arg2}{\argTwoText}}

\paragraph{Shattering Template.}
Our first step is to extract a specific form of these two parts.
Let $B \subseteq V$ denote the random set of nodes that remain after the pre-shattering phase, i.e., the nodes that participate in the post-shattering instance. We need two terms that remain only partially formalized for now. 
The term "well-separated" means that for a problem specific constant $c$, the nodes of $S$ are of mutual distance at least $c$. Further, $S$ is \emph{structured} if it induces a connected subgraph in $G^{2c}$.
A shattering proof consists of:
\begin{enumerate}[label=(\Alph*)]
\item \hypertarget{arg1}{\textbf{Separated-Set Bound.}} For every well-separated set $S$ of pairwise distant nodes,
\begin{equation}
\label{condition:THESETsimplified}
\Pr[S \subseteq B] \le \Delta^{-\Omega(|S|)}.
\end{equation}
\item \hypertarget{arg2}{\textbf{Witness Extraction + Union Bound.}} 
Every large connected component in the post-shattering graph contains a structured subset $S$ of distant nodes. 
Enumerating such sets and applying a union bound shows that large components are unlikely.
\end{enumerate}

The second ingredient is combinatorial and essentially independent of the algorithm; we provide it in \Cref{sec:Generalshattering}. All algorithm-specific difficulty lies in establishing \argOneLink, and that's where the central technical task in any shattering argument lies.

\subsection{How Is \argOneText\ Established?}
\label{sec:arg1}

We first illustrate a setting where \argOneLink\ is straightforward. Consider the problem of coloring a graph of maximum degree $\Delta$ with $\Delta^{10}$ colors. Each node independently selects a color uniformly at random and keeps it if no neighbor selects the same color; otherwise, it remains uncolored.

A node remains uncolored with probability $1/\Delta^9$, and these events are independent for nodes at distance greater than $2$. Hence, for any $2$-independent set $S$, the probability that all nodes in $S$ remain uncolored is $\Delta^{-9|S|}$, establishing \Cref{condition:THESETsimplified}. By 
the lemma establishing \argTwoLink, the remaining graph consists of small connected components w.h.p.

This argument relies on a key property of local algorithms: if a \LOCAL algorithm runs for $r$ rounds, the outcome at a node depends only on randomness within its $r$-hop neighborhood. Consequently, events at nodes at distance greater than $2r$ are independent. This locality makes \argOneLink\ easy to establish for constant-round processes.

The situation changes fundamentally when the pre-shattering phase runs for $\omega(1)$ rounds. In this regime, dependencies can propagate over time, and distant nodes may become correlated even if they are far apart in the graph. As a result, the standard independence-based argument for \Cref{condition:THESETsimplified} no longer applies.

A common approach in the literature is to argue that each node is solved with high probability even under adversarial choices of randomness outside a bounded neighborhood. While this robustness may suggest that distant nodes behave independently, this inference is incorrect: such guarantees neither imply independence nor suffice to establish \argOneLink. We give explicit counterexamples in \Cref{sec:counterExample}.

\subsection{The Primary Application: The Fischer--Ghaffari LLL Algorithm}

We now turn to the Fischer--Ghaffari algorithm for the distributed Lovász Local Lemma (LLL), which serves as our central example where the standard approach to establishing \argOneLink\ fails. It also requires the most non-trivial solution.

Given a set of random variables and a family of ``bad'' events, each depending on a small subset of the variables, the goal of the constructive LLL (see \cref{sec:LLLdef}) is to compute an assignment to the variables that avoids all bad events, under conditions ensuring that such an assignment exists. 
There are two key parameters: $p$ and $d$. The former is an upper bound on the probability of each bad event, under the assumed random process for sampling the variables.
Also, each event is independent of all but at most $d$ other events. The objective is to compute an assignment to the random variables that avoids all the bad events.

Erd\H{o}s and \lovasz \cite{EL74} showed that the LLL goal can be achieved whenever $4pd \le 1$.
After a series of contributions on constructive approaches, Moser and Tardos \cite{MoserTardos10} achieved this with an efficient algorithm. Their approach leads to a $O(\log n)$-round distributed algorithm, under an improvement of Chung, Pettie and Su \cite{CPS17}. It is known that $\Omega(\log\log n)$ rounds are necessary \cite{BGR21}, while research is ongoing on the question if the gap can be closed; in fact, it is conjectured that the complexity is $O(\log\log n)$ for constant $d$ under criterion $pd^c<1$ for some constant $c$~\cite{CP19}. In the case of low-degree graphs, a sublogarithmic-time algorithm was presented by Fischer and Ghaffari \cite{FG17} for a weaker but highly important polynomial condition.

In the Fischer--Ghaffari algorithm, variables are progressively assigned values while some variables are marked as \emph{frozen}. An event becomes \emph{dangerous} when its marginal probability exceeds a threshold, at which point its remaining variables are frozen. The post-shattering instance consists of these frozen variables and their incident events.

The process proceeds sequentially over a coloring of the event graph, so that variables are exposed gradually. This sequential exposure is crucial for the algorithm's correctness but comes with a challenge: the state of an event when it is processed depends on earlier random choices.

The original analysis relies on the following claim.

\begin{falseObservation}[Observation 10 in~\cite{FG17}]
For each event $A \in X$, the probability that $A$ has at least one unset variable is at most $(d + 1)\sqrt{p}$. Furthermore, this is independent of events at distance greater than $2$ from $A$.
\end{falseObservation}

The probability bound is correct, and the independence claim is natural by analogy with constant-round algorithms. However, it fails in this setting. Because the process unfolds over many rounds of sequential sampling, the state of an event depends on earlier sampling decisions. When an event becomes dangerous, this can influence the state of other events, and such effects can propagate over many rounds.

Thus, participation in the post-shattering phase is not independent for distant events. In particular, this invalidates the independence assumption needed to establish \argOneLink. In \Cref{app:examples}, we give a concrete instance demonstrating this phenomenon.

We re-establish the following result.
Let $\Det_{\mathsf{LLL},d,q}(N)$ denote the deterministic complexity of LLL on instances of size $N$, dependency degree $d$, and maximum event probability $q$.

\newcounter{savedLLL}
\setcounter{savedLLL}{\thetheorem}

\addtocounter{savedLLL}{1}
\edef\temp{theorem.\thesavedLLL.simple}
\addtocounter{savedLLL}{-1}

\hypertarget{\temp}{
 \begin{restatable}[\cite{FG17} redux]{theorem}{thmLLLSimple}
\customlabel[theorem]{thm:mainLLL}{simple}
There is a randomized distributed algorithm for the constructive \lovasz Local Lemma under criterion $8pd^{9} < 1$ that w.h.p.\ runs in 
$O(d^2) + \Det_{\mathsf{LLL}, d, \frac{1}{3(d+1)}}(\poly\log n)$ rounds.
\end{restatable}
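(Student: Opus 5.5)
The plan is to run the Fischer--Ghaffari pre-shattering procedure unchanged and replace only its probabilistic analysis. First compute a proper coloring of $H^2$, where $H$ is the dependency graph. This takes $O(d^2)$ colors and is obtained deterministically in $O(d^2)+O(\log^* n)$ rounds. Then process the color classes one by one. When an event $A$ is processed, it samples each of its still-free, unfrozen variables. Right after that, every event whose conditional probability, given the variables set so far, exceeds $\sqrt p$ becomes dangerous, and all of its unset variables are frozen.

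The local properties are standard and I would verify them directly. By the usual martingale/threshold argument, every event ends with conditional probability at most $\sqrt p\cdot$ (a factor of at most $d+1$ from the freezing cascade). Under $8pd^9<1$, the residual instance on the frozen variables therefore satisfies the LLL criterion with per-event probability at most $\frac{1}{3(d+1)}$ and dependency degree at most $d$. Once we know the residual components have size $\poly\log n$, the post-shattering cost is $\Det_{\mathsf{LLL},d,\frac{1}{3(d+1)}}(\poly\log n)$. So the whole proof reduces to establishing the Separated-Set Bound \argOneLink, after which the combinatorial witness-extraction and union-bound lemma of \Cref{sec:Generalshattering} gives the component-size bound.

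The core step is to prove, for any set $S$ of events at pairwise distance at least a suitable constant $c$ (say $5$), that
\[
\Pr[\text{every } A\in S \text{ has an unset variable}]\le \bigl((d+1)\sqrt p\bigr)^{\Omega(|S|)}\le d^{-\Omega(|S|)}.
\]
Independence is not available, so I would use a coupling. Fix in advance an independent sample $x_v$ for every variable $v$ from its distribution, and let the algorithm use $x_v$ whenever it sets $v$. The algorithm's assignment is then a partial restriction of the fully independent assignment $x$. An event $A$ keeps an unset variable only if some event $B$ within distance $1$ of $A$ became dangerous. Charging step: if $B$ became dangerous at some time $t$, then under $x$ the event $B$ itself occurs with conditional probability at least $\sqrt p$ given the history. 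Equivalently, I would show that for each $B$, the quantity $\mathbf 1[B \text{ dangerous}]\cdot \sqrt p$ is dominated by $\Pr[B \text{ occurs under } x]$. I would turn this into a supermartingale/product bound by exposing the variables in processing order. Define $Z_B=\mathbf 1[B\text{ dangerous}]/\sqrt p$ and stop the conditional-probability martingale of $B$ at the freezing time. Then $\mathbb E[\prod_{B\in T} Z_B]\le \Pr[\text{all } B\in T \text{ occur under }x]/p^{|T|/2}\cdot p^{|T|}$. Because $T$ (one witness per $A\in S$) consists of events at pairwise distance at least $3$, they share no variables, so under $x$ they occur independently, and the product gives $p^{|T|}$. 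This yields $\Pr[\text{all } B\in T\text{ dangerous}]\le p^{|T|/2}$. A union bound over the at most $(d+1)^{|S|}$ choices of witnesses $T$ then gives $((d+1)\sqrt p)^{|S|}$.

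I expect the main obstacle to be making the product-of-stopped-martingales claim rigorous. The conditional probabilities of distinct events $B$ evolve jointly, and freezing caused by other dangerous events changes which variables get exposed later. This is exactly the dependence that breaks Observation 10. My first attempt would be to argue that for variable-disjoint events, the product of their conditional probabilities (each given the exposed variables) is itself a martingale under the sequential exposure of $x$, since each step exposes one variable affecting at most one of them. Freezing only stops exposure, so optional stopping applies to the product at the (random) freezing times. Then $\sqrt p^{|T|}\Pr[\text{all dangerous}]\le \mathbb E[\text{product at stopping}]=p^{|T|}$. If the cascade of freezing inflates the threshold, I would fall back on bounding the probability at the dangerous time by $\sqrt p/(d+1)$ and absorb the resulting factors using the slack in $8pd^9<1$.
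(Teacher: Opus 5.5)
Your core probabilistic step is correct and is essentially the paper's \Cref{lem:smallSetProbability} written as a martingale argument. For variable-disjoint witnesses $T$, the expected product of their final marginals equals $\Pr[\bigcap_{B\in T}B]\le p^{|T|}$, because adaptively unexposed variables stay fresh; this is what \Cref{lem:processProbability} formalizes. A dangerous witness is frozen and keeps marginal at least the threshold $\theta$, so $\theta^{|T|}\Pr[\text{all of }T\text{ dangerous}]\le p^{|T|}$. Your outer union bound over $(d+1)^{|S|}$ witness tuples gives the same bound as the paper's union bound inside each neighborhood. (Your display for $\prod Z_B$ is garbled and the final ``$=p^{|T|}$'' should be ``$\le$'', but the intended chain is right.)

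\textbf{Main gap: the threshold $\theta=\sqrt p$.} It gives only $\bigl((d+1)\sqrt p\bigr)^{|S|}$, which under $8pd^9<1$ is roughly $d^{-3.5|S|}$. The constant hidden in your $d^{-\Omega(|S|)}$ is exactly what matters. \Cref{cor:shattering-small-cc} requires $d^{-(c_2+c_3+3)|S|}$. That is $d^{-(6+\varepsilon)|S|}$ for $3$-independent sets, which already make witnesses variable-disjoint, and $d^{-(7+\varepsilon)|S|}$ for your distance-$5$ sets. This is not looseness in the corollary. For large $d$, in locally tree-like dependency graphs there are roughly $n\,d^{4t}$ (up to $2^{O(t)}$ factors) $3$-independent $4$-connected sets of size $t$, so a per-node decay of $d^{-3.5}$ cannot pay for the union bound. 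There is no slack in $8pd^9<1$ at threshold $\sqrt p$. Your fallback of lowering the threshold makes $p/\theta$ larger, so it worsens the bound. The fix, which your martingale argument supports verbatim, is to take the threshold as large as the residual instance permits: $\theta=\frac{1}{6(d+1)}$, the paper's $q/2$ with $q=\frac{1}{3(d+1)}$. Then the bound becomes $\bigl(6(d+1)^2p\bigr)^{|S|}\le d^{-(6+\varepsilon)|S|}$ for $d\ge 2$, as required.

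\textbf{Residual bound: wrong mechanism.} Freezing never changes a marginal, so there is no ``freezing cascade'' factor. An event overshoots only at the single sampling step that pushes it over the threshold. The distance-$2$ schedule guarantees that at most one variable of any event is exposed per step, provided the danger check is done after every individual variable. Your description leaves ambiguous whether the check comes only after an event has sampled all its variables; it must come after each one. If the exposed variable is a fair coin, then $\Pr[B\mid\alpha]=\frac12\Pr[B\mid\alpha,x=0]+\frac12\Pr[B\mid\alpha,x=1]$, so the marginal at most doubles. For a general variable, a single rare value can raise the marginal from just below the threshold to $1$. That leaves an event which already holds with certainty, and the residual instance is unsolvable. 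You therefore need the reduction to fair coins (\Cref{lem:approx-vars}), which yields the residual bound $2\theta$ of \Cref{lem:LLLpostShattering}. With $\theta=\frac{1}{6(d+1)}$ this is exactly $\frac{1}{3(d+1)}$, matching the post-shattering parameter in the statement.
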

}

This theorem has several immediate implications.
It yields an algorithm with comparable guarantees in the \CONGEST model \cite{MU21}, and it reinstates correctness for the $(1+\eps)\Delta$ edge-coloring algorithm of \cite{HMN_disc22}, which uses the LLL routine as a subroutine. Most importantly, it repairs a gap in the widely studied complexity landscape of locally checkable labeling (LCL) problems, e.g., ~\cite{CKP19,CP19,BHKLOS18,BBOS18,RG20}. For constant-degree graphs, any LCL solvable in $o(\log n)$ rounds can be reduced in constant time to an LLL instance with $d=O(1)$ \cite{CP19}. Consequently, Theorem~\ref{thm:mainLLL} re-establishes the separation between $\tilde{O}(\log^3 \log n)$ and $o(\log n)$ in the distributed time hierarchy. 

\subsection{Our Approach (Informal)}

The key observation is that for monotone sequential sampling processes, the probability of reaching a given partial assignment depends only on the revealed variable values and not on the order in which they are exposed. This allows us to reinterpret the process as if remaining variables were sampled independently at the end.

Using this perspective, we relate the probability of reaching a configuration in which a set $S$ remains unresolved to the probability that, under a full independent assignment, certain bad events occur simultaneously. This yields the required bound {\argOneLink}
without assuming independence during the process.

\subsection{Extensions: Beyond the LLL Setting}
\label{sec:extensions}

While our main focus is the LLL setting, the same difficulty in establishing \argOneLink\ arises in other classes of randomized algorithms. In these settings, dependencies accumulate over time, preventing direct independence-based arguments.

We re-establish the following seminal results involving randomized distributed algorithms.
Recall that $\deg+1$-list coloring is the problem of computing a proper vertex coloring when each vertex $v$ of degree $d(v)$ chooses from a given set of $d(v)+1$ colors. 
Let $\Det_{\mathsf{d1LC}}(N)$ be the deterministic complexity of that problem on a graph on $N$ vertices.
Similarly, let $\Det_{\mathsf{MM}}(N)$ denote the same quantity for the Maximal Matching problem.

\begin{itemize}
\item Maximal Matching 
    in $O(\log \Delta) + O(\Det_{\mathsf{MM}}(\poly \log n))$ rounds~\cite{BEPS16_jacm};

    \item Maximal Independent Set in $O(\log \Delta) + \Otilde(\log^3 \log n)$ rounds~\cite{Ghaffari_soda16};

     \item $\Delta+1$-list coloring in $O(\Det_{\mathsf{d1LC}}(\polylog n))$ rounds~\cite{CLP20}.
\end{itemize}

These results follow from our treatment of more general phenomena.
We specifically analyze two common patterns.

\paragraph{Opportunity-based decision processes.}
A particular pattern arises in algorithms such as Ghaffari’s MIS algorithm~\cite{Ghaffari_soda16} and the maximal matching algorithm of~\cite{BEPS16_jacm}, where nodes are not guaranteed to be solved in any single round with high probability.

Instead, nodes accumulate progress over time through a sequence of probabilistic opportunities. While opportunities in a given round are typically independent for well-separated nodes, the timing of these opportunities may be highly correlated across nodes. Despite this, we show that one can still obtain strong bounds on the probability that many nodes fail to make progress, thereby establishing \argOneLink.

\newcounter{savedOpportunityProcess}
\setcounter{savedOpportunityProcess}{\thetheorem}

\addtocounter{savedOpportunityProcess}{1}
\edef\temp{theorem.\thesavedOpportunityProcess.formal}
\addtocounter{savedOpportunityProcess}{-1}
\hypertarget{\temp}{
\begin{restatable}{theorem}{thmOpportunities}
\label{thm:probability-guarantee}
\customlabel{thm:opportunityProcess}{formal}
Consider an progressive opportunity process (\cref{def:opportunistic-process}) where a set of nodes $S$ is guaranteed to experience $x$ opportunities to be solved of failure probability $p$ by the end of the process. Then the probability that all nodes in $S$ made no progress at the end of the process is at most $p^x$.
\end{restatable}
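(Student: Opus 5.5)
We prove the statement by a martingale/supermartingale argument over the rounds of the process, not by independence. Since \cref{def:opportunistic-process} is not reproduced above, we assume it says the following. The process runs in rounds $t=1,2,\dots$ with filtration $\mathcal{F}_t$, the history up to round $t$. In each round, some subset $O_t\subseteq S$ of not-yet-progressed nodes is offered an opportunity. The set $O_t$ is determined by $\mathcal{F}_{t-1}$, so it may be adversarial and correlated with everything that came before. Conditioned on $\mathcal{F}_{t-1}$, every node of $O_t$ fails its opportunity with probability at most $p$, and the failure events of the nodes in $O_t$ are independent given $\mathcal{F}_{t-1}$; here well-separation supplies within-round independence. ``Progressive'' means a node that makes progress never reverts. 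We use the guarantee ``$S$ experiences $x$ opportunities'' as follows: on the event that every node of $S$ is still unprogressed at the end, the total number of opportunities taken by nodes of $S$, $\sum_t |O_t|$, is at least $x$. If the definition instead counts opportunities per node, the bound only improves.

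We define the process $M_t = p^{-N_t}\cdot \mathbf{1}[\text{no node of } S \text{ progressed by round } t]$, where $N_t=\sum_{s\le t}|O_s|$ counts opportunities so far. The key step is to show $M_t$ is a supermartingale: $\mathbb{E}[M_t\mid \mathcal{F}_{t-1}]\le M_{t-1}$. The check goes as follows. If some node already progressed, both sides are $0$. Otherwise, $O_t$ is $\mathcal{F}_{t-1}$-measurable, and the event that all of $O_t$ fails has conditional probability at most $p^{|O_t|}$ by the conditional independence within the round. This factor exactly cancels the multiplicative increase $p^{-|O_t|}$ in the weight. Since $M_0=1$, optional stopping at the (bounded) end time $T$ gives $\mathbb{E}[M_T]\le 1$.

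To conclude, on the event $E$ that no node of $S$ made progress, we have $N_T\ge x$ and hence $M_T\ge p^{-x}$. Markov's inequality then gives $\Pr[E]\le p^{x}\,\mathbb{E}[M_T]\le p^x$. If the opportunity count can exceed $x$, we truncate by capping the weights once $N_t$ reaches $x$. This is done by counting only the first $x$ opportunities, which keeps the supermartingale property and avoids any integrability issue.

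The main obstacle is handling the timing correlations the paper emphasizes: opportunities of different nodes arrive at correlated, history-dependent times, and a node may receive several opportunities in one round. The supermartingale formulation addresses this precisely because it only needs two things: $O_t$ is predictable with respect to $\mathcal{F}_{t-1}$, and failures are conditionally independent within a round. We must verify that \cref{def:opportunistic-process} indeed guarantees conditional (not merely unconditional) failure probability at most $p$ given the entire past. That is exactly the property the counterexamples of \Cref{sec:counterExample} show cannot be replaced by robustness under adversarial outside randomness alone.
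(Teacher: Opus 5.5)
Your proof is correct, and it is essentially the paper's argument run forwards. The paper proves by backward induction over rounds that $\Pr[U_{T+1}\mid\alpha]\le p^{k}$ for every state $\alpha$ that still has $k$ guaranteed opportunities to come, using exactly your one-round bound $\Pr[U_{i+1}\mid\alpha]\le p^{d}$, the absorbing property of $\neg U_i$, and $p^{\max(0,k-d)}$ in the role of your truncation; your supermartingale $p^{-N_t}\mathbf{1}[U_t]$ with Markov is the forward dual of that bound, apart from the trivial case $p=0$, where $p^{-N_t}$ is undefined and which needs a one-line separate remark.
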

}
In \Cref{thm:opportunityProcessGeneralized} (\Cref{ssec:generalizedOpportunityProcess}) we present a generalization of 
\Cref{thm:probability-guarantee}.
\paragraph{Multi-phase residual processes.}
Many modern randomized graph algorithms, such as the coloring algorithms of~\cite{CLP20,HKNT_stoc22}, proceed in multiple phases. In each phase, nodes attempt to make progress based on local conditions, and nodes that fail are deferred to a residual graph handled at the end. 

While each phase can often be analyzed in isolation, dependencies accumulate across phases, making it difficult to bound the probability that a set of nodes survives all phases. We show that it nevertheless suffices to control the failure probability of sets of nodes within each phase, even under arbitrary conditioning on previous phases, yielding the required bound for \argOneLink.

\newcounter{savedMultiPhase}
\setcounter{savedMultiPhase}{\thetheorem}

\addtocounter{savedMultiPhase}{1}
\edef\temp{theorem.\thesavedMultiPhase.informal}
\addtocounter{savedMultiPhase}{-1}

\hypertarget{\temp}{
 \begin{restatable}[Informal]{theorem}{thmMultiPhaseSimple}
\customlabel{thm:multiPhaseSimple}{informal}
 Any multi-phase residual process leads, with high probability, to small unsolved components. 
 \end{restatable}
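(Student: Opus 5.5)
The plan is to reduce the statement to the two-part shattering template: establish the separated-set bound \argOneLink\ for the final residual set $B$, then apply the combinatorial witness-extraction and union-bound lemma of \Cref{sec:Generalshattering}, i.e., ingredient \argTwoLink. Since \argTwoLink\ does not depend on the algorithm, all the work goes into showing $\Pr[S\subseteq B]\le \Delta^{-\Omega(|S|)}$ for every well-separated $S$. Here $B=B_1\cup\dots\cup B_k$, where $B_i$ is the set of nodes deferred in phase $i$ and $k$ is the number of phases, which I assume is constant or at least small.

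The formal hypothesis I would take for a multi-phase residual process is a conditional per-phase bound. For every phase $i$, every history $\mathcal{H}_{i-1}$ of the first $i-1$ phases, and every well-separated $T$ among the nodes still active at the start of phase $i$,
\[
\Pr[T\subseteq B_i \mid \mathcal{H}_{i-1}]\le \Delta^{-c|T|}.
\]
To bound $\Pr[S\subseteq B]$, I first take a union bound over the assignment $\phi\colon S\to[k]$ recording in which phase each node of $S$ was deferred. There are at most $k^{|S|}$ such assignments, and this factor is absorbed into $\Delta^{-\Omega(|S|)}$ as long as $k\le \Delta^{c/2}$.

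For a fixed $\phi$, write $S_i=\phi^{-1}(i)$. I chain conditional probabilities over the phases:
\[
\Pr\Big[\bigwedge_i S_i\subseteq B_i\Big]=\prod_{i=1}^{k}\Pr\Big[S_i\subseteq B_i \,\Big|\, \bigwedge_{j<i} S_j\subseteq B_j\Big].
\]
Each factor is an average over histories $\mathcal{H}_{i-1}$ that are consistent with the conditioning event, since that event is measurable with respect to $\mathcal{H}_{i-1}$. The per-phase bound holds for every such history, so each factor is at most $\Delta^{-c|S_i|}$. The product is $\Delta^{-c|S|}$. Two details need care. First, $S_i$ must consist of nodes that are still active at the start of phase $i$; this holds automatically on the event in question, because a node deferred in phase $i$ was active until then. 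Second, $S_i$ is well separated because $S$ is.

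Finally, I combine the pieces: $\Pr[S\subseteq B]\le k^{|S|}\Delta^{-c|S|}=\Delta^{-\Omega(|S|)}$. Plugging this into \argTwoLink\ gives components of size $\poly(\Delta)\log n$ with high probability.

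The main obstacle is in the modelling rather than the chaining. One must justify that the per-phase bound really holds \emph{uniformly over arbitrary histories} and for every well-separated $T$, not just for individual nodes. For coloring algorithms such as \cite{CLP20,HKNT_stoc22}, each phase runs only $O(1)$ rounds on the graph fixed by the history, so I would prove it via locality inside the phase: conditioned on $\mathcal{H}_{i-1}$, fresh randomness makes events at distance $>2r$ independent. The phase's guarantee must hold for every input state satisfying the phase's preconditions, and nodes whose preconditions fail are exactly those counted in the residual.

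If some phases run for $\omega(1)$ rounds, I would instead invoke \Cref{thm:probability-guarantee} inside the phase to obtain the conditional set bound.
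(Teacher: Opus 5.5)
Your proposal is correct and follows essentially the same route as the paper's proof of the formal version (Theorem~\ref{thm:multiPhase}): a union bound over the $k^{|S|}$ failing profiles (the first phase in which each node of $S$ fails), a per-profile bound of $p^{|S|}$ obtained by chaining the per-phase hypothesis under arbitrary conditioning on earlier phases, and then \Cref{cor:shattering-small-cc} to turn the resulting separated-set bound into small components. Your explicit averaging over histories consistent with the conditioning event is a more careful rendering of the paper's ``repeated application of the assumption'' step, not a different argument.
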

}
\medskip

These abstractions provide general and reusable tools for establishing shattering in settings where independence-based arguments fail. 
We develop the former abstraction in \cref{sec:opportunity}
and the latter in \cref{sec:multiPhase}.

\subsection{Related work}
The body of work using shattering is far too large to revisit in full, and doing so is beyond the scope of this paper. Shattering traces its roots back to Beck’s algorithmic treatment of the Lovász Local Lemma, where he showed that after a randomized resampling phase, the remaining “bad” events decompose into small, independent components that can be fixed deterministically \cite{Beck_rsa91a}. It was later used in various LLL-type problems like hypergraph colorings in  centralized and parallel settings, e.g.,  \cite{MR_stoc98,Alon91,PT09}.  Since its introduction into the distributed setting by Barenboim, Elkin, Pettie, and Schneider in \cite{BEPS16_jacm} and into the distributed-adjacent setting of local centralized computation by Rubinfeld, Tamir, Vardi, and Xie in~\cite{RTVX11},  it has become a ubiquitous technique across various computational models, including the LOCAL model \cite{BEPS16_jacm,Ghaffari_soda16,GHKM18,CHLPU20,Davies_soda23,HMN_disc22,GS_soda17,FG17,CLP20,HKNT_stoc22,BMU25,GHK18,BMNSU25,BGKMU19,AHN_sirocco23}, the CONGEST model \cite{MU21,Ghaffari19,MPU25,HN_tcs23,HKMN20,FHN_disc23,FHN_podc25,FHN_disc24}, the local centralized computation model \cite{RTVX11,BGR21,G22}, and the massively parallel computation model \cite{componentstable,GU19,CDP21}.

Direct methods for LLL -- the seminal Moser-Tardos algorithm  \cite{MoserTardos10} and the Chung-Pettie-Su \cite{CPS17} algorithm -- do not rely on the shattering technique and imply (poly-)logarithmic time distributed algorithms for the constructive \lovasz Local Lemma~\cite{MoserTardos10,CPS17}. For the extensive work on distributed graph coloring, we refer to \cite{barenboimelkin_book} for older works and \cite{HKNT_stoc22} for recent works.

The literature on algorithmic applications of the \Lovasz\ Local Lemma includes a few instructive subtleties. For example, in the parallel algorithm of \cite{MR_stoc98}, Pach and Tardos \cite{PT09} later observed an issue in the proof arising from the use of retractions when a marginal event becomes 'dangerous(ly)' close to hold. Such retractions inadvertently influence other events' probabilistic claims. Also, the flawed independence assumption in the shattering argument of Fischer and Ghaffari \cite{FG17}, whose algorithm is inspired by \cite{MR_stoc98}, may be traced back to a related statement already present in \cite{MR_stoc98}.  Our analysis can also  be used to re-establish their result.

\subsection{Notation}

A graph $G=(V,E)$ is defined by a set of vertices $V$ and a set of edges $E \subseteq \binom{V}{2}$ representing adjacency between those vertices.
The neighborhood of a vertex $v$ in $G$ is denoted $N_G(v)$ and defined as $N_G(v) = \set{u \in V: uv \in E}$.
A path $P$ in a graph $G=(V,E)$ between two vertices $u,v \in V$ is a finite sequence of nodes $u_0,u_1,\ldots,u_k$ s.t.\ $u_0 = u$, $u_k = v$, and $\forall i \in \set{0,1,\ldots,k-1}$, $u_i u_{i+1} \in E$. The integer $k$ is the length of the path.
For any two vertices $u,v \in V$, we denote by $\dist_G(u,v)$ their distance in $G$, defined as $\dist_G(u,v) = +\infty$ if no path between $u$ and $v$ exists in $G$, and $\dist_G(u,v) = k$ if the shortest paths between $u$ and $v$ in $G$ has length $k$.
The connected component of a node $u\in V$ in the graph $G$ is the set $\CC_G(u) = \set{v \in V \mid \dist_G(u,v) < +\infty}$.
When the graph $G$ is clear from context, we remove the subscript $G$ and write $N(v)$ and $\dist(u,v)$.

For an integer $x \geq 1$, the $x$-th power graph of $G$, $G^x = (V,E^x)$, is the graph defined on the sames vertices $V$ as $G$ with an edge $uv\in E^x$ iff $\dist_G(u,v) \leq x$ for all $u,v \in V$, $u\neq v$.
For a subset of the vertices $S \subseteq V$ of a graph $G=(V,E)$, $G[S]$ is the subgraph of $G$ induced by $S$, of vertex set $S$ and edge set $E[S] = \set{uv\in E \mid u\in S \wedge v \in S}$.

For any integer $k \geq 0$ we denote by $[k]$ the set $\set{i \in \integers : i \leq k}$ -- in particular, $[0] = \emptyset$.
For functions $f,g: \integers \to \reals$ we denote by $f(n) \in \Otilde(g(n))$ that there exists a constant $c$ s.t.\ $f(n) \in O(g(n) \log^cg(n))$. Asymptotics are always taken as $n \to +\infty$ unless stated otherwise.

\section{Failure of Standard Shattering Arguments}
\label{sec:counter-examples}

The technical overview identified two natural approaches for establishing the Separated-Set Bound \argOneLink. We now show that both approaches fail in general, even though they are widely used in the literature.

\subsection{Failure of Independence-Based Arguments}
\label{sec:FGShattering}

The following lemma is the basic shattering statement for genuinely local pre-shattering procedures. When a vertex remains unresolved based only on bounded-radius randomness, sufficiently distant vertices behave independently, and the usual witness-set union bound gives small residual components with high probability.
\begin{lemma}[\cite{FG17}]\label{lem:Shattering}
Let $G=(V, E)$ be a graph with maximum degree $\Delta$. Consider a process that generates a random subset $B \subseteq V$ such that $\Pr[v \in B]\leq \Delta^{-c_1}$, for some constant $c_1 \geq 1$, and such that the random variables $1(v\in B)$ depend only on the randomness of nodes within at most $c_2$ hops from $v$, for all $v\in V$, for some constant $c_2\geq 1$.
Then, for any constant $c_3\geq 1$, satisfying  $c_1>c_3+ 4c_2 + 2$,  we have that any connected component in $G[B]$ has size at most $O( \log_{\Delta} n  \Delta^{2c_2})$ with probability at least $1- n^{-c_3}$.
\end{lemma}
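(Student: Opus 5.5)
We follow the two-part shattering template: (A) the separated-set bound, obtained from locality and independence, and (B) witness extraction plus a union bound over structured sets. Set $r = 2c_2+1$. We call $S \subseteq V$ \emph{well-separated} if its nodes are pairwise at distance at least $r$ in $G$, so that the $c_2$-hop neighborhoods of distinct nodes of $S$ are disjoint. We call $S$ \emph{structured} if additionally it induces a connected subgraph of $G^{2r}$.

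\textbf{Separated-set bound.} For a well-separated $S$, the indicators $1(v\in B)$, $v\in S$, are functions of disjoint sets of independent node randomness. Hence they are mutually independent, and
\[
\Pr[S\subseteq B] \;=\; \prod_{v\in S}\Pr[v\in B] \;\le\; \Delta^{-c_1|S|}.
\]
This is the only probabilistic step. It is valid here precisely because of the bounded-radius dependence hypothesis.

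\textbf{Witness extraction.} Let $C$ be a connected component of $G[B]$ with $|C| \ge t\,\Delta^{2c_2}$, where $t = \Theta(\log_\Delta n)$ is fixed later. Greedily build $S \subseteq C$ as follows. Start from any $v_0\in C$. While some node of $C$ is at distance at least $r$ from all chosen nodes, pick one such node $u$ that is at distance exactly $r$ or $r+1$ from the current set. Such a node exists by walking along a path in $C$ from the chosen set toward $u$: the first vertex on this path at distance $\ge r$ from the set is at distance exactly $r$, since distances change by at most one per step. Every chosen node is then within distance $r \le 2r$ in $G$ of an earlier one, so $S$ is connected in $G^{2r}$.

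The process stops only when every node of $C$ lies within distance $r-1$ of $S$. Balls of radius $r-1$ contain at most $\Delta^{r}$ nodes, so $|S| \ge |C|/\Delta^{r}$. Alternatively, we can stop as soon as $|S| = t$. We choose the size threshold for $C$ accordingly, namely a component of size at least $t\Delta^{r}$, which is $O(\log_\Delta n\,\Delta^{2c_2})$ up to the constant-power slack absorbed by the $+2$ in the hypothesis.

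\textbf{Union bound.} The number of sets of size $t$ that are connected in $G^{2r}$ and contain a fixed node is at most $(e\Delta^{2r})^{t}$, by the standard bound $(e\,\deg)^{t}$ on connected subtrees in a graph of maximum degree $\deg$. Summing over $n$ choices of root,
\[
\Pr[\exists\text{ large component}] \;\le\; n\,(e\Delta^{2r})^t\,\Delta^{-c_1 t} \;=\; n\,\Delta^{-(c_1-4c_2-2)t}\,e^{t}.
\]
Since $c_1-4c_2-2 > c_3$ (with the slack of at least one absorbing the factor $e^t$ when $\Delta\ge 3$; small $\Delta$ is handled by constants), choosing $t = \lceil (c_3+1)\log_\Delta n\rceil$ makes the bound at most $n^{-c_3}$.

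\textbf{Main obstacle.} The delicate part is bookkeeping the exponents. The neighborhood-counting exponent $2r = 4c_2+2$ and the covering exponent in $|S|\ge |C|/\Delta^{r}$ must both be consistent with the stated hypothesis $c_1>c_3+4c_2+2$ and the stated size bound $\Delta^{2c_2}$. If this does not close, I would tighten the connectivity radius of $S$ to match: use separation $2c_2+1$ and $G^{4c_2+2}$-connectivity, and count the covering balls more carefully.
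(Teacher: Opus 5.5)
Your argument follows the same template as the paper's own shattering machinery (Lemma~\ref{lem:ShatteringGeneral} and Corollary~\ref{cor:shattering-small-cc}). The steps are: independence of the indicators $1(v\in B)$ on sets with pairwise distance at least $2c_2+1$, because their $c_2$-balls are disjoint; greedy extraction of a separated witness connected in a power graph; and a union bound over such witnesses. The paper only cites this lemma from FG17 and does not prove it, but the structure of your proof is sound. Three bookkeeping steps are wrong as written, though all can be repaired.

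\textbf{Size bound.} Your extraction gives $|S|\ge |C|/\Delta^{2c_2+1}$, i.e.\ components of size $O(\log_\Delta n\cdot \Delta^{2c_2+1})$. The extra factor $\Delta$ is not a constant. It also has nothing to do with the $+2$ in $c_1>c_3+4c_2+2$, which constrains the probability exponent, not the size bound. The careful count you allude to fixes this. A ball of radius $r-1=2c_2$ has at most $\sum_{i=0}^{2c_2}\Delta^i\le \frac{\Delta}{\Delta-1}\Delta^{2c_2}\le 2\Delta^{2c_2}$ nodes. The greedy stops only when $C$ is covered by such balls, so any component with $|C|\ge 2t\Delta^{2c_2}$ already yields a witness of size $t$.

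\textbf{Union bound.} The hypothesis gives only a slack $\varepsilon=c_1-c_3-4c_2-2>0$, not a slack of at least one. With $t=\lceil (c_3+1)\log_\Delta n\rceil$, your bound $n\,(e\Delta^{-(c_3+\varepsilon)})^t$ is only at most $n^{1-(c_3+1)(c_3+\varepsilon-\log_\Delta e)}$. For $c_3=1$, $\Delta=3$ and small $\varepsilon$ this is about $n^{0.8}$, which gives nothing. Take instead $t=K\log_\Delta n$ with $K=(c_3+1)/(c_3-\log_\Delta e)$. Then $n\,(e\Delta^{-c_3})^t=n^{-c_3}$. For $\Delta\ge 3$, $K$ is a constant precisely because $c_3\ge 1>\log_3 e$, so the size bound stays $O(\log_\Delta n\cdot\Delta^{2c_2})$.

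\textbf{The case $\Delta=2$.} Saying this case is ``handled by constants'' does not work with your count. The per-node factor $e\cdot 2^{2r}\cdot 2^{-c_1}=e\cdot 2^{-(c_3+\varepsilon)}$ can exceed $1$. Use instead, as the paper does, that on paths and cycles $G^{2r}$ has degree at most $4r$. This makes the factor $e\cdot 4r\cdot 2^{-c_1}\le e\cdot 12\cdot 2^{-7}<1$, and the same choice $t=\Theta(\log n)$ goes through.
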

This lemma is correct, but many algorithms  do not meet its requirement. In particular the algorithm in the paper that presented the lemma does not meet it, see below. This lemma is helpful whenever the runtime of the pre-shattering phase is limited to $r=O(1)$ rounds as that immediately implies that the event $1(v\in B)$ only depends on the randomness in its $r$-hop neighborhood. In other words, the above lemma can be applied with $c_2=r$. Examples of such cases are the sinkless (hypergraph) orientation algorithms in \cite{GS_soda17,BMNSU25}, the $\Delta$-coloring algorithm from \cite{GHKM18}, the vertex and degree splitting algorithms 
\cite{BGKMU19,HMN_disc22}, and also the LLL algorithms from  \cite{GHK18,HMP24}  that work for special LLL instances. 
In a nutshell, it is difficult or even impossible to apply the lemma when the pre-shattering phase uses $\omega(1)$ rounds, as is the case in the LLL paper that introduced the lemma \cite{FG17}\footnote{See the discussion in \Cref{sec:tecOverview} and \Cref{app:examples} for concrete counterexamples establishing long-range dependencies on distant randomness.}, but also most other algorithms for symmetry-breaking such as the maximal independent set, maximal matching, or graph coloring problems or LLL, e.g., \cite{BEPS16_jacm,Ghaffari_soda16,GHK18,Davies_soda23,CLP20,CLP18,HKMT21,HKNT_stoc22,HNT22}.

\subsection{Adversarial-based Shattering Argument}
\label{sec:counterExample}
One way to address potential long-range dependencies is to show that the process under consideration is stochastically dominated by another process in which the participation of distant nodes is genuinely independent, as done, for example, in \cite{RTVX11}. However, in establishing \argOneLink, one must carefully distinguish between stochastic domination of the entire process and stochastic domination within a single iteration.

The following statement captures an intuitively appealing line of reasoning for establishing shattering: namely, that participation in the post-shattering phase should remain unlikely for any given node even if the randomness outside its local neighborhood is fixed adversarially. As we shall see, however, this intuition is not formally correct in general.

\begin{semma}[Lemma 4.1 in \cite{CLP20}]
\label{lem:shatteringWrong}
Consider a randomized procedure that generates a subset of vertices $B \subseteq  V$. Suppose that for each $v \in  V$, we have $\Pr[v \in  B] \leq \Delta^{-3c}$, and this holds even if the random bits not in $N^c(v)$ are determined adversarially. With probability at least $1-n^{-\Omega(c^\prime)}$, each connected component in the graph induced by $B$ has size at most $(c^\prime / c)\Delta^{2c} \log_\Delta  n$.
\end{semma}
This lemma also appears and hence is used wrongly in \cite[Theorem 5.1]{GHK18} though the respective algorithm can be fixed easily. 

The word \emph{adversarial} in this statement has at least two natural definitions.
In a weak form  of adversariality, the adversary must make its choices for the random bits outside $N^c(v)$ \emph{before} the random bits inside $N^c(v)$ are chosen, and in a strong form, the adversary can choose the random bits outside $N^c(v)$ \emph{after} the random bits inside $N^c(v)$ are chosen.
These correspond to whether the adversary makes its choices with or without knowledge of the random bits in $N^c(v)$. 
The authors clarify that their version of adversariality is the \emph{weak form} in the paragraph following the statement of their lemma:

\begin{quote}
    Lemma 4.1 obviously applies to randomized procedures that take c rounds. It also applies to $\omega(1)$-round procedures that are composed of a \emph{series} of $c$-round experiments, where vertices that fail to satisfy some invariant are included in $B$ immediately after the experiment. What is important is that the bound $\Pr[v \in  B] \leq  \Delta^{-3c}$ holds if an adversary is allowed to completely control how the series of experiments proceeds outside $N^c (v)$, so long as it cannot see the random bits generated inside $N^c (v)$.
\end{quote}

For the proof of this lemma, \cite{CLP20} refers to Chang's thesis \cite{changthesis}.
The problem with the proof is that it assumes that the events $[v \in B]$ are independent, something that appears at first sight to be plausible given the hypothesis. 
We outline here two counterexamples to this intuition.

\paragraph{First counterexample to \cref{lem:shatteringWrong}}
Consider the following random process:
\begin{itemize}
    \item Each $v \in V$ picks a random number $x_v \in [\Delta^a]$
    \item Each $v$ is added to $B$ iff there exists a $u \in N^b(v), w \in N^c(u)$ such that $x_u = x_w$.
\end{itemize}

The probability that a node $u$ has the same value $x_u$ as one of its $c$-hop neighbors is at most $\card{N^c(u)} / \Delta^a \leq \Delta^{c-a}$. This is true even if the random bits of all the nodes except $u$ are chosen adversarially in the weak sense (but not in the strong sense). And the probability that a node $v$ gets added to $B$ is bounded by $\Delta^{b+c-a}$, and this is true even if random bits outside the $N^b(v)$ are chosen weakly adversarially.

With $a=6, b=1, c=2$, each node gets added to $B$ with probability at most $p=\Delta^{b+c-a} = \Delta^{-3}$, even with adversarial random choices outside $N^b(v) = N^1(v)$. In every proof of every shattering lemma, nodes at distance $2b+1=3$ from each other would then be considered. For two nodes, we would argue that they both get added to $B$ with probability at most $p^{2} = \Delta^{-6}$, and for a set of $t$ nodes that are pairwise $3$-distant ($2$-independent), we would argue that they all get added to $B$ with probability at most $p^{t} = \Delta^{-3t}$.

But this is not true here because the events are not independent.
Consider the two extremal nodes in the following picture:

\begin{figure}[ht]
    \centering
    \includegraphics{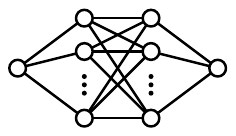}
    \caption{Example of a problematic case, a complete bipartite graph with two additional extremal nodes.}
    \label{fig:shattering_counter_example}
\end{figure}

The graph has four layers numbered from $1$ to $4$, with one node in layers $1$ and $4$ and $\Delta-1$ nodes in layers $2$ and $3$. The probability that there exists a node in layer $2$ with a number equal to a number in layer $3$ is of order $\Delta^{-4}$, since each of the $\Delta-1 = \Theta(\Delta)$ nodes in layer $1$ has a $\Theta(\Delta / \Delta^{6}) = \Theta(\Delta^{-5})$ chance each of hitting one of the up to $\Delta -1$ numbers present in layer $3$. Therefore, the probability that the $t=2$ extremal nodes get both added to $B$ is of order $\Delta^{-4} \gg \Delta^{-3 \cdot t} = \Delta^{-6}$. This shows that we cannot simply multiply the probabilities as if the events $[v \in B]$ were independent: weak-adversariality is not sufficient to claim that.

\paragraph{Second countereample to \cref{lem:shatteringWrong}}

We give an even more striking example in this section, which relies on dependence at large distances.
Consider the following process:
\begin{itemize}
    \item Every node $v \in G$ picks a random number $x_v \in [\Delta^c]$, and
    \item $v$ joins the set $B$ iff 
      $\left(\sum_{u \in G} x_u\right) \mod \Delta^c = 0$.
\end{itemize}

In this case, each node has a probability at most $\Delta^{-c}$ to join $B$, even conditioned on arbitrary random choices by other nodes in the graph (weak-adversariality). But with probability $\Delta^{-c}$, the whole graph joins $B$, which means that the process clearly does not shatter.

\section{A Correct Shattering Condition}
\label{sec:shattering}
\label{sec:Generalshattering}

Having established that standard approaches fail, we now identify a condition that is both sufficient and robust, 
formalizing the Separated-Set Bound \argOneLink{} of 
\cref{sec:tecOverview}.
This condition avoids independence assumptions and will serve as the main tool in our analysis of the LLL algorithm.

\begin{definition}[$a$-independent, $b$-connected]
    \label{def:indep-connect-set}
    A set $S$ of nodes in some graph $G$ is said to be $a$-independent if $G^a[S]$ is an independent set, and $b$-connected if $G^b[S]$ is a connected graph.
\end{definition}
Note that nodes in an $a$-independent set have pairwise distance at least $a+1$.

In order to perform an efficient union-bound in the general shattering proof we use the following bound on the number of $x$-connected sets. Previous proofs of the shattering lemma simply upper bounded this number by $4^t\cdot \Delta^{x(t-1)}$ by using the $2t$-length encoding of unlabeled trees via their Euler tour. However, that bound is not tight enough to prove the shattering lemma for $\Delta=3$. 
\begin{lemma}
\label{lem:x-connected-set-bound}
Let $\Delta\geq 2$ and $t,x\geq 0$ be integers. An $n$-node graph with maximum degree $\Delta$ has at most $e^t\cdot \Delta^{x(t-1)}$ $x$-connected sets of size $t$.  $\Delta^{x}$ can be replaced with $\max_{v\in V}|N^x(v)|$.
\end{lemma}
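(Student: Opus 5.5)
The plan is to pass to the power graph $H = G^x$ and count connected vertex sets of size $t$ in $H$. Let $D = \max_{v\in V}|N^x(v)|$, where $N^x(v)$ excludes $v$; this is the maximum degree of $H$. We have $D \le \sum_{i=1}^{x}\Delta(\Delta-1)^{i-1} \le \Delta^x$ for $\Delta\ge 2$, so it suffices to prove the bound with $D$ in place of $\Delta^x$. A set $S$ is $x$-connected exactly when $H[S]$ is connected. We therefore aim for the classical statement that a graph of maximum degree $D$ has at most $e^{t}D^{t-1}$ connected sets of size $t$ containing a fixed root vertex. The count of all $x$-connected sets of size $t$ is then at most $n$ times this. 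I read the lemma as counting sets through a fixed vertex, since that is how it enters the union bound; otherwise an extra factor $n$ appears, and this harmless factor is exactly what the shattering union bound absorbs anyway. The cases $t\le 1$ are trivial, so assume $t\ge 2$.

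\emph{Encoding step.} Fix a root $r$ and an arbitrary ordering of each vertex's neighbor list in $H$. Given a connected set $S\ni r$ with $|S|=t$, run a canonical BFS inside $H[S]$ starting at $r$. Whenever a vertex is dequeued, scan its (at most $D$) neighbor slots in order. For each slot, write a $1$ if that neighbor lies in $S$ and is discovered for the first time, and a $0$ otherwise. This produces a binary string of length at most $tD$; pad it with zeros to length exactly $tD$. The string contains exactly $t-1$ ones, one for each non-root vertex of the BFS tree. Given $r$ and the string, the BFS can be replayed and $S$ recovered, so the map from sets to strings is injective.

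\emph{Counting step.} The number of such strings is at most
\[
\binom{tD}{t-1} \le \left(\frac{e\,tD}{t-1}\right)^{t-1} = e^{t-1}D^{t-1}\Bigl(1+\tfrac{1}{t-1}\Bigr)^{t-1} \le e^{t}D^{t-1}.
\]
Substituting $D\le\Delta^x$, or keeping $D=\max_v|N^x(v)|$, gives the claim.

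\emph{Main obstacle.} The delicate point is making the encoding genuinely injective. The decoder must know which vertex is being scanned, so the BFS order must be canonical and reproducible from the prefix of the string alone. A "$1$" must also be emitted only for first discoveries; otherwise a vertex could be counted twice and the count of ones would exceed $t-1$. The other point to watch is the slack: a crude Euler-tour encoding costs $4^t$, whereas the binomial estimate above is what brings the constant down to $e^t$. That improvement is the whole point of the lemma, since it is what makes shattering work at $\Delta = 3$.
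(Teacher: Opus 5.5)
Your argument is correct, but it counts the sets differently from the paper. The paper works with labeled sets. It charges each labeled $x$-connected set to a labeled spanning tree of $G^x[S]$ (at most $t^{t-2}$ by Cayley's formula) together with an embedding ($n$ choices for the lowest label, at most $\Delta^x$ for each further vertex along the tree). It then divides by $t!$ using $t!\ge e(t/e)^t$, which gives $\frac{t^{t-2}}{t!}\,n\,\Delta^{x(t-1)}\le \frac{e^{t-1}}{t^2}\,n\,\Delta^{x(t-1)}$. You instead encode a rooted connected set of $G^x$ by its canonical BFS discovery string (length $tD$, exactly $t-1$ ones) and bound $\binom{tD}{t-1}\le e^tD^{t-1}$.

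Your remark about the factor $n$ is right. The literal statement already fails for $t=1$. The paper's own computation also carries $n$; the last inequality of its display drops it. The union bound in the Shattering Lemma does use $e^t\,n\,\Delta^{x(t-1)}$. So your per-root bound, summed to $n\,e^tD^{t-1}$, is exactly what is needed.

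As for trade-offs, the Cayley route is marginally sharper, by an extra factor $1/(et^2)$. Yours is fully elementary (no Cayley formula, no Stirling-type estimate) and gives the per-vertex count directly rather than only the global one. It is also phrased from the start in terms of the maximum degree $D=\max_v|N^x(v)|$ of $G^x$, so the ``replace $\Delta^x$'' clause comes for free. Both beat the $4^t$ Euler-tour bound, which is what matters for $\Delta=3$.
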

\begin{proof}
The number of $x$-connected unlabeled sets of size $t$ is a $t!$ fraction of the number of labeled $x$-connected sets of size $t$. 
We can associate with each labeled set a labeled spanning tree of size $t$, together with an embedding of the spanning tree in the graph such that each combination of spanning tree and embedding is associated with at most one labeled set. The embedding and the spanning tree are formed by greedily iterating through the nodes of the set such that the next node is within distance $x$ from the already picked vertices. 
As there are at most $t^{t-2}$ labeled spanning trees on $t$ vertices by Cayley's formula \cite{aigner2018proofs}, and there are $n\cdot \Delta^{x(t-1)}$ ways to embed the tree into the graph, $n$ choices for the vertex with the lowest label and $\Delta^x$ choices for each remaining vertex. Thus, the number of $x$-connected unlabeled sets is at most 

\begin{align}
\frac{1}{t!}t^{t-2} n\Delta^{x(t-1)}
\leq \frac{t^{t-2}}{e \cdot (t/e)^{t}} n\Delta^{x(t-1)}
\leq e^t\cdot \Delta^{x(t-1)}~, 
\end{align}
where we used $n!\geq e\cdot (n/e)^n$.
In this argument, $\Delta^x$ can be replaced with $\max_{v\in V}|N_x(v)|$, i.e., the maximum size of an $x$-hop neighborhood of a vertex in the graph. 
\end{proof}

\begin{lemma}[Shattering Lemma]
\label{lem:ShatteringGeneral}
Let $G=(V, E)$ be an $n$-node graph with maximum degree $\Delta\geq 2$, and let $c_2,c_3,x\geq 0$ . Consider a process that generates a random subset $B \subseteq V$ such that 
\begin{align} \label{condition:THESET}
    P[S\subseteq B]\leq \Delta^{-(x+c_3+2) |S|}
\end{align}
holds, for each $c_2$-independent set $S\subseteq V$ of size $\log_\Delta n$.

Then with probability at least $1- n^{-c_3}$, every $c_2$-independent $x$-connected (distances measured in $G$) that is fully contained in $B$ has size at most $\log_\Delta n$. More generally, for every $t\geq \log_{\Delta} n$, if \cref{condition:THESET} holds for every set of size $t$, with probability at least $1- \Delta^{-c_3\cdot t}$, every $c_2$-independent $x$-connected set that is fully contained in $B$ has size at most $t$ . 
\end{lemma}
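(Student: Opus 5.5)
The plan is a union bound over candidate witness sets, using \cref{lem:x-connected-set-bound} to count them and \cref{condition:THESET} to bound the probability of each one.

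\textbf{Reduction to exact size $t$.} Suppose some $c_2$-independent, $x$-connected set $S'\subseteq B$ has $|S'|>t$. I first extract from $S'$ a subset $S$ of size exactly $t$ that is still $x$-connected and $c_2$-independent. To do this, take a spanning tree of $G^x[S']$, which is connected by assumption. Repeatedly delete a leaf of this tree until exactly $t$ vertices remain. The remaining vertices still span a tree in $G^x$, so the set is $x$-connected. Every subset of a $c_2$-independent set is $c_2$-independent. Finally, $S\subseteq S'\subseteq B$.

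So the bad event "some $c_2$-independent $x$-connected set of size more than $t$ lies in $B$" is contained in the union, over all $c_2$-independent $x$-connected sets $S$ with $|S|=t$, of the events $S\subseteq B$. The statement allows sizes up to $t$, so it is enough to rule out sizes strictly greater than $t$. (One could also rule out size $t+1$ directly; either works up to constants.)

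\textbf{Union bound.} By \cref{lem:x-connected-set-bound}, the number of $x$-connected sets of size $t$ is at most
\[
n\,e^t\,\Delta^{x(t-1)} .
\]
The lemma as stated drops the factor $n$ in its final display, but its proof includes it, so I keep it. By \cref{condition:THESET}, each such set satisfies $\Pr[S\subseteq B]\le \Delta^{-(x+c_3+2)t}$. Multiplying, the failure probability is at most
\[
n\, e^t\, \Delta^{x(t-1)}\,\Delta^{-(x+c_3+2)t}\;\le\; n\,\Delta^{-x}\,(e/\Delta^2)^t\,\Delta^{-c_3 t}.
\]

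\textbf{Absorbing the factors.} This is the step that needs care. For $t\ge\log_\Delta n$ we have $n\le \Delta^t$. Then
\[
n\,(e/\Delta^2)^t \le (e/\Delta)^t .
\]
With $\Delta\ge 3$ this is at most $1$, so the failure probability is at most $\Delta^{-c_3 t}$. For $t=\log_\Delta n$ this equals $n^{-c_3}$, which gives the first statement.

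The edge case $\Delta=2$ is the delicate point, since then $e/\Delta>1$. There I would use the extra factor $\Delta^{-x}$. If that is not enough (for instance when $x=0$), I would replace $e^t\Delta^{x(t-1)}$ by the sharper count $\max_v |N^x(v)|^{t-1}$ allowed by \cref{lem:x-connected-set-bound}, and check whether the slack in the exponent $+2$ covers the $e^t$ factor: we need $e\le\Delta^{2}$ after spending one $\Delta^t$ on $n$. That holds for $\Delta=2$ only if the factor $n$ is avoided.

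The fix I would try first is to avoid the factor $n$ by anchoring at a fixed vertex. Count sets containing a given vertex $v$ and union bound over $v$; this costs the same factor $n$. So I would instead note that $e^t\le \Delta^{2t}/\Delta^{t}$ whenever $\Delta\ge e$, and for $\Delta=2$ I would accept a slightly weaker constant or treat it separately. This constant bookkeeping at small $\Delta$ is the main obstacle; the structural part of the argument is routine.
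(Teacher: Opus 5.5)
Your argument for $\Delta\ge 3$ is correct and matches the paper's. You use the count $n\,e^t\Delta^{x(t-1)}$, and you are right that the factor $n$ belongs there. One factor $\Delta^{-t}$ from the $+2$ in the exponent absorbs $n$ via $t\ge\log_\Delta n$, and the other absorbs $e^t$ since $\Delta\ge 3>e$. Your leaf-deletion reduction to witness sets of size exactly $t$ is a detail the paper leaves implicit. It is exactly what lets a hypothesis stated only for size-$t$ sets suffice.

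The genuine gap is $\Delta=2$. The lemma explicitly covers this case, and you leave it open. None of your attempted repairs can work while the factor $e^t$ from \cref{lem:x-connected-set-bound} is present:
\begin{itemize}
\item The factor $\Delta^{-x}$ does not depend on $t$, so it cannot compensate $(e/2)^t$.
\item Anchoring at a vertex still costs a factor $n$.
\item Replacing $\Delta^x$ by $\max_v|N^x(v)|\le 2x$ in that lemma keeps the $e^t$. After using $n\le 2^t$ you are left with $(2x)^{-1}\,(e\,x\,2^{-x})^t\,2^{-c_3t}$, whose base $e\,x\,2^{-x}$ exceeds $1$ for $x\in\{1,2,3\}$.
\end{itemize}

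The missing idea is to abandon the Cayley/spanning-tree count when $\Delta=2$. The graph is then a disjoint union of paths and cycles. An $x$-connected set of size $t$ is determined by two things: its first element (on a cycle, the element right after its largest gap), and the $t-1$ successive gaps along the path or cycle, each of which must be at most $x$. This gives at most $n\,x^{t-1}\le n(2x)^{t-1}$ such sets, with no $e^t$ factor. Since $2x\le 2^x$ for every nonnegative integer $x$, and using $n\le 2^t$, the union bound becomes
\[
n\,(2x)^{t-1}\,2^{-(x+c_3+2)t}\;\le\; n\,2^{-x-2t}\,2^{-c_3t}\;\le\;2^{-c_3t},
\]
with a factor $2^{-t}$ to spare. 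This is how the paper disposes of the case $\Delta=2$.
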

\begin{proof}
We first prove the result for $\Delta\geq 3$.
We do a union bound over all the $x$-connected sets of size $t \geq \log_\Delta n$ contained in the graph $G$.
By \cref{lem:x-connected-set-bound}, there are at most $e^t \cdot \Delta^{x(t-1)}$ such sets. As each such set $S$ has a probability of at most $\Delta^{-x+c_3+2}$ of being contained in $B$, the probability that one of them is fully contained in $B$ is at most

\begin{align*}
  e^t\cdot n \Delta^{x(t-1)}\cdot \Delta^{-(x+c_3+2)t}
  & =
  e^t\cdot n \Delta^{-x-2t} \cdot \Delta^{-c_3 t} \\
  & =
  (e^t \cdot \Delta^{-t}) \cdot (n \cdot \Delta^{-t}) \cdot \Delta^{-c_3 t} \\
& <
  (n \cdot \Delta^{-t}) \cdot \Delta^{-c_3 t} & (\Delta\geq 3 > e)\\
  & \leq
\Delta^{-c_3 t}. & (t \geq \log_\Delta n)
\end{align*}
The case $\Delta = 2$ follows similarly, remarking that $G$ is made of paths and cycles. This bounds the distance-$x$ neighborhood of each node in the graph by $2x-1$, and the number of $x$-connected sets of $t$ nodes by $n\cdot (2x)^{t-1}$ (\cref{lem:x-connected-set-bound}).
\end{proof}

Note that \Cref{lem:ShatteringGeneral} only makes sense for $x>c_2$ as there are no $c_2$-independent sets that are $x$-connected when $x\leq c_2$.

\begin{corollary}[Post-shattering connected components are small]\label{cor:shattering-small-cc} 
Let $G=(V, E)$ be an $n$-node graph with maximum degree $\Delta\geq 2$, and let $c_2,c_3\geq 0$ . Consider a process that generates a random subset $B \subseteq V$ such that 
\begin{align} 
    \Pr[S\subseteq B]\leq \Delta^{-(c_2+c_3+3) |S|}
\end{align}
holds, for each $c_2$-independent set $S\subseteq V$ of size $t\geq \log_\Delta n$. Then with probability at least $1-\Delta^{-c_3t}$ every connected component of $G[B]$ has size at most $t \Delta^{c_2}$.
\end{corollary}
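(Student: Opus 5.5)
The plan is to deduce the corollary from \cref{lem:ShatteringGeneral} with $x = c_2+1$. The hypothesis exponent $c_2+c_3+3$ equals $x+c_3+2$, so \cref{condition:THESET} holds for every $c_2$-independent set of size $t$. The lemma then gives that, with probability at least $1-\Delta^{-c_3 t}$, every $c_2$-independent $(c_2+1)$-connected set fully contained in $B$ has size at most $t$. The remaining work is purely combinatorial and deterministic: on this good event, every connected component $C$ of $G[B]$ should satisfy $|C|\le t\Delta^{c_2}$.

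For the combinatorial step, fix a connected component $C$ of $G[B]$. Choose a maximal $c_2$-independent subset $S\subseteq C$ (independent in $G^{c_2}$) greedily via a BFS-like order. Then:

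\begin{itemize}
\item \textbf{Covering.} By maximality, every $u\in C$ lies within distance $c_2$ of some $s\in S$. Hence $C\subseteq \bigcup_{s\in S} N^{c_2}(s)$. Each ball contains at most $1+\Delta+\dots+\Delta^{c_2}$ nodes, which I will bound by roughly $\Delta^{c_2}$. So $|C|\le |S|\cdot \Delta^{c_2}$, up to the ball-size constant discussed below.
\item \textbf{Connectivity.} I need $S$ to be $(c_2+1)$-connected, or at least $(2c_2+1)$-connected. Suppose $S$ splits into two nonempty parts $S_1, S_2$ with no pair at distance $\le 2c_2+1$. Since $C$ is connected, some edge $uv$ of $C$ has $u$ assigned (via the covering) to a node of $S_1$ and $v$ to a node of $S_2$. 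That gives distance at most $2c_2+1$ between $S_1$ and $S_2$, a contradiction.
\end{itemize}

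Plain maximality therefore only yields $(2c_2+1)$-connectivity. To get $x=c_2+1$, I would instead build $S$ by a greedy process that always adds a node at distance exactly $c_2+1$ from the current set, if one exists in $C$. Because $C$ is connected, as long as some node of $C$ is at distance $>c_2$ from $S$, walking along a path in $C$ finds a node at distance exactly $c_2+1$. That node is $c_2$-independent from $S$ and within distance $c_2+1$ of it. This yields a $c_2$-independent, $(c_2+1)$-connected, $c_2$-dominating subset of $C$.

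To finish, suppose $|C|>t\Delta^{c_2}$. Then the constructed $S$ has size greater than $t$, and any $(c_2+1)$-connected subset of $S$ of size exactly $t$ is still $c_2$-independent and contained in $B$. Such a subset is obtained by taking a connected subtree of size $t$ in $G^{c_2+1}[S]$. This contradicts the lemma's conclusion, so I would apply the lemma in its ``size at most $t$'' form with the subset chosen to have size exactly $t$, where the hypothesis applies.

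The main obstacle is the ball-size constant: $|N^{c_2}(v)|$ may exceed $\Delta^{c_2}$ slightly. I would handle this either by noting that the authors treat $\Delta^{c_2}$ loosely, or by using the slack in the exponent to absorb it. One option is to run the lemma at size $t' = t\cdot|N^{c_2}|/\Delta^{c_2}$, or to tolerate a factor of 2 in the bound.
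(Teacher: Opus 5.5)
Your proposal follows the paper's proof. You apply \cref{lem:ShatteringGeneral} with $x=c_2+1$ (so $c_2+c_3+3=x+c_3+2$), and inside a component $C$ of $G[B]$ you greedily grow a $c_2$-independent, $(c_2+1)$-connected set by repeatedly adding a node of $C$ at distance exactly $c_2+1$ from the current set, charging each addition for its $c_2$-ball.

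The ball-size constant you flag is glossed over in the paper too, which just asserts that each step removes at most $\Delta^{c_2}$ nodes. Your remedies are not the right fix: rescaling to $t'$ only enlarges the bound, and slack in the exponent does not change the size threshold. Instead, observe that every node $u$ added after the first has the $c_2$ inner vertices of a shortest path back to the current set already within distance $c_2$ of that set. So $u$ newly covers at most $|N^{c_2}(u)|-c_2\le\Delta^{c_2}$ nodes, while the first ball has fewer than $2\Delta^{c_2}$ nodes. That excess is absorbed because the union bound at size exactly $t$, together with your subtree-extraction step, actually gives $|S|\le t-1$, hence $|C|<(|S|+1)\Delta^{c_2}\le t\Delta^{c_2}$.
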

\begin{proof}
We show that any connected component $C\subseteq B$ of size at least $\Delta^{c_2}t$ contains a $c_2$-independent $c_2+1$-connected subset $S$ of size $|C|/\Delta^{c_2}$. Given $C$, the set $S$ can be constructed greedily. Let $v$ be an arbitrary node of $C$, begin with $S=\{v\}$ and remove $v$'s $c_2$-hop neighborhood $N^{c_2}(v)$ from $C$. Now, pick another remaining node in $C$ that is $c_2+1$-connected to $S$. If $C$ is not empty such a node has to exist as $C$ is connected. Each step removes at most $\Delta^{c_2}$ nodes from $C$ and hence at the end of the process $S$ contains at least $|C|/\Delta^{c_2}=t$ nodes, is $c_2$-independent and $c_2+1$-connected.
\end{proof}

\section{The Fischer–Ghaffari LLL Algorithm:  Shattering Analysis}
\label{sec:fg-lll}

We now present the main technical contribution of the paper. Namely, we now revisit the Fischer--Ghaffari LLL algorithm and show how to establish the Separated-Set Bound {\argOneLink} for use in the framework of \cref{sec:shattering}. This resolves the flaw identified in \cref{sec:arg1,app:examples}  and restores the correctness of the shattering argument.

Our analysis proceeds in three steps: (i) reinterpret the process as sequential sampling, (ii) bound probabilities of partial assignments, and (iii) relate these to independent executions.

\thmLLLSimple*

\subsection{\Lovasz Local Lemma Definition}
\label{sec:LLLdef}
\paragraph{Lovász Local Lemma.} We start with an LLL instance described by a bipartite graph $B = (V_{\cE},V_{\var},E)$ where the nodes in $V_{\cE}$ represent events, while the nodes in $V_{\var}$ represent independent random variables. For each event node $v \in V_{\cE}$ and variable node $u\in V_{\var}$, an edge between $u$ and $v$ indicates that the event $\cE_v$ is influenced by the variable $X_u$. Thus, for each event $\cE_v$, the neighborhood of $v$ in $B$ is the set of random variables $\var(v) = \var(\cE_v)$ deciding $\cE_v$.

Non-independence between events is further represented by the dependency graph $H=(V_{\cE},E')$, where $uv \in E'$ iff $\var(u) \cap \var(v) \neq \emptyset$. Put differently, $ H = B^2[V_{\cE}]$. Next we define two important parameters of an LLL instance. Let $p$ be an upper bound on the maximum probability of any event in the LLL, i.e.,  $\max_{v\in V} \Pr[\cE_v]\leq p$, and $d$  be an upper bound on the \emph{dependency degree} of $H$, i.e.,  $\max_{v\in V} \deg_H(v)\leq d$. 
The relationship between $p$ and $d$ is referred to as the \emph{LLL criterion}. Lovász showed that for $ep(d+1)<1$ there exists an assignment of the variables avoiding all the bad events \cite{EL74}. In the constructive version of the lemma we aim to compute such a bad-event avoiding assignment of the variables (but in our case under slightly stronger criteria).

Due to the following observation, we may assume that all random variables are fair coins. 
\begin{lemma}
    \label{lem:approx-vars}
    Let $B = (V_{\cE},V_{\var},E)$ be an LLL with probability $p$ and dependency degree $d$, where each variable in $V_{\var}$ has an enumerable (possible infinite) support.

    Each variable in $V_{\var}$ can be replaced by a finite set of binary random variables resulting in an LLL $B' = (V_{\cE},V'_{\var},E)$ with the same dependency degree but increased probability $2p$.

    When variables in $V_{\var}$ are such that for all of them, each element in their support has a minimum probability of $\delta$, they can each be replaced by at most $\ceil{4\Delta_{\cE}/p\delta}$ fair coins, where $\Delta_{\cE}$ is the maximum degree of each event, i.e., the maximum number of variables that each event depends on..
\end{lemma}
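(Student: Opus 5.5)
The plan has two parts. First I replace each variable by a finite discretization; then I simulate each discretized variable with fair coins.

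\textbf{Step 1: discretization with probability $2p$.} Fix a variable $X_u$ with enumerable support $\{s_1,s_2,\dots\}$ and probabilities $q_1,q_2,\dots$.

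For each event $\cE_v$, let $k_v=|\var(v)|$. I choose a threshold $\eta>0$ so that truncating each variable's support to a finite prefix changes its law by total variation at most $\eta$. I then merge the tail mass onto a single representative value (say $s_1$) and take $\eta$ with $k_v\eta\le p$ for every $v$. Since the variables are independent, coupling them coordinatewise shows that the joint law on $\var(v)$ changes by at most $k_v\eta$ in total variation. Hence each event's probability becomes at most $p+k_v\eta\le 2p$.

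A finite-support variable with, say, $m$ support points is then encoded as $\lceil\log_2 m\rceil$ independent binary variables. Their joint law is chosen so that a deterministic decoding reproduces the law on the $m$ points exactly, for example by a sequential binary-splitting tree with biased coins. The biases of these coins may be arbitrary, which is fine since the first claim only asks for binary variables.

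The key structural point is that each new binary variable is attached to exactly the events that depended on $X_u$. Therefore $\var'(v)\cap\var'(w)\neq\emptyset$ iff $\var(v)\cap\var(w)\neq\emptyset$, so the dependency graph $H$ and the degree $d$ are unchanged. (When the bipartite graph is infinite I would still take $\eta$ uniformly in terms of $\max_v k_v$.)

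\textbf{Step 2: simulation by fair coins when all atoms have probability at least $\delta$.} Now every $q_i\ge\delta$, so the support has at most $1/\delta$ elements and no truncation is needed.

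I take $L$ fair coins, giving a uniform integer $U\in[2^L]$. I map $U$ to $s_i$ by partitioning $[2^L]$ into consecutive intervals of sizes $\lfloor q_i 2^L\rfloor$, and assign the leftover mass to $s_1$. Each atom's probability then changes by at most $2^{-L}$ (plus the leftover). The total variation distance is at most $(\#\text{atoms})\,2^{-L}\le 2^{-L}/\delta$.

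By the same coupling as in Step 1, the event probability becomes at most $p+\Delta_{\cE}2^{-L}/\delta$. So it suffices to take $2^{-L}\le p\delta/\Delta_{\cE}$, i.e. $L=\lceil\log_2(\Delta_{\cE}/p\delta)\rceil$ coins. This is certainly at most $\lceil 4\Delta_{\cE}/p\delta\rceil$, since $\log_2 y\le 4y$. The dependency degree is preserved exactly as in Step 1.

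\textbf{Main obstacle.} The main subtlety is the coupling bound $\Pr'[\cE_v]\le\Pr[\cE_v]+\sum_{u\in\var(v)}d_{TV}(X_u,X'_u)$. This needs the events to be evaluated after decoding the coins back into values, so that each $\cE_v$ stays a well-defined event on the new variables. It also needs the decoding of each variable to use only its own coins, so that independence across variables survives.
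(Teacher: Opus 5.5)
Your Step 2 is correct and is essentially the paper's argument. You round the law dyadically with $L$ fair coins, bound the total variation by (number of atoms)$\cdot 2^{-L}\le 2^{-L}/\delta$, and pay this once for each of the at most $\Delta_{\cE}$ variables of an event. Your count $\lceil\log_2(\Delta_{\cE}/p\delta)\rceil$ is much smaller than the stated $\lceil 4\Delta_{\cE}/p\delta\rceil$ and implies it.

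The step that fails as written is in Step 1. A law on $m$ points cannot in general be reproduced exactly by decoding $\lceil\log_2 m\rceil$ \emph{independent} bits. Take $m=4$. Two independent bits always satisfy $\Pr[00]\Pr[11]=\Pr[01]\Pr[10]$. A decoding onto four atoms of positive mass must be a bijection, so the four atom probabilities would have to split into two pairs with equal products. This fails for $(0.4,0.3,0.2,0.1)$: the pairings give $0.12$ vs.\ $0.02$, $0.08$ vs.\ $0.03$, and $0.04$ vs.\ $0.06$. The binary-splitting tree does not rescue this. There the bias of a later bit depends on the values of earlier bits, so the bits are dependent and cannot be separate LLL variables. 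Independence is needed within each group, not only across different original variables.

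The repair is immediate, since part one only asks for finitely many bits. Put an independent biased coin at each of the $m-1$ internal nodes of the splitting tree and decode by following the root-to-leaf path. Alternatively, use the first-success scheme, in which coin $i$ has bias $q_i/(1-q_1-\dots-q_{i-1})$.

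Another option is to follow the paper and drop exact encoding altogether. Apply your Step 2 fair-coin rounding directly to the original law. Treat all sufficiently light atoms as a tail of total mass at most $p/(2\Delta_{\cE})$, and round the at most $1/\delta$ heavier atoms to precision $2^{-k}\le \delta p/(2\Delta_{\cE})$. This single construction gives \emph{fair} coins even for infinite supports. That is what the paper relies on later: \Cref{lem:LLLpostShattering} bounds the overshoot of a newly dangerous event's marginal by a factor $2$ precisely because each sampled bit is fair. A coin of bias $b$ only gives the factor $1/\min(b,1-b)$.
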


\begin{proof}
    Let $\Delta_{\cE}$ be the maximum degree of each event, i.e., the maximum number of variables that each event depends on. For each variable $X_u$, consider its support $\supp(X_u)$. Order the values that can be taken by $X_u$ in order of decreasing probability: $\Pr[X_u = x_i] \geq \Pr[X_u = x_j]$ for any $x_i,x_j \in \supp(X_u)$ s.t.\ $i \leq j$. 
    For any $\delta$ and variable $X_u$, let $\supp_\delta(X_u) = \set{x \mid \Pr[X_u = x] \leq \delta}$.
    
    Let $\delta$ and $\eps_{\delta,u}$ be such that: $\Pr[X_u \in \supp_\delta(X_u)] \leq \eps_{\delta,u}$
    
    Replace sampling according to $X_u$ by another random variable $\Xtilde_u$ sampled as follows:
    \begin{itemize}
        \item sample $k$ independent fair coins $Y_1,\ldots,Y_k$, compute $Z = \sum_{i=1}^k 2^{-i}Y_i \in 2^{-k}\set{0,\ldots,2^k-1}$.
        \item $\Xtilde_u = x_i$ where $i$ is the minimum index such that $\sum_{j=1}^i \Pr[X_u = x_i] \geq Z$.
    \end{itemize}
    By taking $k = \ceil{ \delta \eps_{\delta,u}}$, for every $x \in \supp(X_u)$, $\abs{\Pr[X_u = x] - \Pr[\Xtilde_u = x]} \leq 2^{-k} \leq \delta \eps_{\delta,u}$.
    Since values from $\supp_\delta(X_u)$ have probability at most $\eps_{\delta,u}$ when taken all together, and there are $\leq 1/\delta$ values outside of it,
    $X_u$ and $\Xtilde_u$ are $\eps_{\delta,u} + (1/\delta) (\delta \eps_{\delta,u})=2\eps_{\delta,u}$-close in total variation distance.
    By taking $\delta$ small enough that $\max_{u\in V_{\var}} \Pr[X_u \in \supp_\delta(X_u)] \leq p/(2\Delta_\cE)$, we get that for every variable $u$, $\Xtilde_u$ and $X_u$ are $p/\Delta_\cE$-close in total variation distance. 
    As a result, replacing all random variables $X_u$ by their approximation $\Xtilde_u$ results in a probability of each event $\cE_v$ happening increases by at most $\Delta_\cE \times p/\Delta_{\cE} = p$.

    Replacing each variable $X_u$ by the fair coins sampling $\Xtilde_u$, we get an LLL instance $B = (V_{\cE},V'_{\var},E)$ with the same dependency degree $d$, increased probability $2p$, and significantly more random variables.

    In particular, when there exists some value $\delta>0$ s.t.\ $\forall u \in V_{\var}, \min_{x \in \supp(X_u)}\Pr[X_u = x] \geq \delta$, then each variable $X_u$ can be replaced by $\ceil{4\Delta_{\cE}/p\delta}$ fair coins.
\end{proof}

\subsection{Fischer--Ghaffari LLL algorithm}
The distributed Fischer--Ghaffari algorithm for the \Lovasz Local Lemma is inspired by an earlier sequential algorithm for solving LLLs by Molloy and Reed \cite{MR_stoc98}, with the main difference being the amount of parallelism in the schedule for sampling variables. In this section, we present the original algorithm by Fischer and Ghaffari in which many variables are sampled simultaneously. Our analysis of the occurring shattering phenomenon presented in \Cref{sec:FGanalysis} works for a more general class of scheduling strategies including the algorithm by Molloy and Reed. Hence, it also establishes the missing\footnote{More precisely, Molloy and Reed state that the shattering argument in their algorithm is identical to an argument of Noga Alon for parallel algorithms for an LLL-type hypergraph coloring problem \cite{Alon91}. The crucial difference is that in Alon’s setting all variables are sampled simultaneously, which precludes the formation of long dependency chains affecting participation in the post-shattering phase. Consequently, Alon’s argument does not carry over to the sequential sampling used by Molloy and Reed. Such long-range dependencies can, for instance, be observed using the example from \Cref{example:FG}.} shattering argument in their work. 

\paragraph{Assignments and Marginal event probabilities.} Let us call functions $\alpha: \var(H) \to \set{0,1,\bot}$ \emph{assignments}, $\bot$ representing the unassigned state. An assignment is \emph{partial} if $\bot\in \alpha(H)$ and \emph{total} otherwise. Let $\emptyAs$ be the empty assignment, i.e., the assignment where all variables are mapped to $\bot$.  Let $\As$ be the set of all assignments (of variables).
The marginal probability of an event $\cE$ under some (partial) assignment $\alpha$ is the probability of the event to occur if all variables in $\alpha^{-1}(\bot)$ are sampled according to their distribution.

\paragraph{Algorithm (with threshold parameter $q$ to become dangerous).} Initially all variables are unset and also non-frozen. In the pre-shattering phase we will gradually set more and more variables, but also freeze some variables. The post-shattering phase then consists of all frozen variables together with their incident events. We next describe the pre-shattering phase: We begin with computing a distance-$2$ coloring of the dependency graph $H$ with $d^2$ colors. The coloring is used as a schedule for processing events. To this end, we iterate through the color classes, processing independently and in parallel all events of each given color. If an event $\cE$ of color $i$ is not frozen, it sequentially goes through all not yet frozen and not yet decided variables in $\var(\cE)$. When processing a variable it is sampled according to its distribution. Then, the node locally computes the marginal probability of all events using that variable. If the marginal probability of any event $\cE'$ increases to $>q$, the event is called \emph{dangerous} and all undecided variables in $var(\cE')$ are immediately frozen. 
This process continues until all variables of all events with color $i$ are set or frozen. Then we continue with the events in next color class.  See \Cref{alg:LLLShattering} for pseudocode of the algorithm.  

The post-shattering instance consists of all frozen undecided variables and all events with at least one such variable. \Cref{lem:LLLpostShattering} shows that this instance is an LLL with dependency degree $d$ and error probability bound $2q$.

\begin{algorithm}[ht]
\textbf{Initialization: }$F=\emptyset$; For all events $\cE: \var^{\mathsf{undecided}}(\cE)=\var(\cE)$\\
\textbf{Pre-shattering:}
Compute a distance-2-coloring $\psi$ of the event graph $H$ with $d^2+1$ colors \\
For $i=1$ to $d^2+1$, at each event $\cE$ s.t\ $\psi(v_{\cE})=i$:\\
 Sample  all variables in $\var^{\mathsf{undecided}}(\cE)\setminus F$ \underline{one by one}. 
    If any incident event $\cE'$ becomes dangerous (its marginal probability is at least $q$) immediately set $F=F\cup \var^{\mathsf{undecided}}(\cE')$

\textbf{Post-shattering instance: } Solve the marginal LLL $(N(F), F)$.\\
\textbf{Return:} Variable assignments of Pre-shattering and Post-shattering phases.
\caption{LLL algorithm (model independent)}
\label[algorithm]{alg:LLLShattering}
\end{algorithm}

In this process variable assignments, once sampled are permanent and never reverted. Variables once frozen stay frozen until the post-shattering phase. Thus once an event become dangerous none of its variables changes its status thereafter (in the pre-shattering phase).  

When all random variables associated with an independent set of events in the dependency graph $H$ are sampled simultaneously, the corresponding events occur independently. In contrast, under the gradual sampling and freezing process of the FG-LLL algorithm, the events of becoming frozen are generally not independent for distant nodes, except at distances of $\omega(d^2)$. An explicit example illustrating this phenomenon is given in \Cref{app:examples}.

\paragraph{Solvability of the post-shattering instance. }
Note that, in order for the post-shattering instance to remain solvable without retracting (i.e., unsetting) variable assignments, it is essential that the variables of each event are sampled sequentially. If multiple variables of an event are sampled simultaneously, the marginal probability of the event may change drastically, potentially to the point where the remaining instance cannot be solved efficiently in the \LOCAL model, or even at all. 

In contrast, sequential sampling ensures that marginal probabilities evolve in a controlled manner. In particular, when all variables are fair coins (see \Cref{lem:approx-vars}), the marginal probability cannot overshoot the threshold for becoming dangerous by more than a factor of $2$, resulting in a manageable residual instance for the post-shattering phase.
\begin{lemma}
\label{lem:LLLpostShattering}
The post-shattering instance is an LLL with dependency degree $d$ and bad event probability bound $2q$.
\end{lemma}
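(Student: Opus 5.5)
We need to show two things: the dependency degree of the post-shattering instance is at most $d$, and every event of that instance has marginal probability at most $2q$ under the partial assignment produced by the pre-shattering phase, with the frozen undecided variables sampled afresh. By \Cref{lem:approx-vars}, we may assume all variables are fair coins.

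\textbf{Dependency degree.} The post-shattering instance has variables $F$ (frozen and undecided) and events $N(F)$, restricted to their variables in $F$. Two such events are dependent only if they share a variable of $F$. They then share a variable in the original instance, so they are adjacent in $H$. Hence the new dependency graph is a subgraph of $H[N(F)]$, and its maximum degree is at most $d$.

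\textbf{Probability bound.} The main claim is an invariant maintained throughout the pre-shattering phase: at every moment, every event $\cE$ that is not dangerous has marginal probability at most $q$, and every dangerous event has marginal probability at most $2q$.

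Initially every event has probability at most $p \le q$. (We choose $q$ at least $p$; in the intended application $q=\sqrt p$ or similar.)

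Consider a single step, which samples one undecided, unfrozen fair coin $X$. For any event $\cE$ containing $X$, let $P$ be its marginal before the step and $P_0,P_1$ its marginals after fixing $X=0$ or $X=1$. Then $P=(P_0+P_1)/2$, so $P_b\le 2P$.
\begin{itemize}
\item If $\cE$ was not dangerous, then $P\le q$, so the new marginal is at most $2q$. If the new marginal exceeds $q$, the event is declared dangerous and all its undecided variables are frozen immediately, within the same step. Its marginal therefore never changes again.
\item If $\cE$ was already dangerous, none of its undecided variables can be sampled, since they are all frozen, so $\cE$ is unaffected.
\end{itemize}
Events not containing $X$ are also unaffected. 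So the invariant is preserved. Since the phase processes variables one at a time, even in parallel, each event sees its variables change one at a time, and this per-step argument applies.

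The remaining care point is the ordering inside a step. After sampling a variable, all affected events are checked and frozen before any further variable is sampled. This holds because the events processed in parallel form an independent set at distance 2 in $H$, so no event of the current colour class shares a variable with another. Consequently, no other variable of an event $\cE'$ can be sampled in between, even across simultaneously processed events.

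\textbf{Conclusion.} At the end, consider an event in $N(F)$ with its pre-shattering assignment fixed and its $F$-variables sampled fresh. Its probability is exactly its current marginal, since all of its undecided variables lie in $F$. By the invariant this is at most $2q$.

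I expect the main obstacle to be this parallelism and ordering detail. The halving step itself is routine.
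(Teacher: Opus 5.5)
Your proof is correct and follows the paper's route. With fair coins a single sampled bit at most doubles a marginal, an event is frozen the moment it crosses $q$, and the distance-2 colouring ensures each event sees its variables change one at a time; you also make explicit the dependency-degree bound and the assumption $p \le q$, which the paper leaves implicit. One point of precision: the property you need from the colouring is that no event lies in the inclusive $H$-neighbourhood of two events of the same colour class, not merely that same-colour events share no variables; this does follow from the distance-2 independence you invoke.
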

\begin{proof}
The post-shattering instance consists of events that became dangerous throughout the pre-shattering phase and may also consist of some events that did not become dangerous. An event that did not become dangerous has marginal probability (randomness over its frozen variables as all other variables are already set after the pre-shattering phase) $\leq q$. Any dangerous event $\cE$ crossed the threshold for becoming dangerous when one of its variables $x\in var(\cE)$ got sampled. Before $x$ got sampled the marginal probability of $\cE$ was less than $q$. Thus, after setting the binary variable $x$ to an adversarial value its marginal probability can have increased at most by a factor 2, since we assumed variables are fair coins.
After becoming dangerous the event $\cE$ and all of its variables are frozen and no further variables are set in the pre-shattering phase. Hence, its marginal probability in the post-shattering phase is at most $2q$. This reasoning requires  the schedule to be a distance-$2$ coloring of the dependency graph implying that the marginal probability of each event can only be influenced by a single event processed simultaneously in that color class. 
\end{proof}

\subsection{Shattering Analysis of the FG-LLL algorithm}
\label{sec:FGanalysis}
We next present the core technical contribution in analyzing the LLL pre-shattering phase by showing that the probability of any $3$-independent set to proceed to the post-shattering phase is exponentially small in the size of the set. The argument presented here is largely independent of both the model of computation considered and to a large extent the order in which variables are sampled.

To this end we continue with the following definitions for a given LLL dependency graph $H$.

To formalize the sampling procedures considered in this work, we model them as deterministic processes that adaptively select which variables to sample based on the current partial assignment. Note that while the choice of which variables to sample in the next step is deterministic for any given current partial assignment, the outcome of the process is still random, but the randomness comes purely from the random sampling of the variables. 
\begin{definition}[Deterministic sampling process]
\label{def:detSampling}
Let $\As$ be the set of assignments. A \emph{deterministic sampling process} $\cP: \As\to \As$ is defined by a \emph{choice function}
\[
f_{\cP} : \As \to 2^{\var(H)}
\]
that maps each assignment $\alpha$ to a subset of variables $f(\alpha) \subseteq \alpha^{-1}(\bot)$.

Starting from an initial assignment $\alpha_0$, the process proceeds iteratively as follows. At time step $t$, given the current assignment $\alpha_t$, let $U_t = f(\alpha_t)$.
\begin{itemize}
    \item If $U_t = \emptyset$, the process terminates and outputs $\cP(\alpha_0)=\alpha_t$.
    \item Otherwise, all variables in $U_t$ are sampled independently according to their distributions, yielding an updated assignment $\alpha_{t+1}$.
\end{itemize}
\end{definition}

\begin{observation}
\label{obs:stepReconstruction}
    For a fixed process $\cP$, some partial assignment $\alpha$ and the realization of the (partial) assignment $\cP(\alpha)$ one can uniquely reconstruct in which order the process $\cP$ assigned the variables.
    This is because the two assignments allow to perfectly reconstruct the history of the sampling of $\cP(\alpha)$: $\alpha$ defines which set of variables $f_\cP(\alpha)$ are sampled first, and the value of $\cP(\alpha)$ defines the outcomes of their sampling; this argument can be iterated over the whole process. 
\end{observation}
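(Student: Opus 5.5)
The plan is to prove the observation by induction on the number of steps of the process, rebuilding the execution of $\cP$ one step at a time. Write the unknown history as $\alpha_0=\alpha,\alpha_1,\dots,\alpha_T=\cP(\alpha)$ with $U_t=f_\cP(\alpha_t)$. The claim to carry through the induction is: from $\alpha$ and $\beta:=\cP(\alpha)$ alone we can determine $\alpha_t$ for every $t\le T$, and we can also recognise the index $T$ itself.

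For the base case, $\alpha_0=\alpha$ is given. For the inductive step, suppose $\alpha_t$ has been recovered. Since $f_\cP$ is a fixed deterministic function, we can compute $U_t=f_\cP(\alpha_t)$.

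\begin{itemize}
\item If $U_t=\emptyset$, the process halts here, so $T=t$ and $\alpha_t=\beta$.
\item Otherwise, $\alpha_{t+1}$ agrees with $\alpha_t$ outside $U_t$ and takes the sampled values on $U_t$. Variables are never unset or resampled: $f_\cP(\alpha_s)\subseteq\alpha_s^{-1}(\bot)$, and once a variable leaves $\bot$ it keeps its value. Hence the value a variable of $U_t$ receives at step $t$ is exactly its value in the final assignment $\beta$. This gives $\alpha_{t+1}=\alpha_t$ on $V_{\var}\setminus U_t$ and $\alpha_{t+1}=\beta$ on $U_t$, which is computable.
\end{itemize}

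Iterating this step recovers the whole sequence $(\alpha_t)$ and the ordered partition $U_0,U_1,\dots,U_{T-1}$. That partition is precisely the order in which $\cP$ assigned the variables.

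Only two points need care, and neither should be a real obstacle. The first is monotonicity. We must check that the sets $U_t$ are pairwise disjoint, so that each variable is sampled at most once and its recorded value in $\beta$ is the value it received when sampled. This follows from $U_t\subseteq\alpha_t^{-1}(\bot)$ together with the fact that $\alpha_{t+1}$ assigns non-$\bot$ values to all of $U_t$. The second is termination. Each step with $U_t\neq\emptyset$ strictly shrinks the finite set $\alpha_t^{-1}(\bot)$, so $T$ is finite and the reconstruction halts exactly when the process does.

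A corollary worth stating, since later arguments will need it, is that the map from the realised values of the sampled variables to the trajectory is injective. As a consequence, $\Pr[\cP(\alpha)=\beta]=\prod_{x:\alpha(x)=\bot\neq\beta(x)}\Pr[X_x=\beta(x)]$: the probability of reaching $\beta$ is the product of the marginal probabilities of the revealed values, independent of the order in which they were revealed.
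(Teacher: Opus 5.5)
Your reconstruction is correct and is essentially the paper's argument: compute $f_{\mathcal{P}}(\alpha_t)$ from the current assignment, read the values of those variables off $\mathcal{P}(\alpha)$ (valid because sampled variables are never reassigned), and iterate until the choice function returns $\emptyset$; your disjointness and termination checks just make the paper's ``this argument can be iterated'' precise. One small caveat on your closing corollary: the product formula holds only for $\beta\in\mathcal{A}_{\mathcal{P}}(\alpha)$, since an unreachable $\beta$ (e.g.\ one that assigns a variable the process would have frozen) has probability $0$, which is exactly the case split in \Cref{lem:processProbability}.
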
 An important idea in the proof is that the order in which variables are sampled does not affect the probability of getting a particular outcome.
This intuition is formalized in the next lemma. To state it formally, we require one definition. For a partial assignment $\alpha\in \As$ let 
$\mathcal{A}_{\mathcal{P}}(\alpha)=\{\beta \mid \Pr[\mathcal{P}(\alpha)=\beta]>0\}\subseteq \mathcal{A}$ be all configurations that process $\mathcal{P}$ can output from an initial assignment $\alpha$. 

\begin{lemma}
\label{lem:processProbability}
Consider any deterministic sampling process $\mathcal{P}$. 
Let $\alpha$ be an assignment.
Then
\begin{align}
    \label{eq:no-effect-of-order}
    \forall \beta:~
    \Pr[\mathcal{P}(\alpha)=\beta]
    =
    \begin{cases*}
    \prod_{x\in \alpha^{-1}(\bot) \setminus \beta^{-1}(\bot)}\Pr[x=\beta(x)]& if $\beta \in \As_{\mathcal{P}}(\alpha)$
    \\
    0 &~otherwise.
    \end{cases*}
\end{align} 
\end{lemma}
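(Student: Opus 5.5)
We argue by induction on the number of sampling steps the process performs before it terminates. The key observation is \Cref{obs:stepReconstruction}: for a fixed initial assignment $\alpha$ and a fixed output $\beta$, there is at most one execution history leading from $\alpha$ to $\beta$. This history is the sequence $\alpha=\alpha_0,\alpha_1,\dots,\alpha_T=\beta$, where each $\alpha_{t+1}$ agrees with $\beta$ on $f_{\cP}(\alpha_t)$ and with $\alpha_t$ elsewhere. Hence $\Pr[\cP(\alpha)=\beta]$ equals the probability that this one particular history is realized. Since the variables in each $U_t=f_{\cP}(\alpha_t)$ are sampled independently of each other and of everything sampled before, that probability factors as $\prod_t\prod_{x\in U_t}\Pr[x=\beta(x)]$.

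To make this rigorous, we fix $\beta$ and induct on the length $T$ of the (unique) candidate history. When $T=0$ we have $f_{\cP}(\alpha)=\emptyset$. Then $\cP(\alpha)=\alpha$ deterministically, and the claim holds: $\beta\in\As_\cP(\alpha)$ iff $\beta=\alpha$, in which case the product is empty and equals $1$.

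For the inductive step, let $U=f_{\cP}(\alpha)\neq\emptyset$. Condition on the outcome $\gamma$ of the first step, i.e.\ $\alpha$ updated on $U$:
\[
\Pr[\cP(\alpha)=\beta]=\sum_{\gamma}\Pr[\alpha_1=\gamma]\,\Pr[\cP(\gamma)=\beta].
\]
Here we use that, after the first step, the process is exactly $\cP$ started from $\gamma$, with fresh independent samples. Any $\gamma$ with $\Pr[\cP(\gamma)=\beta]>0$ must agree with $\beta$ on $U$, because assigned variables are never changed afterwards. So only the single $\gamma^*=\beta|_U$ (extended by $\alpha$ elsewhere) contributes, and $\Pr[\alpha_1=\gamma^*]=\prod_{x\in U}\Pr[x=\beta(x)]$ by independence. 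We then apply the induction hypothesis to $\gamma^*$, whose history toward $\beta$ is one step shorter. If $\beta\in\As_\cP(\gamma^*)$, this gives the product over $(\gamma^*)^{-1}(\bot)\setminus\beta^{-1}(\bot)$, which together with the factor for $U$ is exactly the product over $\alpha^{-1}(\bot)\setminus\beta^{-1}(\bot)$. Otherwise it gives $0$, and then $\beta\notin\As_\cP(\alpha)$ as well.

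The zero case is immediate from the definition of $\As_\cP(\alpha)$. The only delicate point is well-foundedness, since a general process might not terminate on some branches. We would assume termination, which holds because each nonempty step assigns at least one of the finitely many variables; this bounds $T$ by $|\alpha^{-1}(\bot)|$ and makes the induction legitimate. The main subtlety to state carefully is the Markov/restart property, namely that conditioned on $\alpha_1=\gamma$ the remaining run is distributed as $\cP(\gamma)$. This follows because $f_{\cP}$ depends only on the current assignment and each round uses fresh independent samples.
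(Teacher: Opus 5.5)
Your proof is correct and follows essentially the same route as the paper. Both split off the first step $f_{\mathcal{P}}(\alpha)$ and note that only the outcome agreeing with $\beta$ on it can contribute (your $\gamma^*$ is the paper's $\beta'$), then induct; the only difference is that the paper inducts on the step potential (longest path in the DAG of reachable assignments), while you induct on the length of the candidate history, which is bounded by $|\alpha^{-1}(\bot)|$ and could be replaced by that quantity directly.
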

\begin{proof}
Consider a deterministic sampling process $\cP$ with function $f_{\cP} : \As \to 2^{\var(H)}$.

We define the \emph{(sampling) step potential} $h(\alpha)$ of an assignment $\alpha\in \As$ as follows: $h(\alpha) = 0$ iff $f_{\cP}(\alpha) = \emptyset$, otherwise, $h(\alpha) = 1 + \max_\beta p(\beta)$ where $\beta$ is taken over the direct successors of $\alpha$, that is, all assignments that can be reached from $\alpha$ with a single sampling step of the variables in $f_{\cP}(\alpha)$. Note that the step potential is well defined: One can define a directed acyclic graph with a directed edge from any assignment $\alpha$ such that $f_\cP(\alpha) \neq \emptyset$ to assignments $\alpha'$ that can be obtained from $\alpha$ with a single (parallel) sampling step of the variables in $f_{\cP}(\alpha)$. For every directed edge $\overrightarrow{\alpha\alpha'}$, $\alpha'$ has strictly more assigned variables than $\alpha$, guaranteeing acyclicity. The step potential of an assignment with no outgoing edge is $0$, and the step potential of any other assignment is the length of the longest path to an assignment with no outgoing edge. 

\Cref{eq:no-effect-of-order} holds trivially for assignments $\alpha$ with step potential $h(\alpha)=0$. We show it to hold for all assignments by induction over the sampling potential. For some $i$, assume that \Cref{eq:no-effect-of-order} holds for all assignments $\alpha$ with step potential $h(\alpha)\leq i$
For the induction step, consider some assignment $\alpha$ with step potential $i+1$. For any $\beta \not\in \As_{\cP}(\alpha)$ we have $\Pr\event{\cP(\alpha)=\beta}=0$, by the definition of $\As_{\cP}(\alpha)$; thus \Cref{eq:no-effect-of-order} holds for these $\beta$. For any assignment $\beta\in\As_{\cP}(\alpha)$,
let $\beta'$ be the restriction of $\beta$ on the variables in 
$(\var(H)\setminus \alpha^{-1}(\bot))  \cup f_{\cP}(\alpha)$~.
As mentioned in \cref{obs:stepReconstruction}, for $\cP(\alpha)$ to produce $\beta$, it first needs to sample the variables in $f_\cP(\alpha)$ and obtain $\beta'$, before continuing the progressive sampling to obtain $\beta$. 
We have:

\begin{align}
\Pr[\cP(\alpha)=\beta] & =
\parens*{
\prod_{x\in f_{\cP}(\alpha)}\Pr[x=\beta(x)]
}\cdot \Pr\event{\cP(\beta')=\beta}~. 
\end{align}
With the induction hypothesis applied to $\beta'$ ($\beta'$ has strictly smaller step potential), we get the claim also for $\beta\in \As_{\cP}(\alpha)$. 
\end{proof}

The only assumptions on the progressive sampling of variables that we require for a successful pre-shattering phase are captured by the following definition:
\begin{definition} 
\label{def:samplingProcess}
A \emph{granular sampling process $\mathcal{P}:\mathcal{A}\rightarrow \mathcal{A}$ with freezing-threshold $q$}  is a deterministic sampling process satisfying the  following properties: 
\begin{enumerate}
    \item An event has at most one of its variables sampled at a time:
   
    \emph{``\,for any event $\cE$ and any partial assignment $\alpha$ we have $|f_{\cP}(\alpha)\cap \var(\cE)|\leq 1$\,'',}
    \item Whenever an event has marginal probability $\geq q$, all its unsampled random variables are \emph{frozen}, i.e., will not get sampled:

 \emph{``\,for each partial assignment $\alpha$ and each event $\cE$ with $\Pr[\cE \mid\alpha]\geq q$ we have $f_{\cP}(\alpha)\cap \var(\cE)=\emptyset$\,''}, and
    \item All non-frozen random variables are eventually sampled:

    \emph{``\,for all partial assignments $\alpha$ and all variables $v$ that evaluate to $\bot$ in $\cP(\alpha)$, there exists some event $\cE$ with $v\in \var(\cE)$ satisfying $\Pr\event{\cE \mid \cP(\alpha)}\geq q$.\,''}
\end{enumerate}
\end{definition}
Note that in a granular sampling process no variable is ever retracted. 
Note that freezing variables in such a process does not change the marginal probability of events. \Cref{alg:LLLShattering} is an example of a granular sampling process with freezing-threshold $q$, see \Cref{obs:FGGranular} for details.

Before proving the lemma, let us highlight the main idea. We do not try to argue that the events of different nodes entering the post-shattering instance are independent; in fact, this is precisely what can fail. Instead, we condition on the partial assignment produced by the adaptive sampling process. If all events in a $3$-independent set $S$ enter the post-shattering instance, then in each disjoint neighborhood around a node of S, some event has reached marginal probability at least $q$. Thus, when sampling the remaining variables in parallel, all these neighborhoods contain a bad event with probability at least $q^{|S|}$.  On the other hand, under a fresh full parallel sampling from scratch, the probability that all these disjoint neighborhoods contain a bad event is at most $((d+1)p)^{|S|}$: $p^{|S|}$ is the probability that $|S|$ independent bad events all hold, and $(d+1)^{|S|}$ is the number of ways to choose an event in each of $|S|$ $1$-hop neighborhoods. Comparing these two bounds via the law of total probability yields the desired $\parens*{\frac{(d+1)p}{q}}^{|S|}$ bound.
 
\begin{lemma}
\label{lem:smallSetProbability}
Let $0\leq q\leq 1$ and let $H=(V_{\cE},E)$ be the dependency graph of an LLL with error probability bound $p$ and dependency degree $d$.  Let $S\subseteq V_{\cE}$ be a $3$-independent set of nodes in $H$. Let $B$ be the random variable describing the set of events with an incident frozen variable after running a granular sampling process with freezing-threshold $q$ on $H$. Then $\Pr[S\subseteq B]\leq \big(\frac{(d+1)p}{q}\big)^{|S|}$. 
\end{lemma}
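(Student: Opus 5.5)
The plan is to avoid any independence claim and instead compare the final partial assignment $\gamma=\cP(\emptyAs)$ with a fresh total assignment. Let $\tau$ be a total assignment sampled by first running $\cP$ from $\emptyAs$ and then sampling all still-unset variables of $\gamma$ independently. The first step is to show, via \Cref{lem:processProbability}, that $\tau$ is distributed exactly as a fully independent sample of all variables. Summing the product formula over the (unique, by \Cref{obs:stepReconstruction}) reachable $\gamma$ compatible with $\tau$ should give $\Pr[\tau=\sigma]=\prod_x\Pr[x=\sigma(x)]$. Concretely: for any total $\sigma$, the process run with its variables "pre-drawn" according to $\sigma$ yields exactly one $\gamma\in\As_\cP(\emptyAs)$ consistent with $\sigma$. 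The probability of that $\gamma$ is the product over its set variables, and completing the unset ones contributes the remaining factors.

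Next, fix the $3$-independent $S$. For each $v\in S$ let $N^+(v)$ be $v$ together with its $H$-neighbors, a set of at most $d+1$ events. If $v\in B$, then some variable $y\in\var(v)$ is frozen, i.e. unset in $\gamma$. By property 3 of \Cref{def:samplingProcess}, some event $\cE_v\ni y$ has $\Pr[\cE_v\mid\gamma]\ge q$, and $\cE_v$ shares $y$ with $v$, so $\cE_v\in N^+(v)$. Since $S$ is $3$-independent, the sets $N^+(v)$ are disjoint, and moreover for $u\neq v$ in $S$, events in $N^+(u)$ and $N^+(v)$ share no variables: a shared variable would put $u,v$ at distance $\le 3$. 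Hence, conditioned on $\gamma$, the events $\cE_v$ depend on pairwise disjoint sets of unset variables, which are sampled independently in the completion. Therefore
\[
\Pr\Bigl[\textstyle\bigwedge_{v\in S}\ \exists \cE\in N^+(v):\ \tau\in\cE \Bigm| \gamma\Bigr]\ \ge\ \prod_{v\in S}\Pr[\cE_v\mid\gamma]\ \ge\ q^{|S|}
\]
on the event $S\subseteq B$ (choosing $\cE_v$ as a measurable function of $\gamma$).

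Finally, let $W$ be the event that every $N^+(v)$, $v\in S$, contains an event occurring under $\tau$. By the law of total probability, $\Pr[W]\ge q^{|S|}\Pr[S\subseteq B]$. Because $\tau$ is a fully independent assignment, a union bound over the $\le (d+1)^{|S|}$ choices of one event per neighborhood gives $\Pr[W]\le ((d+1)p)^{|S|}$. Here the chosen events are pairwise variable-disjoint, so independence gives $p^{|S|}$ for each choice. Dividing yields the claim. The main obstacle is the first step, the rigorous identification of $\tau$'s law with the product measure through \Cref{lem:processProbability}, including the uniqueness of the compatible $\gamma$ for each $\sigma$. The distance bookkeeping (why $3$-independence gives variable-disjointness across neighborhoods) is the second point to check carefully.
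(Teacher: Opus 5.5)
Your proposal is correct and follows essentially the same argument as the paper. Both write a fully independent sample as the granular process followed by completing its unset variables (justified via \Cref{lem:processProbability}), show that the conditional probability that every inclusive neighborhood of $S$ contains an occurring event is at least $q^{|S|}$, bound the unconditional probability by $((d+1)p)^{|S|}$, and divide. The only difference is that you spell out two steps the paper leaves implicit: that $S\subseteq B$ forces an event of marginal probability at least $q$ in each $N^+(v)$ via property 3 of \Cref{def:samplingProcess}, and why $3$-independence makes these neighborhoods variable-disjoint.
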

\begin{proof}
Let $\AsFull$ be the set of \emph{full assignments}, i.e., assignments in which no variable is mapped to $\bot$, and let $\ProFull$ be the process which samples all unsampled variables (as fair coins), thus mapping all $\alpha \in \As$ to some $\ProFull(\alpha) \in \AsFull$. For some set $X$ of events let $\AsFail{X} \subseteq \AsFull$ be the set of full assignments in which all the (bad) events of $X$ occur in the full assignment. Let $\AsFailp{X} \subseteq \AsFull$ be the set of full assignments in which each node $v \in X$ has an event fail in its inclusive neighborhood $\set{v}\cup N(v)$.

Consider a $3$-independent set $S$ in $H$ and let $k=|S|$. Let us analyze what different processes $\Pro:\As \to \As$ do to this set and its neighborhood.
In a total assignment of the random variables, whether an event node $v\in S$ or one of its neighbors in the dependency graph fails depends on the variables $\var(\set{v} \cup N(v)) = \var(v) \cup \bigcup_{u \in N(v)} \var(u)$. For nodes in $S$, these sets are mutually disjoint by 3-independence, and the corresponding events are therefore mutually independent.
Each of those neighborhood-events have probability at most $(\deg(v)+1)p \leq (d + 1)p$ by union bound over the inclusive neighborhood, implying the following bound on the probability that they occur for all nodes in $S$:
\begin{equation}
\label{eq:indep-failure-full-sampling}
\Pr\event*{\ProFull(\emptyAs) \in \AsFailp{S}} 
= \prod_{v \in S} \Pr\event*{\ProFull(\emptyAs) \in \AsFailp{\set{v}}}
\leq ((d+1) p)^k
\ .
\end{equation}

To prove the lemma consider a granular sampling process $\ProStop{q}$ with freezing threshold $q$. Which variables are sampled and the order of their sampling are not specified beyond the properties of \Cref{def:samplingProcess}, but the process is fixed for the rest of the proof. 
Let $\AsStop{q}\subseteq \mathcal{A}$ be the set of assignments that can be generated by the process $\ProStop{q}$ starting from the empty assignment $\emptyAs$.
Let $\AsStopp{q,S}\subseteq \AsStop{q}$ be the subset of those assignments in which all nodes in $S$ had an event $u$ in their inclusive  neighborhood reach marginal probability $q$. Formally:
\begin{align*}
\forall \alpha \in \AsStopp{q,S}, \forall v \in S, \exists w_{v,\alpha} \in \set{v} \cup N(v): &&  \Pr\event*{\ProFull(\alpha) \in \AsFail{\{w_{v,\alpha\}}}} \geq q
\ .\end{align*}

Let $\alpha \in \AsStopp{q,S}$, where $S$ is again a set of $k$ nodes, each at distance at least $4$ from other nodes in $S$. Let us analyze the probability that sampling its remaining variables leads to an event failing in the inclusive neighborhood of each element of $v$. For each $v \in S$, let $w_{v,\alpha} \in \set{v} \cup N(v)$ be an event whose marginal probability reached $q$ which exists as $\alpha \in \AsStopp{q,S}$. Again, as when establishing \cref{eq:indep-failure-full-sampling}, the events $w_v$ have disjoint sets of random variables left to sample. Thus we have for all $\alpha \in \AsStopp{q,S}$: 
\begin{equation}
\label{eq:indep-failure-finish-sampling}
\Pr\event*{\ProFull(\alpha) \in \AsFailp{S}} 
= \prod_{v \in S} \Pr\event*{\ProFull(\alpha) \in \AsFailp{\set{v}}}
\geq \prod_{v \in S} \Pr\event*{\ProFull(\alpha) \in \AsFail{\set{w_{v,\alpha}}}}
\geq q^k
\ .
\end{equation}

Note that the outcome  $\ProStop{}(\emptyAs)$ of the process $\ProStop{}$ is a random variable. 
 The first two equalities in the following calculation apply \Cref{lem:processProbability} several times first to determine $\Pr[\ProFull(\emptyAs)=\alpha]$, and then for the second euality with $\Pr[\ProStop{q}(\emptyAs) = \alpha]$ and  $\Pr[\ProFull(\alpha)=\alpha']$ for all $\alpha\in \mathcal{A}$ and all $\alpha'\in \AsFailp{S}$.
\newlength{\padLength}\newlength{\padA}\newlength{\padB}
\settowidth{\padA}{$\scriptstyle\alpha \in \As$}
\settowidth{\padB}{$\scriptstyle\alpha \in \AsStopp{q,S}$}
\setlength{\padLength}{(\padA*2+\padB)/6}\newcommand{\fixWidth}[1]{\rule{\padLength}{0pt}\mathclap{#1}\rule{\padLength}{0pt}}
\allowdisplaybreaks
\begin{align}
\Pr\event*{\ProFull(\emptyAs) \in \AsFailp{S}} \notag
    & = \sum_{\fixWidth{\alpha \in \As}}  \prod_{x\in var(H)\setminus \alpha^{-1}(\bot)} \Pr[x=\alpha(x)] \cdot 1_{{\alpha\in \AsFailp{S}}}\\
    & = \sum_{\fixWidth{\alpha \in \As}} \Pr\event*{\ProStop{q}(\emptyAs) = \alpha} \cdot \Pr\event*{\ProFull(\alpha) \in \AsFailp{S}} \notag\\ 
    &= \sum_{\fixWidth{\alpha \in \AsStopp{q,S}}} \Pr\event*{\ProStop{q}(\emptyAs) = \alpha} \cdot \Pr\event*{\ProFull(\alpha) \in \AsFailp{S}} \notag\\
    &\quad+ \sum_{\fixWidth{\alpha \not\in \AsStopp{q,S}}} \Pr\event*{\ProStop{q}(\emptyAs) = \alpha} \cdot \Pr\event*{\ProFull(\alpha) \in \AsFailp{S}} \notag\\
    & \geq \sum_{\fixWidth{\alpha \in \AsStopp{q,S}}} \Pr\event*{\ProStop{q}(\emptyAs) = \alpha} \cdot \Pr\event*{\ProFull(\alpha) \in \AsFailp{S}} + 0 \notag\\
&\stackrel{(\ref{eq:indep-failure-finish-sampling})}{\geq} q^k \cdot \parens[\Big]{\sum_{\fixWidth{~~\alpha \in \AsStopp{q,S}}} \Pr\event*{\ProStop{q}(\emptyAs) = \alpha}}  \notag\\
    &=  q^k \cdot \Pr\event*{\ProStop{q}(\emptyAs) \in \AsStopp{q,S}}  \ ,\end{align}
Dividing by $q^k$, we obtain:
\[ \Pr\event*{\ProStop{q}(\emptyAs) \in \AsStop{q,S,1}}
\leq \frac{\Pr\event*{\ProFull(\emptyAs) \in \AsFail{S,1}}}{q^k}
\stackrel{(\ref{eq:indep-failure-full-sampling})}{\leq} \frac{((d+1)\cdot p)^k}{q^k}
\ .\]
\end{proof}

\subsection{Distributed LLL: Proof of Theorem~\ref{thm:mainLLL}}
\paragraph{Distributed Lovász Local Lemma and Implementation of \Cref{alg:LLLShattering}.}
Formally, in the distributed version of the constructive Lovász Local Lemma each event and each variable is assigned to some node of the communication network. See \cite{MPU23,Davies_soda23} for formal definitions. 
In the \LOCAL model it is generally assumed that the dependency graph of the LLL also serves as the communication network. 
The objective remains to compute an assignment of the variables avoiding all bad events. We next show that \Cref{alg:LLLShattering} can be efficiently implemented in the \LOCAL model and prove Theorem~\ref{thm:mainLLL}.

\begin{lemma}
\label{lem:preshatteringRuntime}
The pre-shattering phase can be run in $O(d^2)+O(\log^*n)$ rounds in the \LOCAL model. 
\end{lemma}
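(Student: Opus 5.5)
The plan is to implement \Cref{alg:LLLShattering} directly in the \LOCAL model, using the dependency graph $H$ as the communication network, and to account for the cost of its two ingredients separately: computing the schedule, and executing it color class by color class.

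\textbf{Step 1: the schedule.} The algorithm needs a distance-$2$ coloring $\psi$ of $H$ with $d^2+1$ colors, which is a proper $(\Delta(H^2)+1)$-coloring of $H^2$. Since $H^2$ has maximum degree at most $d^2$, and one round of communication in $H^2$ is simulated by $2$ rounds in $H$, I will invoke a standard deterministic $(\Delta+1)$-coloring algorithm for this. Concretely, I plan to use Linial's algorithm to reach $O(d^4)$ colors in $O(\log^* n)$ rounds. I will then reduce the palette to $d^2+1$ colors by a color reduction that uses $O(d^2)$ rounds, for example an $O(\Delta)$-round $(\Delta+1)$-coloring algorithm in the style of Barenboim--Elkin--Kuhn, or simply $O(d^4)$ rounds of one-class-at-a-time reduction if a weaker bound were acceptable. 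To get $O(d^2)$ rather than $O(d^4)$, I will use an $O(\Delta)+O(\log^* n)$ algorithm on $H^2$. This step costs $O(d^2)+O(\log^* n)$ rounds.

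\textbf{Step 2: executing one color class.} For a fixed color $i$, each non-frozen event $\cE$ with $\psi(\cE)=i$ samples its undecided, non-frozen variables one by one. The key point is that this can be done locally, with no communication in between samples. Because $\psi$ is a distance-$2$ coloring, no two events of color $i$ share a variable. Moreover, every event $\cE'$ affected by $\cE$'s variables is a neighbor of $\cE$, and $\cE'$ is not affected by any other event of color $i$.

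So the processing event $\cE$ first gathers, in $O(1)$ rounds, the current assignments of all variables of its neighbors $\cE'$ and the current frozen set $F$ restricted to $\bigcup_{\cE'\in N(\cE)}\var(\cE')$. It then simulates the entire sequential sampling internally. After each sample it recomputes the marginals of the incident events and freezes variables as prescribed. Finally it broadcasts the new assignments and freezings to its $1$-hop neighborhood, and from there to the events sharing the newly frozen variables, which is within distance $2$. Each color class therefore costs $O(1)$ rounds.

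I still need to check that the parallel execution equals some valid sequential interleaving, so that the correctness of the pre-shattering phase carries over. This follows from the disjointness: the freezing decisions of different events of color $i$ concern different events $\cE'$, except that a single variable may be frozen by two different dangerous events $\cE'$. That case is harmless, because freezing is monotone and only affects variables of later color classes.

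\textbf{Step 3: total cost.} There are $d^2+1$ color classes, each costing $O(1)$ rounds, so executing the schedule takes $O(d^2)$ rounds. Adding the cost of Step 1 gives $O(d^2)+O(\log^* n)$ in total.

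The main obstacle is the claim in Step 1 that the distance-$2$ coloring with exactly $d^2+1$ colors costs only $O(d^2)+O(\log^* n)$ rounds. If that citation is not available, I would fall back on Linial's $O(d^4)$-color output. The schedule then has $O(d^4)$ classes, and the round count rises to $O(d^4)$. That would still match the statement if the color count in the algorithm were relaxed, so I would verify which known $O(\Delta+\log^* n)$ coloring result applies to $H^2$.
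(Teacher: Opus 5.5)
Your proposal is correct and follows the paper's proof. The distance-$2$ coloring comes from a standard deterministic $O(\Delta)+O(\log^* n)$-round $(\Delta+1)$-coloring algorithm run on $H^2$, whose maximum degree is at most $d^2$; this is exactly what the paper cites \cite{BEG17,M21,FHK,barenboim15}, so your $O(d^4)$ fallback is unnecessary. Each color class then costs $O(1)$ rounds by having the event node gather its $O(1)$-hop neighborhood and simulate the sequential sampling locally; your extra check that the parallel execution linearizes correctly is done in the paper separately, in the observation that the algorithm is a granular sampling process.
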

\begin{proof}
The coloring $\psi$ can be computed in $O(d^2)+O(\log^*n)$ rounds, e.g.\ using \cite{BEG17,M21,FHK,barenboim15}. 
 Dealing with a single color class can be done in $O(1)$ rounds as the sequential sampling of the still unset variables in $var(\cE)$ is done locally and simulated by the node holding $\cE$. To this end, an event node $\cE$ gathers full information of its $O(1)$-neighborhood,  including its incident events' definitions, freezing the state of events and variables and partial assignments of variables. 
In each round, each event has variables set by a single event, and thus not conflicting with the marginal probabilities w.r.t.\ other events.
\end{proof}

\begin{observation}
\label{obs:FGGranular}
\Cref{alg:LLLShattering} is a granular sampling process.
\end{observation}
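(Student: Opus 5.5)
To prove \Cref{obs:FGGranular}, we cast \Cref{alg:LLLShattering} as a deterministic sampling process in the sense of \Cref{def:detSampling} and then verify the three properties of \Cref{def:samplingProcess} in turn.

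\textbf{Step 1: the algorithm is a deterministic sampling process.} First fix the distance-$2$ coloring $\psi$ of $H$. It is computed deterministically, or at least independently of the LLL variables, so we may condition on it. Also fix, for every event, the order in which it scans its variables (say, by identifiers). The state of the algorithm is then determined by the current partial assignment $\alpha$. Given $\alpha$, one can recompute the frozen set $F$: it is the union of $\var^{\mathsf{undecided}}(\cE')$ over all events $\cE'$ whose marginal probability under $\alpha$ is at least $q$. One can also recompute the current color class $i$ and the position each active event has reached. Here we use \Cref{obs:stepReconstruction}-style reasoning, since the set of assigned variables reveals how far each event has progressed.

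Define $f_{\cP}(\alpha)$ as follows. For each event $\cE$ of the current color class, take the next variable of $\var(\cE)$ that is unassigned and not frozen. Let $f_{\cP}(\alpha)$ be the set of these variables over all such events. If this set is empty, advance to the next color class; if every class is finished, set $f_{\cP}(\alpha)=\emptyset$.

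One subtlety is that the pseudocode freezes variables "immediately" after a sample. We model this by taking the step to be a single variable per active event, with freezing recomputed from $\alpha$ before the next step. Freezing only removes variables from future choices and does not change $\alpha$, so this modelling is faithful.

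\textbf{Step 2: Property 1.} Events sampled simultaneously share a color. Because $\psi$ is a distance-$2$ coloring of $H$, two same-colored events $\cE_1,\cE_2$ have no common neighbor in $H$. Any event $\cE$ containing a variable of $\cE_1$ is adjacent to $\cE_1$ (or equal to it), and similarly for $\cE_2$. Hence $\cE$ cannot contain variables of both, and each active event contributes only one variable per step. This gives $|f_{\cP}(\alpha)\cap\var(\cE)|\le 1$.

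\textbf{Step 3: Property 2.} Suppose $\Pr[\cE\mid\alpha]\ge q$. Then $\cE$ was declared dangerous at the moment its marginal crossed $q$, so all of its undecided variables belong to $F$ and are excluded by $f_{\cP}$. For the initial assignment, an event with marginal probability at least $q$ under $\emptyAs$ must be frozen before any sampling starts. This is consistent with the algorithm once the check is performed at initialization as well. Alternatively, the case is vacuous because $p<q$.

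\textbf{Step 4: Property 3.} Every variable belongs to some event. That event receives a color class, and the algorithm processes it until all of its variables are set or frozen. Therefore any variable still unassigned at termination is frozen. It was frozen because some event $\cE$ containing it reached marginal probability at least $q$. Marginals change only when variables are sampled, and none of $\cE$'s variables are sampled afterwards, so $\cE$ still has marginal at least $q$ under the final assignment.

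\textbf{Main obstacle.} The delicate point is Step 1: checking that the choice function depends only on $\alpha$, including the frozen status and the scan positions. Frozen status is a function of $\alpha$ because marginals are, and scan positions are recoverable from which variables are assigned. Steps 2--4 are direct checks against the algorithm.
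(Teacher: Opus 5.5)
Your choice function is the paper's: for each event of the first unfinished color class, take its next unassigned, unfrozen variable. Your checks of the three properties in Steps 2--4 are fine. Defining the frozen set directly from $\alpha$, as $\bigcup_{E' : \Pr[E'\mid\alpha]\ge q}\bigl(\mathrm{var}(E')\cap\alpha^{-1}(\bot)\bigr)$, even makes Properties 2 and 3 hold for every $\alpha$, reachable or not.

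The step that does not hold up as written is the claim in Step 1 that your lock-step model is faithful to \Cref{alg:LLLShattering}. This is exactly the step the paper's proof spends its effort on. In the algorithm, events of the same color run their one-by-one loops in parallel with no synchronization, and each freezes immediately after its own samples. Your $f_{\mathcal{P}}$ instead samples one variable per active event simultaneously and only then recomputes freezing. Your justification (freezing only removes future choices and does not change $\alpha$) does not address the real risk, which is interaction between different events of the class. Suppose the sample $x$ made for $E_1$ could push some $E'$ over $q$ and thereby freeze the variable $y$ that $E_2$ samples in the same step. Then the algorithm, under an interleaving where $x$ comes first, would not sample $y$, while your process would. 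The two output distributions could then differ.

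What closes this is the non-interaction given by the distance-$2$ coloring, which you already proved in Step 2. An event containing a variable of $E_1$ and a variable of $E_2$ would be a common neighbor of $E_1,E_2$ in $H$, or would make them adjacent. A sample made for $E_1$ changes only the marginals of events containing that variable, and none of those events contains a variable of another event of the same color. So it can neither freeze a variable of $E_2$ nor move the marginal of any event touching $E_2$'s variables. Hence each event's loop within the class depends only on its own samples and on the state at the start of the class. All interleavings of these loops therefore give the same distribution over partial assignments and frozen sets; this includes both the asynchronous parallel execution and your simultaneous one. This is the paper's linearization argument; putting it in Step 1 in place of your current justification completes the proof.
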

\begin{proof}
Fix the distance-$2$ coloring $\psi$ of the dependency graph $H$ and, for each event, fix an arbitrary
order of its incident variables. We view Algorithm~1 as a deterministic sampling process whose choice
function $f_P$ is defined as follows. Given a partial assignment $\alpha$, together with the set of already
frozen variables determined by $\alpha$, let $i$ be the first color class that has not yet been completely
processed. For every event $E$ of color $i$ that is not frozen, include in $f_P(\alpha)$ the next variable of
$\operatorname{var}_{\mathrm{undecided}}(E)$ that is not frozen and has not yet been sampled, if such a variable
exists. If no such variable exists for any event of the current color class, the process advances to the next
color class; if all color classes have been processed, then $f_P(\alpha)=\emptyset$.
It  is straightforward to verify the properties of a granular sampling process with this choice function.

The described granular sampling process splits a round into (possibly) many further sampling steps. However, the process itself has no intra round synchronization between nodes in the same color class. But since $\psi$ is a distance-$2$
coloring of $H$, two events of the same color have no common neighboring event and, in particular,
their sampled variables cannot affect the marginal probability of a common event. Thus, given the distance-$2$ coloring and the internal order of the variables of each events,  the parallel 
steps of \Cref{alg:LLLShattering} are linearizable into the above deterministic choice function without changing the
resulting probabilistic partial assignment or the set of frozen variables.
\end{proof}

\ifx\thetmpCounter\undefined
\newcounter{tmpCounter}
\fi
\setcounter{tmpCounter}{\thetheorem}
\setcounter{theorem}{\thesavedLLL}

\addtocounter{savedLLL}{1}
\edef\temp{theorem.\thesavedLLL.complex}
\addtocounter{savedLLL}{-1}
\hypertarget{\temp}{
\begin{restatable}[\cite{FG17} redux (technical)]{theorem}{thmLLLcomplex}
\customlabel{thm:LLL}{complex}
For any integer $d \geq 1$, $0<q<\frac{1}{e(d+1)}$, and any constant $\eps>0$, there is a randomized distributed algorithm for the constructive \lovasz Local lemma under criterion $p \cdot\frac{2(d+1)d^{6+\eps}}{q}<1$ that w.h.p.\ runs in $O(d^2)+\Det_{\mathsf{LLL}, q,d}(\poly\log n)$ rounds. 
\end{restatable}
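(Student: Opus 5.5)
The plan is to run \Cref{alg:LLLShattering} with freezing threshold $q' = q/2$ and then combine three ingredients. \Cref{lem:LLLpostShattering} controls the residual instance. \Cref{lem:smallSetProbability}, via \Cref{obs:FGGranular}, gives the Separated-Set Bound \argOneLink. \Cref{cor:shattering-small-cc} supplies the witness-extraction/union-bound step \argTwoLink. Throughout, variables are taken to be fair coins, as allowed by \Cref{lem:approx-vars}; it does not change $d$, and I am not sure whether its factor-$2$ loss in $p$ is meant to be absorbed into the constant $2$ of the criterion or whether variables are assumed binary from the start, so the constants there need care. The pre-shattering phase takes $O(d^2)+O(\log^* n)$ rounds by \Cref{lem:preshatteringRuntime}.

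\emph{Step 1 (separated-set bound).} By \Cref{obs:FGGranular} the algorithm is a granular sampling process with freezing threshold $q/2$. Let $B$ be the set of events with an incident frozen variable. \Cref{lem:smallSetProbability} then gives, for every $3$-independent $S\subseteq V_{\cE}$ in $H$,
\[
\Pr[S\subseteq B]\le \Bigl(\tfrac{2(d+1)p}{q}\Bigr)^{|S|} < d^{-(6+\eps)|S|},
\]
where the last inequality is exactly the criterion $p\cdot 2(d+1)d^{6+\eps}/q<1$.

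\emph{Step 2 (apply \Cref{cor:shattering-small-cc}).} Apply the corollary to $G=H$, which has maximum degree at most $d$, with $\Delta=d$, $c_2=3$ and $c_3=\eps$. The required hypothesis $\Pr[S\subseteq B]\le d^{-(3+\eps+3)|S|}$ is exactly what Step 1 gives. Choose $t=\lceil (c/\eps)\log_d n\rceil\ge\log_d n$. Then, with probability at least $1-d^{-\eps t}\ge 1-n^{-c}$, every connected component of $H[B]$ has at most $t d^{3}=O(d^3\log n/\eps)$ events. For constant $\eps$ this is $\poly\log n$ once we use $d\le\poly\log n$; for larger $d$ the $O(d^2)$ term already dominates and $n$ events suffice. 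The degenerate cases $d\le 2$, where the corollary needs $\Delta\ge 2$ and the base $d$ is weak, are handled by running the same argument with $\Delta=\max(d,3)$ and noting that the bound only improves. This parameter bookkeeping is fiddly but routine.

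\emph{Step 3 (post-shattering).} Two events sharing a frozen variable are adjacent in $H$, so the residual LLL $(N(F),F)$ splits into independent sub-instances along the components of $H[B]$. All variables outside $F$ are fixed, so events in different components cannot interact. By \Cref{lem:LLLpostShattering} each sub-instance has dependency degree at most $d$ and event probability at most $2\cdot(q/2)=q$. Since $q<\frac{1}{e(d+1)}$, the existential LLL criterion holds, so each sub-instance is solvable. We then run the deterministic algorithm on all components in parallel, in $\Det_{\mathsf{LLL},q,d}(\poly\log n)$ rounds. Together with the pre-shattering time, and absorbing $O(\log^*n)$ into the deterministic term, this gives the claimed bound.

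The main obstacle is already handled by \Cref{lem:smallSetProbability}. What remains delicate is the bookkeeping in Step 2: the corollary must be applied to exactly the set $B$ used in the lemma, namely events with an incident frozen variable rather than only dangerous events. In addition, the independence parameter ($3$-independent, so $c_2=3$) and the exponent $6+\eps=c_2+c_3+3$ must line up with the criterion. I would check this alignment first, since it determines that the exponent $6$ in the criterion is exactly right.
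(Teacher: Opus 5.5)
Your proposal is the paper's proof, step for step. You run \Cref{alg:LLLShattering} with threshold $q/2$ and combine \Cref{obs:FGGranular} with \Cref{lem:smallSetProbability} to get $\Pr[S\subseteq B]\le (2(d+1)p/q)^{|S|}\le d^{-(6+\eps)|S|}$ for $3$-independent $S$. You then apply \Cref{cor:shattering-small-cc} with $c_2=3$, $c_3=\eps$, $t=\Theta(\eps^{-1}\log_d n)$, and finish with \Cref{lem:LLLpostShattering}. You deviate only at the extreme values of $d$, and there your justifications need repair.

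\emph{Small $d$.} Replacing $d$ by $\Delta=\max(d,3)$ does not ``only improve'' the bound. The corollary's hypothesis is $\Pr[S\subseteq B]\le\Delta^{-(c_2+c_3+3)|S|}$, so enlarging $\Delta$ strengthens it. For $d=2$, Step~1 gives only $2^{-(6+\eps)|S|}$, not $3^{-(6+\eps)|S|}$. No patch is needed, however. \Cref{cor:shattering-small-cc} is stated for $\Delta\ge 2$, so $d=2$ is covered directly. The case $d=1$ is trivial because every component of $H$ has at most two events; the paper simply dismisses $d=1$ as trivial.

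\emph{Large $d$.} The paper does not bound the post-shattering cost on components with up to $n$ events. Whenever $d=\Omega(\log n)$, it runs the randomized Moser--Tardos/Chung--Pettie--Su algorithm directly, in $O(\log n)=O(d^2)$ rounds; the criterion together with $q<\frac{1}{e(d+1)}$ gives $epd^2<1$. It shatters only when $d=O(\log n)$, where components have $O(d^3\log n)=\poly\log n$ events. Your version instead needs $\Det_{\mathsf{LLL},q,d}(n)=O(d^2)$ above your polylog cutoff. That is a property of a particular deterministic algorithm, such as the one in \Cref{lem:deterministicLLLLOCAL}, not something the abstract statement grants you. The paper's fallback avoids this dependence.

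\emph{Fair coins.} The paper's proof also tacitly works with fair coins and does not track the loss from \Cref{lem:approx-vars}. Running that construction with a smaller $\delta$ inflates $p$ by only a factor $1+\gamma$, for any $\gamma>0$, and the strict criterion absorbs this.
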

}
\setcounter{theorem}{\thetmpCounter}

Applying this more complex version of Theorem~\ref{thm:mainLLL} with $q=\frac{1}{3(d+1)}$, $\eps = 0.1$ and simple calculations yields the simplified form for criterion $8pd^{-9} < 1$ presented earlier in the document. Indeed, $8pd^{-9} < 1$, $q=\frac{1}{3(d+1)}$, and $\eps = 0.1$ imply
$p \cdot\frac{2(d+1)d^{6+\eps}}{q} < \frac{1}{8d^9} \cdot 6(d+1)^2d^{6.1} = \frac{3(d+1)^2}{4 d^{2.9}}$ which is $<1$ for all $d \geq 2$, enabling the use of this more technical form. The trivial case $d=1$ can be treated separately.

\begin{proof}[Proof of Theorem~\ref{thm:mainLLL}]
For $d=1$, the result holds trivially as the existence of a solution is still guaranteed and can be computed in a single round. 
If $d= \Omega(\log n)$, use the randomized algorithms from \cite{MoserTardos10,CPS17} to solve the LLL instance in $O(\log n)=O(d^2)$ rounds with high probability. 

If $d=O(\log n)$, use the algorithm described in this section with  threshold parameter $q/2$ where $q$ is the value of Theorem~\ref{thm:mainLLL}.
By \Cref{obs:FGGranular,lem:preshatteringRuntime} the runtime of the pre-shattering phase is $O(d^2+\log^* n)$. 
By \Cref{lem:smallSetProbability} the probability of any $3$-independent set $S$  to participate in the post-shattering instance is at most  
\begin{align*}
(2(d+1)p/q)^{|S|}\leq d^{-(6+\eps)|S|},
\end{align*}
using the assumption $p\leq q(d+1)^{-1}d^{-6-\eps}/2$ from the theorem statement. Via \Cref{cor:shattering-small-cc} (with $c_2=3$, $c_3=\eps$, $t=\eps^{-1}\log_{\Delta} n$) we obtain that connected components in the post-shattering instance are of size $N=t\cdot \Delta^{c_2}=\eps^{-1}\poly\Delta\log_{\Delta} n=\poly\log n$ with probability at least $1-\Delta^{-c_3t}\geq 1-1/n$. By \Cref{lem:LLLpostShattering} the post-shattering instance is an LLL with dependency degree $\leq d$ and bad event probability bound at most $2\cdot (q/2)=q$ that can be solved with a deterministic LLL algorithm, establishing the theorem. \end{proof}

LLLs can be solved efficiently with deterministic algorithms in the \LOCAL model; in the corollary below we use the following lemma in the post-shattering phase to bound $\Det_{\mathsf{LLL}, q,d}(N)$. \Cref{lem:deterministicLLLLOCAL} is proven by using the powerful general derandomization framework of \cite{RG20,GHK18,GKM17} for the algorithm of Moser-Tardos \cite{MoserTardos10} when the criterion is $epd(1+\eps)<1$ and with the randomized LLL algorithm of Chung, Pettie, and Su \cite{CPS17} for criterion $pd^2<1$.
\begin{lemma}[Deterministic LLL in \LOCAL, \cite{RG20,GG24-FOCS,CPS17}]
\label{lem:deterministicLLLLOCAL}
There are  deterministic \LOCAL algorithms for the constructive Lovász local lemma with $n$ events if node IDs are from a space of size $s$ with the following runtimes:
\begin{enumerate}
    \item $O(\log^* s)+\Otilde(\log^4 n)$ if $epd(1+\eps)<1$ holds,
    \item $O(\log^* s)+\Otilde(\log^3 n)$ if  $epd^2<1$ holds. 
\end{enumerate}
\end{lemma}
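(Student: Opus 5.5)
The statement to prove is \Cref{lem:deterministicLLLLOCAL}, which is a compilation of known results. My plan is therefore to assemble it from the cited sources rather than argue from scratch, treating the two regimes separately. In both cases the algorithm first reduces the identifier space. A distance-$O(1)$ coloring of the dependency graph $H$ is computed with $\poly(d)$ colors in $O(\log^* s)$ rounds by Linial-style color reduction. This accounts for the additive $O(\log^* s)$ term. All subsequent steps work only with these colors and with the number of events $n$, so their cost is independent of $s$.

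\textbf{Case 1 ($epd(1+\eps)<1$).} The starting point is the Moser--Tardos algorithm \cite{MoserTardos10}. In the \LOCAL model it runs in $O(\log_{1+\eps} n)$ rounds w.h.p., and its randomness can be made $\poly\log n$-wise independent. I would then apply the derandomization framework of \cite{RG20,GHK18,GKM17}. The framework first computes a network decomposition of a suitable power graph with $O(\log n)$ colors and $\Otilde(\log^2 n)$ diameter. It then derandomizes the $O(\log n)$-round randomized procedure cluster by cluster via the method of conditional expectations, using pessimistic estimators. The estimators are the expected number of Moser--Tardos witness trees of size exceeding $\Theta(\log n)$, and these are locally computable. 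With the improved decomposition of \cite{GG24-FOCS}, the cost comes to roughly (randomized locality $O(\log n)$) $\times$ (decomposition cost $\Otilde(\log^3 n)$), i.e., $\Otilde(\log^4 n)$. The cluster-processing step also fits within this bound.

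\textbf{Case 2 ($epd^2<1$).} Here I would replace Moser--Tardos by the Chung--Pettie--Su algorithm \cite{CPS17}. Under this stronger criterion it runs in $O(\log n/\log(1/(epd^2)))=O(\log n)$ rounds. What matters more is that its failure analysis bounds witness structures of size $O(\log n)$ with a weaker dependence. This allows the conditional-expectation step to be combined with a decomposition of lower overhead, saving one logarithmic factor and giving $\Otilde(\log^3 n)$.

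The main obstacle is verifying that the pessimistic estimators in each case are computable within the cluster's neighborhood. A second point to check is that the derandomization composes with the network decomposition within the stated exponents. Both are exactly the content of \cite{RG20,GG24-FOCS}, so I would invoke those theorems as black boxes. I would only check that their LLL hypotheses match the two criteria and that $n$ there counts events.
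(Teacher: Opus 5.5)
Your high-level route is the same as the paper's. The paper proves this lemma only by citation: it runs the derandomization framework of \cite{RG20,GHK18,GKM17} on Moser--Tardos \cite{MoserTardos10} in the first regime and on Chung--Pettie--Su \cite{CPS17} in the second. However, your account of where the two exponents come from is wrong, and the errors happen to cancel.

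\textbf{How the framework's cost actually arises.} The framework turns an $r$-round randomized algorithm for a locally checkable problem into an SLOCAL algorithm of locality $O(r)$ by conditional expectations. The estimator is the expected number of locally detected failures, which is computable from radius-$O(r)$ information, so no witness-tree estimator is needed. The SLOCAL algorithm is then executed through a network decomposition of $G^{O(r)}$ with $O(\log n)$ colors and $O(\log n)$ diameter. \cite{GG24-FOCS} computes this decomposition in $\Otilde(\log^2 n)$ rounds of the power graph, so the total cost is $\Otilde(r\log^2 n)$. With the $\Otilde(\log^2 n)$ cluster diameter you quote, processing the color classes alone would already cost $\Otilde(r\log^3 n)$.

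\textbf{Where the extra logarithm comes from.} The decomposition is the same in both regimes, so the exponents differ only through $r$. In \LOCAL, Moser--Tardos has $r=O(\log^2 n)$, not $O(\log n)$. Each of its $O(\log_{1+\eps} n)$ resampling iterations needs a maximal independent set of the currently violated events, at $O(\log n)$ rounds apiece. This gives $\Otilde(\log^4 n)$. Under $epd^2<1$ (with constant slack), Chung--Pettie--Su resample only a locally determined independent set of violated events, e.g., those whose ID is smaller than that of every violated neighbor. No MIS is computed, so $r=O(\log n)$, which gives $\Otilde(\log^3 n)$.

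In your version both algorithms have locality $O(\log n)$ and pass through the same framework, so nothing can produce the extra logarithmic factor in the first case. The ``$\Otilde(\log^3 n)$ decomposition cost'' you use there does not correspond to any real ingredient. Neither does the ``lower-overhead decomposition enabled by a weaker dependence'' in the second case.

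\textbf{The identifier reduction.} A $\poly(d)$-coloring of a constant power of $H$ does not make the later steps independent of $s$. The network-decomposition algorithms you invoke as black boxes need $O(\log n)$-bit identifiers that are distinct at least within their own polylogarithmic radius. A constant-distance $\poly(d)$-coloring provides neither. The reduction must therefore act on a polylogarithmic power of the graph, e.g., by Linial's color reduction there, where every ball contains at most $n$ events. In the paper's application, $s$ is polynomial in the original network size and $n$ is polylogarithmic in it, so a single such reduction step already suffices.
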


We obtain the following corollary by plugging suitable values for $q$ and the respective runtimes of \Cref{lem:LLLpostShattering} into Theorem~\ref{thm:mainLLL}. 
\begin{corollary}
\label{cor:LLLcorollary}
For any constants $\eps_1, \eps_2>0$ there are  randomized \LOCAL algorithms for the constructive Lovász local lemma with $n$ events and dependency degree $d$ that w.h.p.\ have the following runtimes
\begin{enumerate}
    \item $O(d^2)+\Otilde(\log^4\log n)$ rounds if $2ep(d+1)d^{7+\eps_1}(1+\eps_2)<1$ holds,
    \item $O(d^2)+\Otilde(\log^3\log n)$ rounds if  $2ep(d+1)d^{8+\eps_1}<1$ holds. 
\end{enumerate}
\end{corollary}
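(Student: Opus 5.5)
This corollary follows by instantiating Theorem~\ref{thm:LLL} (the technical form of Theorem~\ref{thm:mainLLL}) with a suitable threshold $q$, and bounding the post-shattering term $\Det_{\mathsf{LLL},q,d}(\poly\log n)$ via \Cref{lem:deterministicLLLLOCAL}. The post-shattering instance is an LLL with dependency degree at most $d$ and event probability at most $q$ (\Cref{lem:LLLpostShattering}, applied with threshold $q/2$ inside the proof of Theorem~\ref{thm:mainLLL}). So we choose $q$ to be just small enough that the criterion of the respective item of \Cref{lem:deterministicLLLLOCAL} holds with $p$ replaced by $q$.

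\textbf{Item 1.} Take $q = \frac{1}{ed(1+\eps_2)}$, so that $eqd(1+\eps_2) = 1$. If a strict inequality is needed, shrink $q$ by a factor $(1+\eps_2/2)/(1+\eps_2)$ and absorb the loss into $\eps_2$. We must also check the side condition $q<\frac{1}{e(d+1)}$ of Theorem~\ref{thm:LLL}. If it fails, replace $q$ by $\min\{q, \frac{1}{e(d+1)(1+\eps_2)}\}$; this still satisfies the deterministic criterion, and the only change is a factor $(d+1)/d$ in the final criterion, absorbed by enlarging $\eps_1$ slightly.

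Plugging this $q$ into the criterion $p\cdot\frac{2(d+1)d^{6+\eps}}{q}<1$ of Theorem~\ref{thm:LLL} gives $2ep(d+1)d^{7+\eps}(1+\eps_2)<1$, which is exactly the stated criterion with $\eps=\eps_1$.

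\textbf{Item 2.} Take $q = \frac{1}{ed^2}$, slightly reduced as needed. Then $q<\frac{1}{e(d+1)}$ holds for $d\ge 2$, and the criterion becomes $2ep(d+1)d^{8+\eps_1}<1$.

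\textbf{Runtime.} Theorem~\ref{thm:LLL} yields $O(d^2)+\Det_{\mathsf{LLL},q,d}(N)$ with $N=\poly\log n$. The post-shattering components are run with the original IDs, so the ID space has size $s=\poly(n)$. \Cref{lem:deterministicLLLLOCAL} then gives
\[
O(\log^* n)+\Otilde(\log^4 N)=O(\log^* n)+\Otilde(\log^4\log n),
\]
respectively $\Otilde(\log^3\log n)$, since $\log N = O(\log\log n)$ and $\Otilde$ absorbs the polynomial factors. The $O(\log^* n)$ term is dominated by $\Otilde(\log^4\log n)$.

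The case $d=1$ and the case $d=\Omega(\log n)$ are handled as in the proof of Theorem~\ref{thm:mainLLL}. In the latter case $O(\log n)\subseteq O(d^2)$.

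\textbf{Main obstacle.} The only delicate point is bookkeeping. The strict inequalities, the requirement $q<1/(e(d+1))$, and the $d$ versus $d+1$ discrepancies must all be absorbed into the slack provided by $\eps_1$ and $\eps_2$, without changing the exponents $7$ and $8$.
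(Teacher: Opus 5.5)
\textbf{Overall.} Your route is the paper's route. You instantiate Theorem~\ref{thm:LLL} with $q=\frac{1}{ed(1+\eps_2)}$ (resp.\ $q\approx\frac{1}{ed^2}$) and bound the post-shattering term by Lemma~\ref{lem:deterministicLLLLOCAL} with an ID space of size $\poly(n)$. The paper does this in one line and never discusses side conditions. Your item~2, including absorbing the small shrink of $q$ into $d^{\eps_1-\eps}$ for $d\ge 2$, and your runtime accounting are fine.

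\textbf{The step that fails.} Your patch for the hypothesis $q<\frac{1}{e(d+1)}$ in item~1 does not go through as written.
\begin{itemize}
\item With $q=\frac{1}{ed(1+\eps_2)}$, this hypothesis fails exactly when $d\le 1/\eps_2$.
\item Replacing $q$ by $\frac{1}{e(d+1)(1+\eps_2)}$ turns the required criterion into $2ep(d+1)d^{7+\eps}(1+\eps_2)\cdot\frac{d+1}{d}<1$.
\item You may not ``enlarge $\eps_1$'', since $\eps_1$ is fixed by the statement.
\item The factor $\frac{d+1}{d}$ also cannot be absorbed into the slack $d^{\eps_1-\eps}$. For $d=2$ this would need $2^{\eps_1}>3/2$, which fails for, e.g., $\eps_1=\eps_2=0.1$. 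That is exactly the regime where the side condition is violated.
\end{itemize}

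\textbf{The fix.} The hypothesis $q<\frac{1}{e(d+1)}$ is never used in the proof of Theorem~\ref{thm:LLL}, except to make the residual instance solvable. Here solvability is supplied by the deterministic algorithm's own criterion. So run the proof rather than citing the statement, keeping $q=\frac{1}{ed(1+\eps_2)}$ for all $d\ge 2$:
\begin{itemize}
\item Pre-shatter with threshold $q/2$.
\item Lemma~\ref{lem:smallSetProbability} and the hypothesis give $(2(d+1)p/q)^{|S|}<d^{-(6+\eps_1)|S|}$ for every $3$-independent $S$.
\item Apply Corollary~\ref{cor:shattering-small-cc} with $c_2=3$ and $c_3=\eps_1$.
\item By Lemma~\ref{lem:LLLpostShattering}, the residual instance has event probability at most $q$ and dependency degree at most $d$.
\item Solve it with item~1 of Lemma~\ref{lem:deterministicLLLLOCAL} using $\eps_2/2$, since $eqd(1+\eps_2/2)=\frac{1+\eps_2/2}{1+\eps_2}<1$.
\end{itemize}

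\textbf{Your strictness fix in item~1.} This choice also makes that fix unnecessary, and as stated it goes the wrong way. Shrinking $q$ by $\frac{1+\eps_2/2}{1+\eps_2}$ strengthens the required hypothesis on $p$; the loss is not absorbed by $\eps_2$. A tiny shrink absorbed into $d^{\eps_1-\eps}$, as in your item~2, would also work.
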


This completes the shattering analysis for the LLL algorithm, demonstrating that our framework successfully handles the dependencies that invalidate standard approaches.

\section{Shattering with Dependent Opportunities}
\label{sec:general-opportunity-proof}
\label{sec:opportunity}

We consider a general strategy that we term \emph{opportunistic process} in which nodes accumulate incremental progress across multiple iterations. 
A node is declared \emph{solved} once it has achieved at least $t$ units of progress. 
Each node is guaranteed to experience at least $x$ \emph{opportunity rounds}, during which its probability of making (one unit of) progress is at least $q$. 
This is independent between different nodes (typically due to separation), conditioned on past rounds.

\begin{definition}\label{def:opportunistic-process}
    Let $x$, $t$, and $c$ be non-negative integers, and $q\in [0,1]$.
    A $(x,q,t,c)$-\emph{progressive opportunity process}
is an algorithm with the following properties:
    \begin{enumerate}
        \item Each node $v$ has a progress counter, initially $0$, which is non-decreasing over time. Once the counter of $v$ reaches $t$, the node $v$ is guaranteed to be solved by the algorithm.
        \label{def-prop:opportunity-progress-counter}
        \item The algorithm is divided into steps, each consisting of one or more rounds.
        Consider a step and condition on all randomness from previous steps. Under this conditioning, some nodes have an \emph{opportunity} during the step.
        A node with an opportunity has a probability of at least $q$ that its progress counter increases by at least $1$ during the step---the opportunity is said to be \emph{successful} if that happens.
\label{def-prop:opportunity-min-prob}
        \item In a given step, conditioned on the randomness of previous steps, opportunities at nodes from a $c$-independent set $S$ succeed independently.
        \label{def-prop:opportunity-independence}
        \item Each node has at least $x$ opportunities over the whole process.
        \label{def-prop:opportunity-guaranteed-number}
    \end{enumerate}
\end{definition}

The parameter $x$ is the \emph{opportunity count}, $q$ is the 
\emph{progress probability}, $t$ is the \emph{solving threshold}, and $c$ is the \emph{independence radius}.
Whether a node has an opportunity during a step may correspond to some particular combinatorial configuration or state of the algorithm. For example, in the MIS algorithm of~\cite{BEPS16_jacm}, a node $v$ has an opportunity in steps in which none of its neighbors has a much higher degree than its own (or, for technical reasons, if it is already solved).

At a high level, analyzing an algorithm as an opportunistic process is a form of stochastic domination:
The set of nodes left unsolved by the algorithm is a subset of the nodes that came short of their threshold in the opportunistic process. Thus, putting some structural bounds on the set of nodes whose counter does not reach the threshold
also bounds the set of unsolved nodes contained within.

\smallskip

While the outcomes of nodes that are sufficiently far apart are independent, the \emph{timing} of their opportunities may be arbitrarily correlated. 
This dependence prevents the direct application of standard concentration bounds. 
Nevertheless, we show that for suitable relationships among the parameters $t$, $q$, and $x$, this process still exhibits a shattering phenomenon; see \Cref{sec:opportunity} for the formal statement and proof.

We will show the following by induction. 
Let $p = 1-q$ be the probability of failing one's opportunity, so that when $k$ nodes have an opportunity in a given round, the probability that all of them fail their opportunity is at most $p^k$.

\thmOpportunities*

A strength of \Cref{thm:probability-guarantee} is that it is relatively simple to apply in order to meet the requirements of the general shattering lemmas \Cref{lem:ShatteringGeneral,cor:shattering-ball-graph,cor:shattering-ball-graph}. In \Cref{sec:mis-ghaffari,sec:mis-beps} we apply it to show that the maximal independent set algorithms of \cite{Ghaffari_soda16,BEPS16_jacm} indeed shatter.

One result that this theorem does not recover is the algorithm for maximal matching of \cite{BEPS16_jacm}. Intuitively, while in algorithms for MIS the claim is that each node experiences $\Theta(\log \Delta)$ rounds where it has a constant probability of getting solved (from joining the MIS, or one of its neighbors joining the MIS), in the algorithm for maximal matching, the claim is that each node has at least a constant probability of its degree reducing by a constant factor. That is, in the argument for maximal matching, it is the accumulation of $\log \Delta$ reductions in degree rather than a single event that guarantees that the node is not part of post-shattering.
This calls for a more general statement, which we prove in \cref{ssec:generalizedOpportunityProcess}.

\subsection{Proof of \texorpdfstring{\Cref{thm:probability-guarantee}}{Theorem~\ref{thm:probability-guarantee}}}

Let us introduce some notation to prove \Cref{thm:probability-guarantee}. Throughout let $T$ be the number of rounds of the process and $x$ the number of guaranteed opportunities. Note that $x$ measures the opportunities of all nodes combined and we do not require any guarantee for individual nodes, even though such stronger guarantees hold in most applications.  
\begin{itemize}
\item For each $i \in [T+1]$ let $\Psi_i$ be the set of all possible states of the system prior to running round number $i$ (that is, in each individual state the randomness from round $i$ is not yet known, but that from all rounds $j < i$ is). As a basic level, a state in round $i$ is just a description of all the random choices prior to round $i$.

Throughout this section, for a given state $\alpha \in \Psi_i$, we will also treat $\alpha$ as the indicator random variable for reaching this state; similarly, we treat sets of states as the indicator random variable that one of the states in the set happen.

\item For a state $\alpha \in \Psi_i$ in round $i$, let $\succ(\alpha) \subseteq \Psi_{i+1}$ be the set of states in the round $i+1$ that can be reached from $\alpha$. Conversely, let $\pred(\alpha) \in \Psi_{i-1}$ be the unique predecessor state of $\alpha$, basically all the random choices made to get to $\alpha$ prior to round $i-1$.
  
\item For each $i \in [T+1]$ and $g \geq 0$ let $O_i^{(g)} \subseteq \Psi_i$ be the states that had exactly $g$ opportunities prior to round $i$, and for every $k \geq 0$, let $O_i^{(g)\uparrow k} \subseteq O_i^{(g)}$ be the subset of those states s.t.\ nodes in $S$ have $k$ opportunities in round $i$. We also write $O_i^{(\geq g)} = \bigcup_{h \geq g} O_i^{(h)}$ (resp., $O_i^{(< g)} = \Psi_i \setminus O_i^{(\geq g)} = \bigcup_{h < g} O_i^{(h)}$) for the set of states which had at least (resp., less than) $g$ opportunities pre round $i$.

\item For each $i \in [T+1]$, let $U_i \subseteq \Psi_i$ be the subset of states where all nodes of $S$ are still undecided at the beginning of round $i$ (or after round $i-1$).
\end{itemize}

Some elementary properties:
\begin{itemize}
\item $\alpha \in O_i^{(g)\uparrow k} \implies \succ(\alpha) \subseteq O_{i+1}^{(g+k)}$, or equivalently, $\Pr[O_{i+1}^{(g+k)} \mid O_i^{(g)\uparrow k}] = 1$.
\item $\alpha \in U_{i+1} \implies \pred(\alpha) \in U_{i}$, or equivalently, $\Pr[\neg U_{i+1} \mid \neg U_{i}] = 1$.
\end{itemize}
The last point in the list simply states that if all nodes are undecided in round some round they are also undecided in the previous round. In this notation, the guarantee given by the MIS algorithms of \cite{BEPS16_jacm} are $T = c \log^2 \Delta$, $\Pr[O^{(< c |S| \log \Delta)}_{T+1}] = 0$, and the guarantees given by Ghaffari's algorithm \cite{Ghaffari_soda16} are $T = 13 c \log \Delta$, $\Pr[O^{(< c |S| \log \Delta)}_{T+1}] = 0$. See \Cref{sec:mis-ghaffari,sec:mis-beps} for more details on these algorithms and their analysis.

The definition of an opportunity (to be decided) implies that when $k$ undecided nodes have an opportunity in a given round, the probability that each of them is undecided at the end of the round is at most $p^k$. In technical terms the definition of opportunities implies the following:
\begin{align}
  \forall \alpha \in O^{(g)\uparrow k}_i,\ \Pr[U_{i+1} \mid \alpha] \leq p^k
\end{align}

 Note that we even have $\Pr[U_{i+1} \mid \alpha] = 0$ when $\alpha \not \in U_i$.

\begin{proof}[Proof of \Cref{thm:probability-guarantee}]
In the introduced notation the theorem assumption states $\Pr[O^{(< x)}_{T+1}] = 0$ and we need to prove 
\begin{align}
\label{eqn:allUndecided}
    \Pr[U_{T+1}] \leq p^x\ .
\end{align}

  In order to prove \Cref{eqn:allUndecided},  we show the following more general statement by backwards induction on $i=T+1,\ldots, 0$, intuitively showing that starting from a state $\alpha \in O_i^{(x-k)}$ that still has $k$ opportunities to experience, the probability that the set $S$ is still unsolved by the end of the process is $\leq p^k$.
  \begin{align}
  \label{eqn:inductionHypothesis}
  \forall i \in [T+1], k \in \set{0,\dots,x}, \alpha \in O_i^{(x-k)},\quad \Pr[U_{T+1} \mid \alpha]\leq p^k\ .
  \end{align}

\textbf{Induction base:} For the induction base note that the statement is trivial for $i=T+1$: if $k=0$, \Cref{eqn:inductionHypothesis} holds trivially as $p^k = 1$; if $k \geq 1$, the set $O_{i}^{(x-k)} \subseteq O_{i}^{(<x)}$ is empty because of the theorem assumption ($\Pr[O_{T+1}^{(<x)}] = 0$) on the guaranteed number of opportunities experienced throughout the process.

  \textbf{Induction step:} Fix some $i$ and suppose \Cref{eqn:inductionHypothesis} holds for every $j \in [i+1,T+1]$. We prove the induction hypothesis for each $k\in \{0,\ldots,x\}$ and for each $\alpha\in O_i^{(x-k)}$.  
  Let $d\geq 0$ be such that the nodes of $S$ have $d$ opportunities in round $i$ in state $\alpha$. 
  We upper bound  $\Pr[U_{T+1} \mid \alpha]$ by $p^k$ via the following sequence of (in)equalities; the reasons for each individual step follow at the end of the proof. 
    \begin{align}
    \Pr[U_{T+1} \mid \alpha]
    & = \sum_{\beta \in \succ(\alpha)} \Pr[U_{T+1} \cap \beta \mid \alpha]
    \label{eq:opportunities-proof-succ-part}
    \\
    & = \sum_{\beta \in \succ(\alpha)\cap U_{i+1}} \Pr[U_{T+1} \cap \beta \mid \alpha]
    \label{eq:opportunities-proof-stays-solved}
    \\
    & = \sum_{\beta \in \succ(\alpha)\cap U_{i+1}} \Pr[U_{T+1} \mid \alpha \cap \beta]\cdot \Pr[\beta \mid \alpha]
    \label{eq:opportunities-proof-condition}
    \\
    & = \sum_{\beta \in \succ(\alpha)\cap U_{i+1}} \Pr[U_{T+1} \mid \beta]\cdot \Pr[\beta \mid \alpha]
    \label{eq:opportunities-proof-remove-alpha}
    \\
    & \leq \sum_{\beta \in \succ(\alpha)\cap U_{i+1}} p^{\max(0,k-d)}\cdot \Pr[\beta \mid \alpha]
    \label{eq:opportunities-proof-induction}
    \\
    & = p^{\max(0,k-d)} \cdot \Pr[U_{i+1} \mid \alpha]
    \notag
    \\
    & \leq p^{\max(0,k-d)} \cdot p^d \leq p^k\ .
    \label{eq:opportunities-proof-low-failure}
    \end{align}

As $k\in \{0,\ldots,x\}$ and $\alpha \in O_i^{(x-k)}$  were arbitrary, this concludes the induction step. We used in order:
  \begin{enumerate}
  \item The law of total probability in \cref{eq:opportunities-proof-succ-part},
  \item That when $S$ is no longer fully unsolved, it stays that way: $\neg U_i \implies \neg U_j$ for all $j > i$, and so $\Pr[U_{T+1} \cap \neg U_{i+1}] = 0$ in \cref{eq:opportunities-proof-stays-solved}.
  \item The standard formula $\Pr[A \cap B \mid C] = \Pr[A \mid B\cap C] \cdot \Pr[B \mid C]$ in \cref{eq:opportunities-proof-condition}, as indeed
    \[
    \Pr[A \cap B \mid C]
    = \frac{\Pr[A \cap B \cap C]}{\Pr[C]}
    = \frac{\Pr[A \mid B \cap C] \cdot \Pr[B \cap C]}{\Pr[C]}
    = \Pr[A \mid B\cap C] \cdot \Pr[B \mid C]
    \ .\]
  \item $\alpha \cap \beta = \beta$ when interpreted as events (of reaching the respective states during the process), since $\beta \in \succ(\alpha)$, in \cref{eq:opportunities-proof-remove-alpha}.
  \item The induction hypothesis in \cref{eq:opportunities-proof-induction}, together with the fact that probabilities do not exceed $1=p^0$.
  \item The bounded probability of failure of opportunities in \cref{eq:opportunities-proof-low-failure}, i.e., $\Pr[U_{i+1} \mid \alpha] \leq p^d$ because $\alpha \in O_i^{(x-k)\uparrow d}$ has $d$ opportunities in this round (and their failure probabilities can be multiplied by the definition of an opportunity).
  \end{enumerate}
  
\Cref{eqn:inductionHypothesis} applied to the initial state $\eps \in \Psi_1 = O_1^{(0)}$ and $k=x$, we get that $\Pr[U_{T+1} \mid \eps] \leq p^x$, proving \Cref{thm:probability-guarantee}. 
\end{proof}

\subsection{Generalized Opportunity Decision Process}
\label{ssec:generalizedOpportunityProcess}

\begin{theorem}
\label{thm:opportunityProcessGeneralized}
Consider an opportunity decision process where each node $v\in S$ of a set $S$ is guaranteed to have $x_v$ opportunities and the success per opportunity is $q_v$, then the probability that each $v\in S$ makes $t_v$ or less progress is at most 
 \begin{align}
     \prod_{v\in S} \Pr[B(x_v,q_v) \leq t_v), 
 \end{align}
 where $B(x_v,q_v)$ is the binomial distribution of parameters $x_v$ and $q_v$, i.e., the random variable representing the number of successful outcomes (heads) when flipping $x_v$ coins with bias $q_v$ towards head.
\end{theorem}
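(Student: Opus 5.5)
We generalize the backward induction used for \Cref{thm:probability-guarantee}, now tracking a vector of remaining opportunities and a vector of accumulated progress for the nodes of $S$. We work in the same state space $\Psi_1,\ldots,\Psi_{T+1}$. Each node $v\in S$ receives at least $x_v$ opportunities over the process, and each opportunity succeeds with probability at least $q_v$. Opportunities of distinct nodes of $S$ in the same step, conditioned on the past, succeed independently; this requires $S$ to be $c$-independent. The target is the tail function
\[
F(\mathbf{k},\mathbf{s}) \;=\; \prod_{v\in S} \Pr[B(k_v,q_v)\le s_v],
\]
where $k_v$ is the number of opportunities $v$ is still guaranteed to receive and $s_v$ is how much additional progress $v$ may still make while keeping total progress $\le t_v$. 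For $s_v<0$ we set $\Pr[B(k_v,q_v)\le s_v]=0$.

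The inductive claim, proved for $i=T+1$ down to $1$, is the following. For every state $\alpha\in\Psi_i$ in which node $v$ has had $x_v-k_v$ opportunities and has progress $t_v-s_v$, the probability conditioned on $\alpha$ that every $v\in S$ ends with progress at most $t_v$ is at most $F(\mathbf{k},\mathbf{s})$.

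The base case is $i=T+1$. By the guarantee, states with some $k_v>0$ have probability zero. Otherwise each factor equals $1$ if $s_v\ge 0$ and $0$ if $s_v<0$, which matches the indicator exactly.

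For the induction step, fix $\alpha$ and let $D\subseteq S$ be the set of nodes having an opportunity in round $i$. We decompose over successor states $\beta$, as in \cref{eq:opportunities-proof-succ-part}--\cref{eq:opportunities-proof-remove-alpha}, and apply the induction hypothesis to each $\beta$. For $v\notin D$ the parameter $k_v$ is unchanged, and $s_v$ can only drop because progress counters are non-decreasing. For $v\in D$, $k_v$ decreases by one, and on success $s_v$ decreases by at least one.

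Since $\Pr[B(k,q)\le s]$ is non-decreasing in $s$, any extra progress only helps. So we may coarsen each $\beta$ to the success pattern $\sigma\in\{0,1\}^D$ and bound $F$ at $\beta$ by $F(\mathbf{k}-\mathbf{1}_D,\mathbf{s}-\sigma)$. This leaves the step probability
\[
\sum_\sigma \Pr[\sigma\mid\alpha]\,F(\mathbf{k}-\mathbf{1}_D,\mathbf{s}-\sigma).
\]
Because $F$ is a product, the conditional independence of the successes factorizes this sum over $v\in D$. It then suffices to show, for each single $v$, that
\[
\pi_v\,g(k_v-1,s_v-1)+(1-\pi_v)\,g(k_v-1,s_v)\;\le\; g(k_v,s_v),
\]
where $g(k,s)=\Pr[B(k,q_v)\le s]$ and $\pi_v\ge q_v$ is the actual success probability of $v$'s opportunity. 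With $\pi_v=q_v$ this holds with equality, by conditioning on the first coin of a binomial. For $\pi_v\ge q_v$ the left side can only be smaller, since $g(k-1,s-1)\le g(k-1,s)$.

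The main obstacle is making the monotonicity and coarsening steps rigorous. Three details need care. First, progress increments larger than one must only decrease $F$. Second, the actual success probability may exceed $q_v$ and need not be exactly independent beyond the product upper bound; I would handle this by requiring the definition to give independence of the success indicators, so that the expectation factorizes exactly. Third, nodes that receive more than $x_v$ opportunities must be handled. For these I cap the accounting at $x_v$: opportunities beyond the first $x_v$ are treated as non-opportunities, which is valid by monotonicity of $F$ in $\mathbf{s}$.

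Applying the claim at the initial state with $k_v=x_v$ and $s_v=t_v$ yields the theorem. \Cref{thm:probability-guarantee} is recovered as the special case $t_v=0$, where $\Pr[B(x_v,q)\le 0]=p^{x_v}$.
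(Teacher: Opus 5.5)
Your argument is correct. The paper gives no proof of this statement (it is deferred to a full version), so the natural point of comparison is its backward-induction proof of Theorem~\ref{thm:probability-guarantee}, and your proposal is a faithful generalization of it.

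Where the paper propagates the scalar potential $p^{k}$ through the successor decomposition, you propagate the product of binomial tails $F(\mathbf{k},\mathbf{s})$. The one-step inequality then becomes the recursion $g(k,s)=q\,g(k-1,s-1)+(1-q)\,g(k-1,s)$, combined with monotonicity of $g$ in $s$. That monotonicity absorbs increments larger than one, progress made outside opportunities, success probabilities above $q_v$, and (via capping $k_v$ at $0$) surplus opportunities. The independence you say you would ``require'' is precisely property~3 of Definition~\ref{def:opportunistic-process} applied to a $c$-independent $S$, so no strengthening of the definition is needed.

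Two side remarks. First, your step also proves the relaxed version the paper mentions after the theorem, provided the relaxation means that the \emph{vector} of success indicators stochastically dominates independent Bernoulli$(q_v)$ coordinates. This works because your integrand $\prod_{v} g_v(k_v-1,s_v-\sigma_v)$ is coordinatewise non-increasing in $\sigma$. Domination of the sum alone would not suffice once the $s_v$ differ, since the integrand is then not a function of $\sum_v\sigma_v$.

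Second, your closing claim that Theorem~\ref{thm:probability-guarantee} is the case $t_v=0$ recovers only its per-node form. That theorem assumes just a combined count of $x$ opportunities for all of $S$. For $t_v=0$ and a common $q$, however, your $F$ depends on $\mathbf{k}$ only through $\sum_v k_v$. Tracking that combined count instead gives back precisely the paper's proof.
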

\begin{proof}
To be included in the full version. 
\end{proof}

The theorem should also hold with a slightly different definition of a progressive opportunity process: one in which 
\cref{def-prop:opportunity-independence} of \cref{def:opportunistic-process} is relaxed to only require that $c$-independent opportunities behave "as-if independent" rather than independently. That is, for the set of nodes that have an opportunity in a round, the random variables indicating whether these opportunities are successful should stochastically dominate a binomial distribution.

\section{Multi-Phase Shattering}
\label{sec:coloring}
\label{sec:multiPhase}
In \Cref{sec:opportunity}, a node enters the post-shattering phase if it is not successful in any of the rounds of the process. Often, in such a setting one can boost the probability  by running the process for longer. In this section we focus on a different setting, where the algorithm is split into different phases and in each phase nodes may \emph{fail} to meet the progress measure and end up in the post-shattering phase. 
Generally the goal of each phase is to achieve some preconditions for the next phases.
If one had a separate post-shattering instance for each of the phases, then each phase may fit into the prior framework. The downside is that then these may need to be solved sequentially. To obtain the full result, e.g., of \cite{CLP20}, one wants to combine the failed nodes of all the phases into a \emph{single} post-shattering instance.

The following theorem establishes \argOneLink{} for this situation.

\ifx\thetmpCounter\undefined
\newcounter{tmpCounter}
\fi
\setcounter{tmpCounter}{\thetheorem}
\setcounter{theorem}{\thesavedMultiPhase}

\addtocounter{savedMultiPhase}{1}
\edef\temp{theorem.\thesavedMultiPhase.formal}
\addtocounter{savedMultiPhase}{-1}
\hypertarget{\temp}{
\begin{restatable}[Formal]{theorem}{thmMultiphase}
\customlabel{thm:multiPhase}{formal}
    Let $p\in(0,1)$, $k \in \naturals^*$, and $S \subseteq V$ be a set of nodes. Consider a $k$-phase process where nodes in the set $S$ have some probability of "failing" each phase. Suppose that in each phase $i \in [k]$ and regardless of what occurred in previous phases (i.e., for an arbitrary conditioning), for each subset $U \subseteq S$ of nodes that have not "failed" in a phase $<i$, the probability that all nodes in $U$ "fail" phase $i$ is at most 
    \begin{align}p^{\card{U}}~. \label{condition:multiPhase}
    \end{align}
    Then the probability that each node in the set $S$ fails at least one phase is at most $(pk)^{\card{S}}$
\end{restatable}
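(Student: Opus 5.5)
The plan is a union bound over which phase each node fails first, combined with the chain rule across phases. For each node $v \in S$, let $\phi(v)\in[k]$ denote the first phase that $v$ fails; set $\phi(v)=\infty$ if $v$ never fails. The event that every node of $S$ fails at least one phase is the disjoint union, over all functions $\sigma: S \to [k]$, of the events $E_\sigma = \set{\phi(v) = \sigma(v)\ \forall v\in S}$. There are $k^{\card{S}}$ such functions. It therefore suffices to show $\Pr[E_\sigma] \le p^{\card{S}}$ for each fixed $\sigma$, and then sum.

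To bound $\Pr[E_\sigma]$, write $U_i = \sigma^{-1}(i)$ for $i \in [k]$, so that $S = U_1 \sqcup \dots \sqcup U_k$. The event $E_\sigma$ implies, for every $i$, the event $F_i$ that all nodes of $U_i$ have not failed any phase $<i$ and all of them fail phase $i$. Hence $E_\sigma \subseteq \bigcap_i F_i$.

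We then expand by the chain rule in phase order:
\[
\Pr\Big[\bigcap_{i=1}^k F_i\Big] = \prod_{i=1}^k \Pr\Big[F_i \,\Big|\, F_1\cap\dots\cap F_{i-1}\Big].
\]
The conditioning event $F_1\cap\dots\cap F_{i-1}$ is determined by the randomness of phases $<i$. Moreover, $F_i$ is contained in the intersection of two events: "$U_i$ has not failed before phase $i$", which is again determined by the history, and "all of $U_i$ fail phase $i$". To bound each factor, we write the conditional probability as an average over complete histories $h$ of phases $1,\dots,i-1$ that are consistent with $F_1\cap\dots\cap F_{i-1}$ and with $U_i$ not having failed yet. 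On each such history, hypothesis \eqref{condition:multiPhase}, applied with the arbitrary conditioning $h$ and the set $U_i$ of not-yet-failed nodes, gives probability at most $p^{\card{U_i}}$ that all of $U_i$ fail phase $i$. Histories where some node of $U_i$ has already failed contribute $0$. Averaging yields a factor of at most $p^{\card{U_i}}$; for empty $U_i$ the factor is trivially at most $1 = p^0$.

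Multiplying the factors gives $\Pr[E_\sigma]\le p^{\sum_i \card{U_i}} = p^{\card{S}}$, and the union bound over $\sigma$ gives $k^{\card{S}} p^{\card{S}} = (pk)^{\card{S}}$. The main point requiring care is this conditioning step. The hypothesis must be read as holding conditioned on the full history of earlier phases, which is the literal meaning of "arbitrary conditioning". We must also make sure that conditioning on $F_1,\dots,F_{i-1}$, which are events about earlier phases, is a mixture of such histories. Otherwise, for example if one conditioned on events involving later phases, the bound could fail, as the counterexamples in \Cref{sec:counterExample} illustrate. Using the first-failure phase rather than an arbitrary failed phase is what makes $U_i$ consist of nodes that have "not failed in a phase $<i$", as the hypothesis requires.
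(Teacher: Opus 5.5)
Your proposal is correct and is essentially the paper's proof: a union bound over the $k^{|S|}$ first-failure profiles $\sigma\colon S\to[k]$, with each profile bounded by $p^{|S|}$ by applying the hypothesis phase by phase to the sets $U_i=\sigma^{-1}(i)$. Your chain-rule expansion, averaged over histories of phases $<i$, spells out the step the paper states in one line as ``repeated application of the assumption with sets $U_{x,1},\ldots,U_{x,k}$.''
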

}
\setcounter{theorem}{\thetmpCounter}

\subsection{Proof of \texorpdfstring{Theorem~\ref{thm:multiPhase}}{Theorem \ref*{thm:multiPhase}}}

\begin{proof}[Proof of Theorem~\ref{thm:multiPhase}]
When all nodes in a given set of nodes $S$ each fail at least one phase in a multi-phase process, each node $v$ has a phase that was the first one that it failed. Writing down for each node the first phase that it failed results a $\card{S}$-tuple of numbers in $[k]$. We call such a tuple the \emph{failing profile} of a failing execution. In the next paragraphs, we bound the probability that each node in $S$ fails at least one phase by analyzing the probability that all nodes in $S$ fail with a specific failing profile, and do a union bound over all failing profiles.

Let $l = \card{S}$ and let us index the nodes in $S$: $S=\{u_1,\ldots, u_l\}$. Consider an execution of the multi-phase process. For each node $u\in S$ and phase $i\in[k]$ introduce the event $\mathcal{E}_{u,i}$ that holds if $u$ did not fail in phases $<i$ but fails in phase $i$. 

Let $\mathcal{E}_{S}$ be the event that all nodes in $S$ fail, and for $x\in [k]^l$, let $\mathcal{E}_{S,x}$ be the event $\wedge_{j=1}^l \mathcal{E}_{u_j, x_j}$ that holds if the nodes in $S$ fail exactly in the phases specified by the profile $x$.
For a fixed profile $x\in [k]^l$ let $U_{x,i}=\{u_j\in S \mid x_j=i\}$ be the subset of nodes specified to fail in phase $i$ according to profile $x$. By repeated application of the lemma assumption with sets $U_{x,1}, \ldots, U_{x,k}$, we obtain $\Pr[\mathcal{E}_{x,S}]\leq p^{|S|}$. We obtain. 

\begin{align}
\Pr[\mathcal{E}_S]= \sum_{x\in [k]^l} \Pr[\mathcal{E}_{S,x}]\leq \sum_{x\in [k]^l]}p^{|S|}\leq (kp)^{|S|}~.
\end{align}
\end{proof}
One may wonder whether the union bound as in Theorem~\ref{thm:multiPhase} is necessary. The easiest way to see that one cannot bound the probability by $p^{|S|}$ is given by the special case where $S$ just consists of a single node that may fail in each of $k$ phases with probability $p$ if it did not fail before. Clearly, its probability to fail in one of the $k$ phases is
$1-(1-p)^k$. This approximately equals $p\cdot k$ when $p$ is small. 

Theorem~\ref{thm:multiPhase} can be used to show that the pre-shattering phases of various coloring algorithms do shatter into small components, e.g., \cite{CLP18,HKNT_stoc22}. These algorithms consists of $O(\log^*\Delta)$ phases, and in each phase nodes may not meet the pre-conditions of the next phase, e.g., too few of their neighbors might have gotten colored, resulting in the node failing and moving to the post-shattering phase. A similar proof was has been used in \cite{PS15} for establishing a shattering phenomenon for coloring triangle-free graphs.

\subsection{Ultra-Fast Shattering in Distributed Graph Coloring}

We re-establish the following result by Chang, Li, and Pettie.
\begin{theorem}[\cite{CLP20} redux]
$\Delta+1$-list coloring has complexity $O(\Det_{\mathsf{d1LC}}(\poly\log n))$. 
\label{thm:clp}
 \end{theorem}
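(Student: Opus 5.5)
The plan is to revisit the CLP20 algorithm phase by phase and verify that it is a multi-phase residual process in the sense of Theorem~\ref{thm:multiPhase}, with per-phase failure bound $p=\Delta^{-C}$ for a large constant $C$. Then combine Theorem~\ref{thm:multiPhase} with \Cref{cor:shattering-small-cc}. Recall the structure of CLP20. It computes an almost-clique decomposition, then runs $O(\log^*\Delta)$ phases: initial coloring trials, slack generation, coloring of sparse nodes, and the dense-node procedures (a put-aside set plus synchronized color trials in almost-cliques). A node that fails a phase's invariant (insufficient slack, too many uncolored neighbors, an oversized almost-clique residual, etc.) is deferred to $B$. The remaining low-degree instances are solved deterministically on $\poly\log n$-size components.

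\textbf{Step 1: a per-phase bound robust to conditioning.} Fix $S$ to be $c_2$-independent for a suitable constant $c_2$ (larger than the locality radius of any single phase). For each phase $i$, view it as a constant-round experiment that starts from an arbitrary configuration produced by phases $<i$ and satisfying that phase's preconditions at non-failed nodes. Conditioned on this configuration, whether $u$ fails phase $i$ depends only on fresh randomness within distance $c_2/2$ of $u$. Since the configuration is fixed by the conditioning, these failure events are genuinely independent across the $c_2$-independent set $U\subseteq S$. CLP20's per-node analysis gives each failure probability at most $\Delta^{-C}$, and this analysis only uses the invariants, not the history. Together these yield condition \eqref{condition:multiPhase} with $p=\Delta^{-C}$. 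The phases whose inner routine lasts $\omega(1)$ rounds need separate handling, namely the $O(\log^*\Delta)$-iteration dense coloring and the shrinking-palette trials. I would split each of them into constant-round subphases, each with its own failure criterion. The phase count then grows to $k=O(\log^*\Delta)$, or at worst $O(\log\Delta)$, which is harmless.

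\textbf{Step 2: combine the phases.} Theorem~\ref{thm:multiPhase} gives $\Pr[S\subseteq B]\le (k\Delta^{-C})^{|S|}\le \Delta^{-(C-1)|S|}$. Choosing $C\ge c_2+c_3+4$ makes \Cref{cor:shattering-small-cc} apply, so w.h.p.\ every component of $G[B]$ has size $O(\Delta^{c_2}\log_\Delta n)$. For $\Delta\le\poly\log n$ this is $\poly\log n$. For larger $\Delta$, CLP20 already ensures that the failure probabilities are $1/\poly(n)$ or that residual degrees drop to $\poly\log n$. In that case one invokes the ball-cover argument of \Cref{sec:ballCover}, or applies the shattering lemma with $\Delta$ replaced by the residual degree. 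Each deferred node keeps a $(\deg+1)$-list instance because it retains palette slack. So the post-shattering instance is a d1LC instance on $\poly\log n$-size components and costs $\Det_{\mathsf{d1LC}}(\poly\log n)$. The pre-shattering rounds, $O(\log^*\Delta)$, are dominated by this term.

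\textbf{Main obstacle.} The hard step is Step 1: making the per-phase bound hold for every subset $U$ under \emph{arbitrary} conditioning on the past, for phases whose analysis in CLP20 uses concentration over neighborhoods. The crux is that such concentration (for example, slack from random trials) must depend only on current-phase randomness in a bounded ball, given the fixed prior state. This fails where CLP20 intertwines several iterations inside one phase. My first attempt there is to redefine failure per iteration, so that each iteration's failure event is local given the state before it. Only where this is impossible would I fall back on the opportunity framework, Theorem~\ref{thm:probability-guarantee}, for that sub-process.
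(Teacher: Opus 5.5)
Your proposal follows the same route as the paper's proof sketch. You view CLP20's pre-shattering as $O(\log^*\Delta)$ constant-round steps whose failures go into a single residual set, derive the per-phase bound $p^{|U|}$ under arbitrary conditioning from the locality of fresh per-step randomness together with CLP20's $1/\poly(\Delta)$ per-node bounds, and then combine Theorem~\ref{thm:multiPhase} with \Cref{cor:shattering-small-cc}. Your extra remarks (splitting longer routines into constant-round sub-phases, treating large $\Delta$ via the residual degree) only fill in details the paper leaves to CLP20. One caution: the ball-cover alternative bounds the number of clusters per component, not the number of vertices, so it would not by itself give $\poly\log n$-size components; the degree-reduction route is the one that yields the stated $\Det_{\mathsf{d1LC}}(\poly\log n)$ bound.
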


The algorithm of \cite{CLP20} has a pre-shattering procedure of $O(\log^* \Delta)$ rounds, in each of which nodes may drop out of the process and proceed to a subsequent post-shattering phase. 
There are two immediate ways to attempt to re-instate a correct shattering analysis.
One is to construct $O(\log^* \Delta)$ post-shattering instances and to solve them sequentially.
The other is to use a single post-shattering instance for the whole process albeit with a larger component size, $O(\Delta^{\log^* \Delta} \log n)$. Neither of those results in the tight bound obtained by \cref{thm:clp}, which we instead re-instantiate via Theorem~\ref{thm:multiPhase}.

\begin{proof}[Proof sketch of \Cref{thm:clp}]
The coloring algorithm by \cite{CLP20} fits naturally into the multi-phase framework of Theorem~\ref{thm:multiPhase} because its pre-shattering part consists of  a sequence of $O(\log^* \Delta)$ local randomized steps running for $O(1)$ rounds. Each step may add some nodes to post-shattering to ensure that some properties hold in later steps, but the probability of joining post-shattering never exceeds $1/\poly(\Delta)$ at any given node in any step (in fact, it is often $\exp(-\Omega(\Delta))$). See Lemmas 2.5, 4.2 to 4.5, and 5.1 in~\cite{CLP20} for the analysis of the steps that can add nodes to post-shattering.
A vertex is sent to the residual instance as soon as it fails in one of the steps, so the final residual set is the union of failures over all steps rather than the failure set of a single local experiment. Theorem~\ref{thm:multiPhase} is designed exactly for this situation: if, in every phase and under arbitrary conditioning on previous phases, any still-active well-separated set fails that phase with small enough probability with respect to the degree and the number of phases, then even the union of all phase failures shatters. In short, Theorem~\ref{thm:multiPhase} can be applied to their algorithm, and \Cref{cor:shattering-small-cc} then turns the resulting separated-set bound into small connected residual components, establishing their main claim. 
\end{proof}
The shattering arguments of the more recent algorithms for $\deg+1$-list coloring can be re-established in a similar manner, in both LOCAL~\cite{HKNT_stoc22} and CONGEST~\cite{HNT22}.
These guarantees have already been partially adapted in the respective journal versions, based on material from this manuscript.

\section*{Acknowledgements}
We have discussed the findings in this work with countless people and are thankful for their feedback. In particular, we would like to thank Sebastian Brandt for his contributions towards finding \Cref{example:FG}, and Christoph Grunau as well as Václav Rozhoň for helpful discussions on \Cref{sec:fg-lll}. This research was funded in whole or in part by the Austrian Science Fund (FWF) \url{https://doi.org/10.55776/P36280}, \url{https://doi.org/10.55776/I6915}. For open access purposes, the author has applied a CC BY public copyright license to any author-accepted manuscript version arising from this submission.

\bibliography{0a_refs_preamble,0b_refs_shattering}
\bibliographystyle{plain}
\appendix

\section{Small Ball Covers in Post-shattering}
\label{sec:ballCover}
\paragraph{Degree Reduction.}
Shattering-based vertex coloring algorithms, e.g., \cite{BEPS16_jacm,CLP18,HKMT21,HKNT_stoc22}, include a degree reduction phase that is not part of the shattering approach itself. After this phase the graph induced by uncolored vertices has maximum degree $\Delta'\leq \poly\log n$. Only then, the pre-shattering phase is started and \Cref{cor:shattering-small-cc} is applied with $\Delta'$. This ensures that connected components are of size $\poly(\Delta')\log n\leq \poly\log n$. As a result of this exponential decrease in instance size, the post-shattering phase runs extremely fast. 

\paragraph{Efficient algorithms without degree reduction.}
For problems like the maximal independent set problem or the Lovász Local Lemma problem such a degree reduction procedure is not known. Thus the size of the remaining components can only be bounded by $\poly\Delta\log n\gg \poly\log n$ whenever the maximum degree $\Delta$ is large \cite{BEPS16_jacm,Ghaffari_soda16}. To circumvent the problem the MIS algorithms of \cite{BEPS16_jacm} employs a second pre-shattering phase that is run on the graph induced by the nodes remaining undecided after the first pre-shattering phase. 
Then, they compute a so-called well spaced out ruling set of the still undecided nodes and form a virtual \emph{cluster graph} by letting each undecided node join the cluster of the closest ruling set center and connecting two clusters if they are connected by an edge. By leveraging the technical details of the employed ruling set algorithm and the guarantees of both shattering phases, they show that each connected component of this virtual cluster graph only has a logarithmic number of clusters. This is exploited down the line for the design of a $\poly\log\log n$-round post-shattering phase, see \cite{BEPS16_jacm,Ghaffari_soda16} or the paragraph below \Cref{cor:shattering-ball-graph} for more details. The usage of two pre-shattering phases works quite well for greedy-type problems like MIS but would be significantly more technical or possibly even impossible to exist for non-greedy type problems like LLL. Thus we present a standalone result of the same type that works for any pre-shattering procedure meeting the requirements of \Cref{lem:ShatteringGeneral}. We begin with defining the necessary cluster graphs. 

\paragraph{A note on notation.} At multiple points in the upcoming definition and corollary, we consider an induced subgraph $G^c[B]$, where $c$ is a positive integer, $G=(V,E)$ is a graph, and $B \subseteq V$ is a subset of the nodes.
$G^c[B]$ is an induced subgraph of the power graph $G^c$, that is, it is the graph obtained by first taking the power $G^c$ of the graph $G$, and then considering the subgraph induced by $B$.
$G^c[B]$ should not be confused with $(G[B])^c$, which is the graph one would obtain by first considering the induced subgraph $G[B]$ of $G$, and then taking a power graph of this induced subgraph.

\begin{definition}[$(c,r)$-cluster graph]
\label{def:ballGraph}
    Let $c$ and $r$ be positive integers.
    Consider a graph $G = (V,E)$, with subsets $R \subseteq B \subseteq V$ of nodes.
    Then, the $(c,r)$-cluster graph of $R$ over $B$ is an assignment $b:B \to \set{R,\bot}$ and a graph $G' = (R,E')$ defined as follows:
    \begin{enumerate}
        \item For each $v \in B$, $b(v) \in R$ is the \emph{ball} of $v$. It is the element of $R$ nearest to $v$ in $G^c[B]$, with ties broken with IDs. If $\dist_{G^c[B]}(v,R) = \infty$, then $v$ is part of no ball, which we denote by $b(v) = \bot$.

        For an element $u \in R$, we denote by $\Ball(u) = b^{-1}(u) \subseteq B$ the nodes $v \in B$ s.t.\ $b(v) = u$. 
        \item For each $u,u' \in R$, there is an edge between $u$ and $u'$ in $G'$ iff $\dist_G(\Ball(u),\Ball(u')) \leq r$, i.e., if there exists $(v,v') \in \Ball(u) \times \Ball(u')$ s.t.\ $\dist_G(v,v') \leq r$.
    \end{enumerate}
\end{definition}

Note that given a sets $R$ and $B$ the cost of computing a cluster graph essentially scales with how far a node in $B$ can be from the nearest node in $R$. If $\max_{v\in B} \min_{w \in R} \dist_{G^c}(v,w) \leq z$, then a $B$-covering $(c,r)$-cluster graph of $R$ only takes $O(c\cdot z+r)$ rounds of LOCAL to compute. 

\begin{corollary}[Post-shattering connected components admit small ball covers]\label{cor:shattering-ball-graph}
Let $r,c_2$ be positive integers, let $c_3 \geq 0$, and consider a process that generates a random subset $B \subseteq V$ such that 
\begin{align} \label{condition:THESET3}
    \Pr[S\subseteq B]\leq \Delta^{-(2c_2+\max\{c_2,r\}+c_3+2) |S|}
\end{align}
holds, for each $c_2$-independent set $S\subseteq V$ of size $t\geq \log_\Delta n$. Then the following holds with probability at least $1-\Delta^{-c_3t}$:
    For every independent set $R$ of $G^{c_2}[B]$, every connected component of the $(c_2,r)$-cluster graph of $R$ over $B$ has at most $t$ nodes.
\end{corollary}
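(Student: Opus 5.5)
The plan is to reduce the statement to \Cref{lem:ShatteringGeneral} with $x = 2c_2+\max\{c_2,r\}$. The hypothesis \eqref{condition:THESET3} is exactly \cref{condition:THESET} with this choice of $x$. So with probability at least $1-\Delta^{-c_3 t}$, every $c_2$-independent, $x$-connected set fully contained in $B$ has size at most $t$. We work on this good event. Note that the statement quantifies over all independent sets $R$ of $G^{c_2}[B]$ simultaneously. This causes no problem, because the good event from \Cref{lem:ShatteringGeneral} is itself a statement about all such sets at once, so no further union bound over $R$ is needed.

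Fix an independent set $R$ of $G^{c_2}[B]$. Every $R$ of this kind is a $c_2$-independent subset of $B$. It therefore suffices to show the following deterministic fact: if $u,u'\in R$ are adjacent in the $(c_2,r)$-cluster graph, then $\dist_G(u,u')\le x$. Once this holds, a connected component $C$ of the cluster graph is connected in $G^x$, hence is a $c_2$-independent, $x$-connected subset of $B$, and the good event gives $|C|\le t$.

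The main step is to bound the radius of the balls in $G$, and this is where $R$ being independent in $G^{c_2}[B]$ must be used. I would try to show that every $v\in\Ball(u)$ satisfies $\dist_G(v,u)\le c_2$. The intended argument is that $\dist_{G^{c_2}[B]}(v,R)$ should be at most $1$. Otherwise I would hope to reach a contradiction with the structure of $R$, for instance by assuming $R$ is maximal, i.e.\ a ruling set. If the statement only assumes independence, then the claim as I read it requires the balls to be small. In that case I would note that the intended application takes $R$ maximal independent in $G^{c_2}[B]$, which gives $\dist_{G^{c_2}[B]}(v,u)\le 1$ and hence $\dist_G(v,u)\le c_2$.

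Given radius $c_2$, an edge $uu'$ of the cluster graph comes from some $v\in\Ball(u)$ and $v'\in\Ball(u')$ with $\dist_G(v,v')\le r$. The triangle inequality then gives $\dist_G(u,u')\le c_2+r+c_2\le 2c_2+\max\{c_2,r\}$, so every component is $x$-connected and the reduction above applies.

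The step I expect to be the obstacle is justifying the ball radius for arbitrary (non-maximal) $R$. For a non-maximal $R$, a long chain in $G^{c_2}[B]$ could attach a far node to $u$. If that happens, I would instead bound components through the connectivity of $B$ itself: a component of the cluster graph lies inside a single connected component of $G^{c_2}[B]$ enlarged by $r$-jumps. I would then extract a $c_2$-independent, $\max\{c_2,r\}$-connected subset containing $R$'s component, padding with extra nodes of $B$ greedily as in \Cref{cor:shattering-small-cc}, whose size is at least the number of clusters. The extra $2c_2$ slack in the exponent is what makes this padding affordable.
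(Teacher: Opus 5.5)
You have the right reduction: \Cref{lem:ShatteringGeneral} with $x=2c_2+\max\{c_2,r\}$. Your radius argument is also correct when $R$ is a \emph{maximal} independent set of $G^{c_2}[B]$. But the statement covers every independent set $R$, and that generality is what the application needs. The paragraph after the corollary, and the MIS proof sketch, take $R$ to be a $(2,O(\log\log n))$-ruling set of $G^{c_2}[B]$. There a node of $B$ may be $\Theta(\log\log n)$ hops of $G^{c_2}[B]$ from its center, so two centers adjacent in the cluster graph can be much farther than $x$ apart in $G$. Your key claim, that adjacent centers are within distance $x$, therefore fails in the relevant setting. A cluster-graph component is in general not $x$-connected on its own, and assuming maximality proves a strictly weaker result. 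You have to add vertices of $B$ to the set $R_{\mathcal C}$ of centers of a component $\mathcal C$.

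Your fallback does not supply this, for two reasons.
\begin{itemize}
\item The connectivity parameter is wrong. For $r\le c_2$, no $c_2$-independent set with two or more nodes is $\max\{c_2,r\}$-connected. In general, the first admissible padding node along a path can lie up to $c_2+\max\{c_2,r\}$ from the current set. The $2c_2$ in the exponent pays for $x$-connectivity, not for padding.
\item The greedy procedure of \Cref{cor:shattering-small-cc} does not keep $R_{\mathcal C}$ inside the extracted set. Nor does it prevent a new node from landing within $c_2$ of a center not yet absorbed. So it gives no bound on $|R_{\mathcal C}|$.
\end{itemize}

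The missing step is a growth argument.
\begin{itemize}
\item Keep $R'\supseteq R_{\mathcal C}$, and let $C^\ast$ be the component in $G^x[R']$ of a fixed center $u^\ast\in R_{\mathcal C}$.
\item For a cluster-graph edge $uu'$ with $u\in C^\ast$ and $u'\notin C^\ast$, build a walk. Go from $u$ to some $v\in\mathrm{Ball}(u)$ along a path in $G^{c_2}[B]$, jump to $v'\in\mathrm{Ball}(u')$ with $\mathrm{dist}_G(v,v')\le r$, and continue in $G^{c_2}[B]$ to $u'$. All walk nodes lie in $B$, and consecutive ones are within distance $\max\{c_2,r\}$.
\item Add to $R'$ the first walk node $u_i$ at distance $>c_2$ from $C^\ast$. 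It exists because $u'\in R'\setminus C^\ast$ is at distance $>x$ from $C^\ast$.
\item Its predecessor is within $c_2$ of $C^\ast$, so $u_i$ is within $c_2+\max\{c_2,r\}\le x$ of $C^\ast$ and joins it.
\item $u_i$ cannot be within $c_2$ of any $w\in R'\setminus C^\ast$, since then $w$ would be within $x$ of $C^\ast$. Hence $R'$ stays $c_2$-independent while $C^\ast$ grows.
\end{itemize}
Repeat until $R_{\mathcal C}\subseteq C^\ast$. Then $R'$ is a $c_2$-independent, $x$-connected subset of $B$, and \Cref{lem:ShatteringGeneral} gives $|R_{\mathcal C}|\le |R'|\le t$.
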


\begin{proof}
Let $x=2c_2+\max\{c_2,r\}$, let $R$ be an independent set of $G^{c_2}[B]$, let $\cC$ be a connected component in the cluster graph of $R$, and let $R_{\cC}\subseteq R$ be the nodes of $R$ in that connected component.
We describe a process to extend $R_{\cC}$ to a superset $R' \subseteq B$ that is $c_2$-independent $x$-connected.
Because such a set $R'$ must contain no more than $t$ nodes by \cref{lem:ShatteringGeneral} with probability at least $1-\Delta^{-c_3t}$, the set $R_{\cC}\subseteq R'$ must be similarly bounded in size, proving the claim. 

Let us initialize $R' \gets R_{\cC}$, and let $u_{\max}$ be the element of $R_\cC$ of highest ID. Throughout the proof, let us consider the connected component of $u_{\max}$ in the graph $G^x[R']$, $\CC_{G^x[R']}(u_{\max})$, which we denote $\CC(u_{\max})$ for short.
Note that the set of nodes in $\CC(u_{\max})$ is $x$-connected by definition.
$R'$ is initially $c_2$-independent, being a subset of $R$, and our goal is to grow $R'$ with nodes from $B \setminus R$ until it is $x$-connected while keeping it $c_2$-independent.  We grow $R'$ such that $R'\setminus R_{\cC} \subseteq \CC(u_{\max})$, so that any node not yet in $\CC(u_{\max})$ is in $R_{\cC}$.
This means that the set $R'$ will be $x$-connected once $R_{\cC} \subseteq \CC(u_{\max})$.

While $R_{\cC} \not \subseteq \CC(u_{\max})$, consider $R_{\cC} \cap \CC(u_{\max})$ and $R_{\cC} \setminus \CC(u_{\max})$. $R_{\cC}$ is connected in the cluster graph, therefore, there exists some $u \in R_{\cC} \cap \CC(u_{\max})$ and $u' \in R_{\cC} \setminus \CC(u_{\max})$ such that $uu'$ is an edge from the cluster graph. This means that there exist some $v \in \Ball(u)$ and $v' \in \Ball(u')$ s.t.\ $\dist_G(v,v') \leq r$. Consider two such nodes, as well as the shortest path $u_0 = u,\ldots,u_k = v$ and $u_{k+1} = v',\ldots,u_{k+k'} = u'$ in $G^{c_2}[B]$.

We now greedily add nodes from the path $u_0,\ldots,u_{k+k'}$ to $R'$ until $u'$ is in $\CC(u_{\max})$. Let us consider the first node $u_i \not \in R'$ on the path at distance $> c_2$ from all nodes in $\CC(u_{\max})$. It must exist since the last node on the path has distance $> x$ from $\CC(u_{\max})$. Since $u_i$ is the first such node, it is at distance $\leq \dist_G(u_{i-1},\CC(u_{\max})) + \dist_G(u_i,u_{i-1}) \leq c_2 + \max(c_2,r)$ from a node in $\CC(u_{\max})$. $u_i$ therefore cannot be within distance $\leq x-c_2 - \max(c_2,r)=c_2$ from a node $w \in R'$ in another connected component of $G^x[R']$ than $\CC(u_{\max})$, as otherwise, $w$ would be within distance $x$ from $\CC(u_{\max})$, a contradiction. 
Thus, we can add $u_i$ to $R'$ and the connected component $\CC(u_{\max})$ has grown. We can iterate the process with the next vertex on the path with distance $> c_2$ from all nodes in $\CC(u_{\max})$ until $u'$ is part of $\CC(u_{\max})$, and then repeat the process for another $u' \in R_{\cC} \setminus \CC(u_{\max})$ until $R_{\cC} \subseteq \CC(u_{\max})$.

\begin{figure}
\centering
\includegraphics[width=0.8\textwidth]{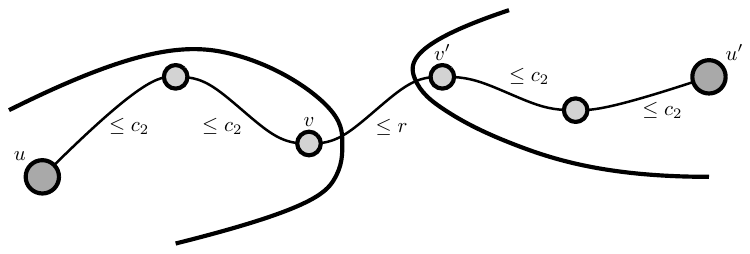}
\caption{
When two ball centers $u$ and $u'$ are connected in the $(c_2,r)$-cluster graph, there exist two nodes $v \in \Ball(u)$ and $v' \in \Ball(u')$ in their respective balls at distance $\leq r$ from one another in $G$. Furthermore, there exists a path in $G^{c_2}[B] \cap \Ball(u)$ between $u$ and $v$ (and similarly with $u'$ and $v'$).
}
\label{fig:rulingSet}
\end{figure}
\end{proof}

A typical use of \cref{cor:shattering-ball-graph} involves computing a ruling set after the pre-shattering phase, and to compute the cluster graph of this ruling set.
Consider a pre-shattering phase in which no $c_2$-independent set of size $\geq \log_\Delta n$ joins post-shattering with high probability. Let us compute a $(2,O(\log \log n))$-ruling set $R$ of $G^{c_2}[B]$, the subgraph of $G^{c_2}$ induced by $B$, the set of nodes in the post-shattering instance: the task admits a $O(c_2 \log \log n)$-round randomized algorithm that is correct with high probability in $n$~\cite{GV_RulingSetBoundedGrowth_podc07,SEW13}. Computing the $(c_2,r)$-cluster graph of $R$ only takes $O(c_2 \log \log n + r)$ rounds, as each node in $B$ is at distance at most $O(\log \log n)$ in $G^{c_2}[B]$ from the nearest ball center. Assuming that we are solving a problem that is greedy in nature, we can extend the partial solution produced by the pre-shattering phase to nodes in post-shattering by the generic technique of \emph{network decomposition}. Computing the network decomposition can be done in $\Otilde((c_2 \log \log n +r)\log^2 \log n)$ rounds~\cite{GG24-FOCS}, where the $O(c_2 \log \log n + r)$ factor comes from the cost of simulating a round of communication over the cluster graph. This allows, e.g., to solve MIS in $\Otilde(\log^3 \log n)$ rounds. The parameter $r$ serves to add some distance between clusters. This can be useful, e.g., for problems where solving a node in post-shattering may require making some changes around that node, including.

\Cref{cor:shattering-ball-graph} is inspired by similar results in \cite{GHK18,MPU23} and simplifies their process and its analysis in a standalone manner that is independent of the problem and pre-shattering process at hand. In fact, it shows that already after a single pre-shattering phase and a suitable ruling set computation, the number of balls in each connected component of the cluster graph is small. This can then be leveraged for the design of an efficient post-shattering phase in the same way as done in \cite{BEPS16_jacm,MPU23}.

\section{Shattering with Dependent Opportunities: Applications}
\label{app:OpportunitiesApplications}
\subsection{Application I: \texorpdfstring{$\deg+1$}{deg+1}-list coloring  (Section 3 in 
\texorpdfstring{\cite{BEPS16_jacm}}{[BEPS16, JACM]})}
\label{sec:coloring-beps}
 In an instance of the $\deg+1$-list coloring problem each node of a graph has a list of colors whose size exceeds its degree. The objective is to assign a color to each node from its list such that adjacent vertices receive different colors. In general there is no assumption on the color space. In the $\Delta+1$-coloring vertex problem each node has the same list of size $\Delta+1$. For illustrative purposes we prove the following theorem.
\begin{theorem}[\cite{BEPS16_jacm}]
\label{thm:BEPScoloring}
Let $c_3\geq 1$. There is a $O(\log \Delta)+\poly\log\log n$-round randomized algorithm that solves the $\deg+1$-list coloring problem on any $n$-node graph with probability at least $1-n^{-c_3}$. 
\end{theorem}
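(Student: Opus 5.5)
We prove Theorem~\ref{thm:BEPScoloring} by casting the BEPS pre-shattering phase as a progressive opportunity process (\Cref{def:opportunistic-process}), invoking \Cref{thm:probability-guarantee} to obtain the Separated-Set Bound \argOneLink, and then applying \Cref{cor:shattering-small-cc} to conclude that the uncolored residual components are small. The post-shattering instance is a $\deg+1$-list coloring instance: every uncolored node still has more available colors than uncolored neighbors. It is solved deterministically on components of size $\poly(\Delta)\log n$, with the ruling-set/cluster-graph route of \Cref{cor:shattering-ball-graph} used when $\Delta$ is large.

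\textbf{Algorithm.} The pre-shattering phase runs $T=\Theta(\log\Delta)$ rounds. In each round, every uncolored node $v$ is active with a constant probability. An active node picks a uniformly random color from its current palette $\Psi(v)$ (its list minus the colors of colored neighbors). It keeps the color if no neighbor tried the same color. Since $|\Psi(v)|>\deg_{\text{uncolored}}(v)$ always holds, a standard calculation shows that $v$ becomes colored in a round with probability at least a constant $q$. This bound holds conditioned on the entire history of earlier rounds, and even conditioned on an arbitrary fixing of the neighbors' choices in that round.

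\textbf{Verifying the opportunity-process properties.} I will check the properties of \Cref{def:opportunistic-process} with progress counter $1$ for colored nodes, solving threshold $t=1$, every round being an opportunity for every uncolored node (colored nodes count as having succeeded), and $x=\Theta(\log\Delta)$ opportunities per node. The key point is independence radius $c=2$. Conditioned on the previous rounds, the success of $v$ in the current round depends only on the fresh coins of $N(v)\cup\{v\}$ in this round. For a $2$-independent set $S$, these sets are disjoint, so the successes are independent within the round.

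\textbf{Separated-Set Bound and shattering.} For a $2$-independent set $S$, the nodes of $S$ collectively have at least $x|S|$ opportunities. \Cref{thm:probability-guarantee} then gives $\Pr[S\subseteq B]\le (1-q)^{x|S|}$. Choosing $T=C\log\Delta$ with $C$ large in terms of $c_3$ makes this at most $\Delta^{-(2+c_3+3)|S|}$. \Cref{cor:shattering-small-cc} with $c_2=2$ and $t=\log_\Delta n$ then shows that, with probability $1-n^{-c_3}$, all components of $G[B]$ have size $O(\Delta^2\log n)$.

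\textbf{Post-shattering.} The remaining components are colored deterministically. Where $\Delta$ is large, I first apply \Cref{cor:shattering-ball-graph} to a $(2,O(\log\log n))$-ruling set, so that every cluster-graph component contains $O(\log n)$ balls. I then compute a network decomposition on this cluster graph and greedily extend the coloring cluster by cluster, which takes $\poly\log\log n$ rounds.

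The main obstacle is justifying that the per-round success probability $q$ is a constant under arbitrary conditioning on the history. The BEPS palette argument must hold for every history, not merely on average; fortunately the $\deg+1$ slack holds deterministically in every state. A secondary obstacle is the high-$\Delta$ post-shattering step, which requires the guarantee of \Cref{cor:shattering-ball-graph}; reaching that guarantee only requires a larger constant $C$ in $T$.
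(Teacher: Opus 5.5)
Your proof is correct, but it handles large $\Delta$ differently from the paper.

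\textbf{The paper's route.} The sketch first applies the $O(\log\Delta)$-round degree reduction of \cite{BEPS16_jacm}, treated as a black box. This leaves a constant number of instances of maximum degree $\hat\Delta=\poly\log n$, solved sequentially. On each, it runs $(c_3+5)\log_{4/3}\hat\Delta$ one-shot coloring rounds and gets $\Pr[S\subseteq B]\le\hat\Delta^{-(5+c_3)|S|}$ for $2$-independent $S$ from \Cref{thm:probability-guarantee}. \Cref{cor:shattering-small-cc} then gives components of size $\poly\log n$ directly. Any deterministic $\deg+1$-list coloring algorithm finishes these in $\poly\log\log n$ rounds.

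\textbf{Your route.} You skip degree reduction and shatter with the original $\Delta$, so components can have size $\Delta^2\log_\Delta n$. You recover $\poly\log\log n$ post-shattering time through \Cref{cor:shattering-ball-graph}:
\begin{itemize}
\item a randomized $(2,O(\log\log n))$-ruling set of $G^2[B]$;
\item the $(2,1)$-cluster graph, with at most $\log_\Delta n$ balls per component;
\item a network decomposition of that cluster graph;
\item greedy extension, cluster by cluster.
\end{itemize}
This is exactly the route the paper takes for MIS in \Cref{thm:MIS}, where no degree reduction is available. It is valid here because $\deg+1$-list coloring is greedily extendable, and $r=1$ keeps same-color clusters non-adjacent in $G$.

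\textbf{Trade-offs.} Your route is more self-contained: it does not rely on the omitted degree-reduction procedure and needs no sequential handling of several instances. The paper's route is lighter at the end: it needs exponent $5+c_3$ rather than $8+c_3$, no ruling set, and it can plug in an arbitrary deterministic algorithm on $\poly\log n$-size components. You should add the ruling set's failure probability to the union bound, e.g.\ by shattering with $c_3+1$ in place of $c_3$.

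\textbf{One correction.} You remark that the constant success probability persists ``even conditioned on an arbitrary fixing of the neighbors' choices in that round.'' This is false. If all uncolored neighbors are active and try distinct colors of $\Psi(v)$, only $|\Psi(v)|-\deg(v)\ge 1$ colors remain free, and $v$ succeeds with probability of order $1/(\deg(v)+1)$. You never use the remark, though:
\begin{itemize}
\item \Cref{def:opportunistic-process} only conditions on previous rounds;
\item the Johansson-type bound averages over the neighbors' fresh coins;
\item independence within a round follows from the disjointness of the sets $\{v\}\cup N(v)$, $v\in S$.
\end{itemize}
Simply delete it.
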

The runtime of \Cref{thm:BEPScoloring} is outdated and subsumed by faster but more complicated algorithms \cite{CLP18,HKMT21,HN_tcs23,HKNT_stoc22,HNT22}.

The heart of the shattering argument for the $\deg+1$-list coloring problem is the following one-round procedure that is iterated $O(\log \Delta)$ times.

\begin{algorithm}[ht]
    \caption{\OneShotColoring, for node $v$}
    
    $v$ picks a color $c_v$ u.a.r.\ in its palette $\Psi_v$, sends $c_v$ to its neighbors $N(v)$.
    
    $v$ makes $c_v$ its permanent color if $\forall u \in N(v), c_u \neq c_v$. 
    
    \label[algorithm]{alg:OneShotColoring}
\end{algorithm}

\begin{lemma}[Subset of Lemma 5.2 in \cite{BEPS16_jacm}]
\label{lem:johasson}
    $\Pr[v \text{ is colored in }\OneShotColoring] \geq 1/4$. \cite{Johansson99}
\end{lemma}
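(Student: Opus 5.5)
The plan is to compute the success probability exactly, as an average over $v$'s own color choice of the probability that no neighbor picks the same color, and then bound that average using convexity. Throughout, $N(v)$ denotes the neighbors of $v$ that are still participating in \OneShotColoring. For the $\deg+1$-list coloring setting we assume $|\Psi_v| \geq \deg(v)+1$, and likewise $|\Psi_u|\geq \deg(u)+1 \geq 2$ for every neighbor $u$, since $u$ is adjacent to $v$.

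The colors $c_v$ and $(c_u)_{u\in N(v)}$ are chosen independently and uniformly from the respective palettes. Node $v$ becomes colored exactly when no neighbor picks $c_v$. Conditioning on $c_v=c$ and using independence of the neighbors' choices gives
\[
\Pr[v \text{ colored}] \;=\; \frac{1}{|\Psi_v|}\sum_{c\in\Psi_v}\ \prod_{u\in N(v)}\Bigl(1-\frac{\mathbf{1}[c\in\Psi_u]}{|\Psi_u|}\Bigr).
\]

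Next I bound each factor from below by an exponential. For $y\in[0,1/2]$ we have $1-y\geq 4^{-y}$: the function $4^{-y}$ is convex, and it agrees with the line $1-y$ at $y=0$ and $y=1/2$, so it lies below the line on that interval. Each $y_{u,c}=\mathbf{1}[c\in\Psi_u]/|\Psi_u|$ is at most $1/2$ because $|\Psi_u|\ge 2$. Hence the product over $u$ is at least $4^{-Y_c}$, where $Y_c=\sum_{u\in N(v)}\mathbf{1}[c\in\Psi_u]/|\Psi_u|$.

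Since $y\mapsto 4^{-y}$ is convex, Jensen's inequality bounds the average over $c$:
\[
\frac{1}{|\Psi_v|}\sum_{c\in\Psi_v}4^{-Y_c}\;\geq\;4^{-\frac{1}{|\Psi_v|}\sum_{c\in\Psi_v}Y_c}.
\]
Swapping the order of summation, each neighbor $u$ contributes $|\Psi_u\cap\Psi_v|/|\Psi_u|\le 1$ to $\sum_{c\in\Psi_v}Y_c$. So $\sum_{c\in\Psi_v} Y_c\le \deg(v) < |\Psi_v|$, the exponent is less than $1$ in absolute value, and $\Pr[v\text{ colored}] > 1/4$.

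The main delicate point is the requirement $|\Psi_u|\geq 2$ that the inequality $1-y\ge 4^{-y}$ needs. It must hold relative to the current (residual) graph, so I would state explicitly that palettes and degrees refer to the uncolored nodes. After the previous iterations, colors taken by colored neighbors are removed from palettes and those neighbors leave the graph, which preserves the invariant $|\Psi_u|\geq\deg(u)+1$.
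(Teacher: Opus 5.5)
Your proof is correct and complete. The exact formula for $\Pr[v\text{ colored}]$, the chord bound $1-y\ge 4^{-y}$ on $[0,1/2]$, Jensen over the choice of $c_v$, and the double count $\sum_{c\in\Psi_v}Y_c=\sum_{u}|\Psi_u\cap\Psi_v|/|\Psi_u|\le \deg(v)<|\Psi_v|$ together give $\Pr[v\text{ colored}]\ge 4^{-\deg(v)/|\Psi_v|}>1/4$. The chord bound is legitimate because every neighbor $u$ has $|\Psi_u|\ge\deg(u)+1\ge 2$.

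The paper gives no proof of its own here; it only defers to Johansson and BEPS. The standard Johansson-style argument behind that citation is for a variant in which each uncolored node first stays idle with probability $1/2$. There, conditioned on $v$ trying a color, a plain union bound bounds the conflict probability by $\sum_{u\in N(v)}\frac{1}{2|\Psi_v|}=\frac{\deg(v)}{2|\Psi_v|}<\frac12$, giving success at least $\frac14$. That route is a one-liner and uses only the slack $|\Psi_v|>\deg(v)$ at $v$ itself, nothing about the neighbors' palettes.

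Your convexity route handles the procedure exactly as the paper writes it, with no idle coin. Without the coin, the union bound alone only guarantees $1-\deg(v)/|\Psi_v|$, which can be as small as $1/(\deg(v)+1)$. As you rightly flag, the $\deg+1$ invariant at the neighbors then becomes indispensable. If neighbors could have singleton palettes $\{1\},\dots,\{\deg(v)\}$ inside $\Psi_v=\{1,\dots,\deg(v)+1\}$, then $v$ would be colored only with probability $1/(\deg(v)+1)$. So your argument covers precisely the stated algorithm and even gives the slightly stronger bound $4^{-\deg(v)/|\Psi_v|}$. The price is that it needs the palette invariant throughout the residual graph rather than only at $v$.
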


\begin{proof}[Proof Sketch of \Cref{thm:BEPScoloring}]
In this proof sketch we omit the details of the $O(\log \Delta)$-round procedure that with high probability reduces the maximum degree of uncolored instances to $\hDelta=\poly\log n$. In fact, their degree reduction  constructs a constant number of $\deg+1$-list coloring instances on graphs with polylogarithmic maximum degree that need to be solved sequentially. We next explain how to efficiently handle each of these. In  particular, we use \Cref{thm:probability-guarantee} to reprove the central shattering claim in \cite[Lemma 5.3]{BEPS16_jacm}. 

Consider $T=(c_3+5)\log_{4/3}\hDelta$ iterations of \OneShotColoring and a $2$-independent set $S$. By \Cref{lem:johasson} each node in $S$ has the opportunity to get colored (be decided) in every round; the probability to miss a single opportunity is at most $3/4$  and the success of opportunities is independent for different nodes in $S$. Thus, in total there are $T|S|$ opportunities and by \Cref{thm:probability-guarantee} the probability that all nodes in $S$ remain undecided is at most $p^{T|S|}\leq \Delta^{-(5+c_3)|S|}$. 

By \Cref{cor:shattering-small-cc} the graph into independent components of size at most $N=\poly \hDelta\log_{\Delta}n\leq \poly\log n$ with high probability. Each of these can be solved via a deterministic algorithm in time $\poly\log N$, e.g., in $O(\log^3 N)=O(\log^3\log n)$ rounds with the algorithm from \cite{GK21}, or polynomially faster with recent faster but more involved deterministic algorithms \cite{GG23,GG24-FOCS}. 
\end{proof}

\subsection{Application II: Ghaffari's seminal MIS algorithm \texorpdfstring{\cite{Ghaffari_soda16}}{}}
\label{sec:mis-ghaffari}

In this section, we present a proof of the following theorem of \cite{Ghaffari_soda16} with some necessary adjustments to the original proof of \cite{Ghaffari_soda16}. For a discussion on how this proof compares to the one in \cite{Ghaffari_soda16}, please refer to the end of this section.

\begin{restatable}{theorem}{thmMIS}[\cite{Ghaffari_soda16,GG24-FOCS} redux]
 \label{thm:MIS}
There is a randomized distributed algorithm that w.h.p.\ computes a maximal independent set in $O(\log\Delta)+\Otilde(\log^{3}\log n)$ rounds. 
 \end{restatable}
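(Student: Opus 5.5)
The proof splits into a pre-shattering phase, analyzed through the opportunity framework, and a post-shattering phase handled by the cluster graph machinery of \Cref{cor:shattering-ball-graph}. The algorithm in the first phase is Ghaffari's desire-level process. Each node $v$ keeps a desire level $p_t(v)$, initially $1/2$. In each round $v$ marks itself with probability $p_t(v)$ and joins the MIS if no neighbor is marked. It halves $p_t(v)$ if the effective degree $d_t(v)=\sum_{u\in N(v)}p_t(u)$ is at least $2$, and otherwise sets $p_{t+1}(v)=\min(2p_t(v),1/2)$. We run this for $T=\Theta(c\log\Delta)$ rounds, for a suitably large constant $c$, and let $B$ be the set of nodes that are still undecided.

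To establish the separated-set bound \argOneLink, we cast the process as a progressive opportunity process (\cref{def:opportunistic-process}) with solving threshold $t=1$. Progress means being removed, either by joining the MIS or by having a neighbor join. We use Ghaffari's two kinds of golden rounds for $v$:
\begin{enumerate}
\item rounds with $d_t(v)<2$ and $p_t(v)=1/2$;
\item rounds with $d_t(v)\ge 1$ and at least a constant fraction of $d_t(v)$ contributed by neighbors $u$ with $d_t(u)<2$.
\end{enumerate}
In either kind of golden round, conditioned on the entire history, $v$ is removed with probability at least some constant $q>0$. This probability depends only on the fresh marks within $2$ hops of $v$, so for a $4$-independent set $S$, the success events of its nodes in a given round are independent conditioned on the past. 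That verifies Properties~\ref{def-prop:opportunity-min-prob} and~\ref{def-prop:opportunity-independence}.

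Property~\ref{def-prop:opportunity-guaranteed-number} is the deterministic combinatorial core of Ghaffari's analysis. If $v$ is undecided after $T$ rounds, then $v$ had at least $\Omega(T)=\Omega(c\log\Delta)$ golden rounds. The argument is a potential count: the number of rounds in which $p(v)$ halves is bounded by the number of rounds in which it doubles plus $\log\Delta$. Rounds with large $d_t(v)$ that are not golden force $d_t(v)$ to shrink geometrically, and these two facts together give the bound. The one subtlety is that this guarantee holds only on the event that $v$ stays undecided. The fix is to declare every round of an already-removed node an opportunity; since such a node has already made its unit of progress, this is consistent. Every node of $S$ then has at least $x=\Omega(c\log\Delta)$ opportunities with probability $1$. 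Applying \Cref{thm:probability-guarantee} to the combined $|S|x$ opportunities gives
\[
\Pr[S\subseteq B]\le (1-q)^{\Omega(c\log\Delta)|S|}\le \Delta^{-(2c_2+\max\{c_2,r\}+c_3+2)|S|},
\]
with $c_2=4$ and $c$ chosen large enough.

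For the post-shattering phase we first compute a $(2,O(\log\log n))$-ruling set $R$ of $G^{c_2}[B]$. By \Cref{cor:shattering-ball-graph}, w.h.p.\ every component of the $(c_2,r)$-cluster graph has at most $t=O(\log n)$ clusters, and each cluster has radius $O(\log\log n)$. On each component we compute a network decomposition of the cluster graph with \cite{GG24-FOCS} in $\Otilde(\log^2\log n)$ cluster rounds. Each cluster round costs $O(\log\log n)$ real rounds, which gives $\Otilde(\log^3\log n)$ in total. We then complete the MIS greedily color class by color class; the extension is valid because MIS is greedy.

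The main obstacle is the golden-round counting. Ghaffari's original argument conditions on $v$ remaining undecided, which interacts badly with the timing correlations. Recasting the counting as a deterministic statement about every history, with the padding opportunities above, is exactly what makes \Cref{thm:probability-guarantee} applicable. That is where I would concentrate the care.
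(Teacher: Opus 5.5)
Your proposal follows the paper's proof essentially step for step. You pad golden rounds for already-decided nodes so that \Cref{lem:golden-rounds} guarantees $|S|T/13$ opportunities, apply \Cref{thm:probability-guarantee} to obtain $\Delta^{-(14+c_3)|S|}$ for $4$-independent sets, and then use a ruling set of $G^{c_2}[B]$, \Cref{cor:shattering-ball-graph}, and a \cite{GG24-FOCS} network decomposition of the cluster graph to get the $\Otilde(\log^3\log n)$ post-shattering phase.

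One slip in your paraphrase of Ghaffari's counting: it is the number of doublings of $p_t(v)$ that is bounded by the number of halvings (since $p_t(v)$ starts at its cap $1/2$), not the reverse, and the $\log\Delta$ term belongs to the potential on $d_t(v)$. This does not affect your argument, because you, like the paper, ultimately rely on the cited \Cref{lem:golden-rounds}.
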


We first present Ghaffari's seminal MIS algorithm \cite{Ghaffari_soda16,JSX_dc16}.

\begin{algorithm}[ht]
    
    Initialize $I \gets \emptyset$ and $C \gets \emptyset$.
        
    \nonl Throughout the algorithm, $I$ is the current set of MIS nodes, and $C$ their neighbors. 
    
    Each $v\in V$ is initially active and sets its local probability of joining the MIS to $p_{0}(v) = 1/2$.
    
    \For{ $t = 0,\dots,T-1$, at each active node $v$, }{
        Compute $d_{t}(v) = \sum_{u \in V \setminus C} p_{t}(u)$.
        
        $v$ marks itself with probability $p_{t}(v)$.
        
        \If{$v$ is marked and $N(v)$ contains no marked node}{$v$ joins $I$. Stop the algorithm for $v$.}
         
        \If{a neighbor of $v$ joined $I$}{$v$ joins $C$, stop the algorithm for $v$.}

        \If{$d_{t}(v) \geq 2$}{let $p_{t+1}(v) \gets p_t(v)/2$}
    
        \Else({($d_{t}(v) < 2$)}){let $p_{t+1}(v) \gets \min(2p_t(v),1/2)$}
    }
       
    \caption{MIS pre-shattering of Ghaffari's algorithm \cite{Ghaffari_soda16,JSX_dc16}, running for $T$ rounds}
    \label[algorithm]{alg:mis-shattering}
\end{algorithm}

We continue with the definition of golden rounds as introduced in \cite{Ghaffari_soda16}. This definition plays a major role in the analysis of the algorithm.

\begin{definition}
\label{def:golden-rounds}
    We say a node is \emph{low-degree} if $d_t(u) < 2$, and $\emph{high-degree}$ otherwise. We say $u$ is in a \emph{golden round} if one of two situations arise: 
    \begin{itemize}
        \item\emph{Golden round of type 1:} if $d_t(u)<2$ ($u$ is low-degree) and $p_t(u)=1/2$.
        \item\emph{Golden round of type 2:} if $d_t(u) \geq 1$ and $\card{d_t^{-1}([0,2]) \cap N(u)\setminus C} \geq d_t(u) / 10$.
    \end{itemize}
\end{definition}

\begin{lemma}[{\cite[Lemma 3.1]{Ghaffari_soda16}}]
    \label{lem:golden-rounds}
    Let $\beta \geq 9$. After $T=\beta(\log \deg(v) + \log 1/\eps)$ rounds of Algorithm \ref{alg:mis-shattering}, a node $v$ either joined $I$ or $C$, or went through $\frac{\beta}{13}(\log \deg(v) + \log 1/\eps)$ golden rounds.
\end{lemma}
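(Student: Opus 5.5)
The statement to prove is \Cref{lem:golden-rounds}: after $T=\beta(\log\deg(v)+\log 1/\eps)$ rounds, $v$ has joined $I$ or $C$, or has had at least $\frac{\beta}{13}(\log\deg(v)+\log 1/\eps)$ golden rounds. The argument is purely deterministic, path by path. Fix any execution in which $v$ stays active for all $T$ rounds, and split the rounds into three classes: golden rounds, non-golden rounds in which $v$ is high-degree ($d_t(v)\ge 2$), and non-golden rounds in which $v$ is low-degree ($d_t(v)<2$). Write $g$, $h$, $\ell$ for their counts, so $g+h+\ell=T$. The goal is $g\ge T/13$.

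\textbf{Step 1: potential argument on $\log_2 p_t(v)$.} The quantity $\log_2 p_t(v)$ starts at $-1$, never exceeds $-1$, and changes by exactly $-1$ in a high-degree round. In a low-degree round it changes by $+1$, or by $0$ if it is already at the cap $1/2$. A low-degree round with $p_t(v)=1/2$ is golden of type 1. Hence every non-golden low-degree round has $p_t(v)<1/2$ and is a genuine $+1$ step. Because the potential is capped at its starting value, the number of $+1$ steps is at most the number of $-1$ steps. This gives $\ell\le$ (number of high-degree rounds) $\le h+g$.

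\textbf{Step 2: bound $h$, the non-golden high-degree rounds; this is the main obstacle.} In such a round $d_t(v)\ge 2$, and fewer than a $1/10$ fraction of $d_t(v)$ comes from neighbors $u$ with $d_t(u)\le 2$. Each remaining neighbor $u$ has $d_t(u)>2$, so it halves its $p$ unless it decides. Therefore
\[
d_{t+1}(v)\le \tfrac{1}{10}\,d_t(v)\cdot 2+\tfrac{9}{10}\,d_t(v)\cdot\tfrac12 \le \tfrac{2}{3}\,d_t(v).
\]
Removed neighbors only help, and the factor $2$ accounts for low-degree neighbors that may double. In any round, $d_{t+1}(v)\le 2d_t(v)$ always holds.

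Initially $d_0(v)\le\deg(v)/2$. The potential $\log d_t(v)$ falls by at least $\log(3/2)$ in each non-golden high-degree round and rises by at most $1$ in every other round. It can rise only in golden rounds or low-degree rounds, and in low-degree rounds $d<2$ anyway.

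The delicate point is to count properly. Each maximal stretch of high-degree rounds starts with $d$ at most $\max(2\cdot 2,\ldots)$, that is, at most $4$ after a low-degree round, or at most $\deg(v)$ at the start. Within the stretch, the up-moves come only from golden rounds. So the total count satisfies roughly $h\cdot\log(3/2)\le \log\deg(v)+O(1)\cdot(\#\text{stretches})+g$, and the number of stretches is at most $\ell+1$.

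I would try to sharpen this by noting that after a low-degree round, $d\le 4$. Descending from there below $2$ takes $O(1)$ non-golden high-degree rounds, and these can be charged against the low-degree rounds via Step 1. The constant $13$ should come out of combining Steps 1 and 2 with $\beta\ge 9$.

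\textbf{Step 3: combine.} Use $h\le c_1\log\deg(v)+c_2 g+c_3\ell$ together with $\ell\le h+g$. If $c_3<1$, this yields $h\le O(\log\deg(v))+O(g)$. Then $T=g+h+\ell\le O(g)+O(\log\deg(v))$, and since $T=\beta(\log\deg(v)+\log 1/\eps)$ with $\beta\ge9$, we can solve for $g\ge T/13$.

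The circularity between $h$ and $\ell$ is where the constants must be checked carefully. If it does not close, I would refine Step 1 to charge each low-degree ascent only against high-degree rounds, making the charging argument explicit.
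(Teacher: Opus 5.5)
\textbf{Gap in Step 2.} Your Step 1 is correct: the $p_t$-potential does give $\ell\le h+g$. But Step 2 does not close as written, and the repair you suggest is circular. You bound the number of high-degree stretches by the preceding low-degree rounds, at most $\ell+g+1$ of them. Each stretch starts from $d<4$ and so allows up to two non-golden high-degree rounds ($4\to 2.6\to 1.69$). This gives only $h\le \log_{3/2}\deg(v)+2\ell+O(g)$, so $c_3\approx 2$, and together with $\ell\le h+g$ nothing follows. Charging those rounds back to low-degree rounds via Step 1 just runs the same inequality in reverse.

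In fact, every constraint you use is satisfied by an execution with no golden round at all. Your constraints are: contraction by $0.65$ in non-golden high-degree rounds, at most doubling of $d_t(v)$ otherwise, and the $p_t$ dynamics. Alternate high and low rounds with $d_t(v)=3,1.5,3,1.5,\dots$ and $p_t(v)=\tfrac12,\tfrac14,\tfrac12,\tfrac14,\dots$. The high rounds have $p=\tfrac12$ but $d\ge 2$; the low rounds have $p=\tfrac14$. This sequence satisfies all of those constraints with $g=0$ for any $T$.

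\textbf{The missing observation.} A non-golden round can never lift $d_t(v)$ from below $2$ to at least $2$. If $d_t(v)<1$, then $d_{t+1}(v)\le 2d_t(v)<2$. If $1\le d_t(v)<2$ and the round is not golden of type 2, then your own computation applies verbatim (low-degree neighbours carry less than $d_t(v)/10$ of the mass) and gives $d_{t+1}(v)\le 0.65\,d_t(v)<2$. This is exactly what excludes the $1.5\to 3$ step above. Hence every maximal high-degree stretch except possibly the first is entered through a golden round, and $\ell$ drops out of the bound on $h$.

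\textbf{How the argument then closes.} Take $\Phi_t=\max\{\log_{3/2}(d_t(v)/2),0\}$. Track its changes by round type:
\begin{itemize}
\item A non-golden high-degree round that stays high lowers $\Phi$ by $\log_{3/2}(1/0.65)>1$.
\item Non-golden high-degree rounds that exit to $d<2$ number at most $g$, plus one if $d_0(v)\ge 2$.
\item Golden rounds raise $\Phi$ by at most $\log_{3/2}2<2$.
\item Non-golden low-degree rounds keep $\Phi=0$.
\end{itemize}
When $d_0(v)=\deg(v)/2\ge 2$ we have $\Phi_0+1\le\log_{3/2}(\deg(v)/4)+1<\log_{3/2}\deg(v)$. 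This yields $h\le \log_{3/2}\deg(v)+3g$, i.e.\ $c_3=0$. Your Step 3 then goes through: $T=g+h+\ell\le 2(g+h)\le 8g+2\log_{3/2}\deg(v)$. So $g<T/13$ would force $T<\tfrac{26}{5}\log_{3/2}\deg(v)<9\log_2\deg(v)$, contradicting $\beta\ge 9$ with base-2 logarithms.
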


Note that the two types do not a priori exclude each other. In fact, it is not too difficult to construct an example in which a node is in a golden round of both types: consider the initial round of the algorithm in a $3$-regular graph of girth $>2$, each node is of low-degree $d_0(u) = 3/2$, has marking probability $p_0(u) = 1/2$, and has only neighbors of low-degree.

\begin{lemma}[{\cite[Lemma 3.2]{Ghaffari_soda16}}]
\label{lem:goldenRoundProbability}
    Consider a node $v$ at some round $j$. If $j$ is a golden round of type $1$ for $v$, $v$ joins the MIS with probability at least $1/200$. If $j$ is a golden round of type $2$ for $v$, then a neighbor of $v$ joins the MIS with probability at least $1/200$. 
\end{lemma}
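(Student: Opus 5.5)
We must prove Lemma (Ghaffari Lemma 3.2): in a golden round of type 1, $v$ joins the MIS with probability at least $1/200$; in a golden round of type 2, a neighbor of $v$ joins with probability at least $1/200$. The plan is to fix the entire state at the start of round $j$, that is, all $p_j(\cdot)$, $d_j(\cdot)$ and the sets $I, C$. Under this conditioning, the marks in round $j$ are independent Bernoulli($p_j(u)$) variables. Every bound below is a one-round computation over these fresh coins only, so no long-range dependence enters.

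\textbf{Type 1.} Here $p_j(v)=1/2$ and $d_j(v)<2$. The node $v$ joins $I$ if it marks and no active neighbor marks. By independence this probability is
\[
\tfrac12\prod_{u\in N(v)\setminus C}(1-p_j(u)).
\]
Since every $p_j(u)\le 1/2$, we may use $1-x\ge 4^{-x}$ for $x\in[0,1/2]$. This gives the lower bound $\tfrac12\,4^{-d_j(v)}\ge \tfrac12\cdot 4^{-2}=1/32\ge 1/200$.

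\textbf{Type 2.} Let $L=\{u\in N(v)\setminus C: d_j(u)\le 2\}$. We need a lower bound on $\sum_{u\in L}p_j(u)$, and here the definition must be read carefully. The natural reading, following Ghaffari, is that the $p$-mass of the low-degree neighbors is at least $d_j(v)/10\ge 1/10$. If that mass exceeds $1/2$, I would first discard elements of $L$ to get a subset $L'$ whose mass lies in $[1/10, 1]$; this is possible because each $p_j(u)\le 1/2$.

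For $u\in L'$, let $A_u$ be the event that $u$ marks and no neighbor of $u$ marks. This event implies that $u$ joins $I$, and
\[
\Pr[A_u]\ge p_j(u)\,4^{-d_j(u)}\ge p_j(u)/16.
\]
The events $A_u$ for distinct $u\in L'$ are pairwise disjoint. Indeed, two such events cannot both hold when $u,u'$ are adjacent. When they are not adjacent, both could still hold, so I would instead bound the probability that \emph{some} $A_u$ holds using inclusion–exclusion or Bonferroni:
\[
\Pr\Bigl[\bigcup_u A_u\Bigr]\ge \sum_u \Pr[A_u]-\sum_{u\ne u'}\Pr[u,u'\text{ both mark}].
\]
The first sum is at least $\sigma/16$ with $\sigma=\sum_{L'}p_j$, and the second is at most $\sigma^2/2$.

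\textbf{Main obstacle.} With $\sigma$ as large as $1$, the bound $\sigma/16-\sigma^2/2$ is negative. So the truncation must be tighter: I would pick $L'$ with mass $\sigma\in[1/20,1/10]$, which is possible by greedy removal since each $p_j(u)\le1/2$. That still does not work: when $\sigma=1/10$, we get $\sigma/16-\sigma^2/2=1/160-1/200>0$, but only barely, and when $\sigma=1/20$ the bound is negative.

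The better tool is the standard sequential argument. Look at the first $u$ in $L'$, in a fixed order, that marks. Condition on that $u$ marking, on no earlier node of $L'$ marking, and on nothing about neighbors of $u$ outside $L'$. By independence, the probability that no neighbor of $u$ marks is then at least $4^{-2}$. The probability that some node of $L'$ marks is at least $1-e^{-\sigma}\ge \sigma/2\ge 1/20$. Multiplying gives $1/320$, which is short of $1/200$.

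The constants therefore need tuning. If I keep the whole mass $\sigma\ge 1/10$ rather than truncating, then $1-e^{-\sigma}$ only increases. The conditioning on earlier $L'$-nodes being unmarked can only help, because those nodes are then known not to mark. Using $d_j(u)\le 2$ and the factor $(1-p)\ge 4^{-p}$ for the neighbors, I would aim to reach $\tfrac1{16}(1-e^{-1/10})\approx 0.0059\ge 1/200$. The delicate step is justifying the conditioning rigorously, since the neighbors of $u$ may overlap $L'$.
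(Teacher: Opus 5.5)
Your final route is correct and is exactly the proof of Ghaffari's Lemma~3.2, which the paper cites without reproducing. Conditioning on the state at the start of round $j$, type~1 gives $\frac12\cdot 4^{-d_j(v)}\ge 1/32$. For type~2, exposing the low-degree neighbors' marks in a fixed order finds a marked one with probability at least $1-e^{-1/10}$; the first marked $u$ then has no marked neighbor with conditional probability at least $\prod_{w\in N(u)\setminus C}(1-p_j(w))\ge 4^{-2}$, for a total of $(1-e^{-1/10})/16>1/200$.

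The conditioning you call delicate is not an obstacle: neighbors of $u$ exposed earlier are known to be unmarked, which only helps, and all its other neighbors, including later members of $L$, are unexposed and independent of the conditioning event. Your $p_j$-mass reading of the type-2 condition is the one needed, since the cardinality condition as printed does not give $\sum_{u\in L}p_j(u)\ge 1/10$. The truncation and Bonferroni detours should be dropped, as some of their claims are false; for example, a subset of $L$ with mass in $[1/20,1/10]$ need not exist when all $p_j(u)=1/2$.
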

The original Lemma 3.2 in \cite{Ghaffari_soda16} further claims that the events depend solely on the $1$-hop and $2$-hop neighborhoods, respectively. However, as discussed at the end of this section, this holds true only within a single iteration of the algorithm, assuming the prior randomness is fixed. Thus it cannot be combined with \Cref{lem:Shattering} to show that the process shatters.

The main contribution of this section is the proof of the following lemma showing that the probability for an $4$-independent set of node to participate in the post-shattering phase is small. It replaces the faulty counterpart in \cite{Ghaffari_soda16}, see the end of this section for more details on the analysis in the original paper. The rest of the algorithm and analysis remains unchanged. 

\begin{lemma}
\label{lem:MISSetProbability}
Let $S$ be a $4$-independent set of size $|S| \geq 10^5$ and $c_3 \geq 0$. Then after $T> 13(14+c_3) \log_{200/199} \Delta$ iterations of \Cref{alg:mis-shattering} (Ghaffari's MIS algorithm), the probability that all nodes of $S$ are undecided is at most $(199/200)^{|S|\cdot T/13}\leq \Delta^{-(14+c_3)|S|}$. 
\end{lemma}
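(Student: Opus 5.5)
We cast Algorithm~\ref{alg:mis-shattering}, run on the fixed $4$-independent set $S$, as a progressive opportunity process in the sense of \Cref{def:opportunistic-process}, and then invoke \Cref{thm:probability-guarantee}. The step of the process is a single iteration of the algorithm. The progress counter of $v\in S$ is $1$ once $v$ is decided (it joined $I$ or $C$) and $0$ otherwise, with solving threshold $t=1$. A node $v\in S$ has an opportunity in iteration $j$ if, given the state before iteration $j$, either $v$ is already decided or $j$ is a golden round (of either type) for $v$. For a decided node the opportunity succeeds trivially. For an undecided node in a golden round, \Cref{lem:goldenRoundProbability} shows that, conditioned on all earlier randomness, $v$ becomes decided in this iteration with probability at least $1/200$. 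Here we use that $v$ joins $I$ in type~1, and in type~2 a neighbor joins $I$, so $v$ joins $C$. Hence the progress probability is $q=1/200$ and the failure probability is $p=199/200$.

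The independence property (item~\ref{def-prop:opportunity-independence}) is where the $4$-independence is used, and it is the only place where locality enters. We fix the state before iteration $j$, which includes all values $p_j(\cdot)$, $d_j(\cdot)$, $I$ and $C$. Within iteration $j$, whether $v$ joins $I$ or $C$ is determined by the fresh marking coins of nodes within distance $2$ of $v$ (the marks of $v$'s neighbors and of their neighbors). Since $S$ is $4$-independent, nodes of $S$ are at distance at least $5$. The $2$-hop balls around distinct nodes of $S$ are therefore disjoint, and the corresponding success events depend on disjoint sets of independent coins of this iteration. So, conditioned on the past, the successes in iteration $j$ are independent across $S$. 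This is exactly the single-iteration locality that remains valid, as opposed to the cross-iteration independence used in the original argument.

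Next we must guarantee the opportunity count. Applying \Cref{lem:golden-rounds} with $\eps$ and $\beta$ chosen so that $\beta(\log\deg(v)+\log 1/\eps)\le T$, every $v\in S$ that is still undecided after $T$ iterations has gone through at least $T/13$ golden rounds. A node that became decided earlier gets a trivially successful opportunity in every later iteration. The main obstacle is to make this deterministic "at least $x$ opportunities in every execution" statement hold per node, since \Cref{lem:golden-rounds} is phrased for fixed $\deg(v)\le\Delta$. I would handle it by defining opportunities so that in every execution each $v\in S$ accumulates at least $\lfloor T/13\rfloor$ of them. Summing over $S$ gives $x\ge |S|\cdot T/13$ (up to rounding, which the assumption $|S|\ge 10^5$ absorbs), which meets item~\ref{def-prop:opportunity-guaranteed-number}.

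Applying \Cref{thm:probability-guarantee}, the probability that no node of $S$ makes progress, i.e.\ that all of $S$ is undecided, is at most $(199/200)^{|S|T/13}$. Plugging in $T>13(14+c_3)\log_{200/199}\Delta$ gives $(199/200)^{T/13}<\Delta^{-(14+c_3)}$, and hence the bound $\Delta^{-(14+c_3)|S|}$.
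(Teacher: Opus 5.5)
Your proposal is correct and is essentially the paper's proof. Golden rounds, with every round counted as golden once a node is decided, serve as opportunities with failure probability $199/200$. \Cref{lem:golden-rounds}, applied with $\beta(\log\deg(v)+\log 1/\eps)=T$ exactly, supplies at least $T/13$ of them per node, and \Cref{thm:probability-guarantee} finishes the argument. Your explicit check that $4$-independence makes the per-iteration successes independent (via disjoint $2$-hop balls) spells out a step the paper leaves implicit.

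One small correction: there is no rounding loss to absorb, because the number of golden rounds is an integer that is at least $T/13$. In any case, $|S|\geq 10^5$ could not absorb losing one opportunity per node, since that would cost a factor $(200/199)^{|S|}$.
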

\begin{proof}
Consider a $4$-independent set $S$. If we modify the definition of a golden round such that any round is golden for a vertex once it is decided \Cref{lem:golden-rounds} implies that the nodes collectively experience $|S|\cdot T/13$ golden rounds during the $T$ iterations of the algorithm. Due to \Cref{lem:goldenRoundProbability} a node fails to be decided with probability $199/200$ in a golden round. Interpreting golden rounds as an opportunity, \Cref{thm:probability-guarantee} implies that the probability that all nodes in $S$ are undecided at the end of the $T$ rounds is at most $p^{|S|\cdot T/13}\leq \Delta^{(14+c_3)|S|}$~. 
\end{proof}

We next provide a proof sketch of \Cref{thm:MIS}, which given the proof of \Cref{lem:MISSetProbability} and the general shattering lemma is along the same lines as in \cite{Ghaffari_soda16}.
\begin{proof}[Proof Sketch of \Cref{thm:MIS}]
Let $B$ be the set of nodes that are undecided after $O(\log\Delta)$ rounds of \Cref{alg:mis-shattering}. 
By \Cref{lem:MISSetProbability} any $4$-independent set of nodes $S\subseteq V$ is contained in $B$ with probability at most $\Delta^{(14+c_3)|S|}$.  Use the algorithm of \cite{SEW13} to compute a $(2,z)$ ruling set of $G^5[B]$ with $z=O(\log\log n)$ in $O(\log\log n)$ rounds. Then, form the cluster graph according to \Cref{def:ballGraph} with $r=1$ and $c=3$. 
By \Cref{lem:ShatteringGeneral,cor:shattering-ball-graph} any connected component of the cluster graph has at most $N=\log_{\Delta}n$ nodes. Compute a network decomposition with diameter $d=O(\log N)=O(\log \log n)$ and $c=O(\log N)=O(\log\log n)$ colors classes via the algorithm of \cite{GG24-FOCS} in $z\cdot O(\log^2 N \poly\log\log N=\Otilde(\log^3 \log n)$ rounds. 
Then lift the network decomposition back to a network decomposition of $G[B]$ with the same parameters, except that the cluster diameter bound increases by a factor $5z$. Finally iterate through the color classes of the network decomposition and solve the MIS problem on all clusters with the same color in parallel (taking choices of already processed neighboring clusters into account). The runtime of this last step is $\mathsf{clusterDiameter}\cdot \mathsf{\#colorClasses}=O(\log\log n\cdot O(z\log\log n)=O(\log^3\log n)$.  The runtime is dominated by the computation of the network decomposition. 
\end{proof}

\paragraph{The algorithm/analysis in \cite{Ghaffari19}.}
The abstract and \Cref{lem:goldenRoundProbability} contain the following sentence that forms the basis of their shattering argumentation: \emph{"The guarantee holds even if the randomness outside 2-hops neighborhood of $v$ is determined adversarially"}. Intuitively, one may believe that hence the events of being decided for the nodes in any $5$-independent set are independent and hence the argument should immediately provide a shattering phenomenon. But this intuition is misleading. In \Cref{sec:counterExample}, we provided an counterexample where a process satisfying such a property does not shatter; simply put the used respective shattering lemma (\Cref{lem:shatteringWrong}) is not correct. One way out would be to use the correct shattering lemma from \cite{FG17}, see \Cref{lem:Shattering}. We next illustrate that this clearly cannot be done by introducing a different view on the algorithm. Throughout the execution of the algorithm all $p_t(v)$ values of the nodes are in the set $\cP=\{1/2^i,1\leq i\leq O(\log\Delta)\}$. We can reformulate the algorithm in a way that every every node in every round flips $|\cP|$ coins, that is, one coin for each  value in $\cP$. The outcome of the algorithm is identical, as a node $v$ discards most of these coins flips and acts only according to the value of the coin corresponding to its actual $p_t(v)$ value. Now the statement that the status of a node only depends on the outcome of the coins in its $O(1)$-hop neighborhood (as it would be required for \Cref{lem:Shattering}) is clearly wrong. The outcome of a node in round $i$ is influenced by the coin flips in its $i-1$-neighborhood. In fact, the outcome of the coin flips in previous rounds determine the $p_t(v)$ values and hence they determine which coins nodes use/discard in a specific round. Hence, despite the fact that in a golden round a node has a constant probability to be decided regardless of the randomness outside its $2$-hop neighborhood chosen by an oblivious adversary,  the events whether a node is decided during the whole execution is not guaranteed to be independent for a set of nodes that is only $O(1)$-independent. It is unlikely that the same statement as in \Cref{lem:goldenRoundProbability} would be true for an adaptive adversary that can see all $|P|$ coins flips of all nodes inside the $2$-hop neighborhood.

\subsection{Application III: Earlier MIS Algorithm (Section 3 in 
\texorpdfstring{\cite{BEPS16_jacm}}{[BEPS16, JACM]})}
\label{sec:mis-beps}
In this result, failure corresponds to neither being in the independent set nor being adjacent to it. Let $I$ be the independent set constructed by their procedure, and $\hGamma(I) = I \cup N(I)$. $\hGamma(I)$ is the set of \emph{unsolved} nodes, a node being \emph{solved} whenever it or one of its neighbors joins the MIS.

The authors emphasize that the events that nodes join $\hGamma(I)$ are independent for nodes at a distance at least $5$ from each other \emph{in any given round}. Still, similar to Ghaffari's algorithm the reasoning about the survival probability of any $4$-independent set is more tricky. The reason for this is, in contrast to the coloring algorithm from \Cref{sec:coloring-beps}, that nodes do not automatically have a good probability to be decided in every round. Instead, similar to Ghaffari's algorithm, the authors show that each node has some $c\log \Delta$ iterations in each of which it is decided with constant probability, but there are dependencies between distant nodes about when this is going to happen. Here, the constant $c$ can be arbitrarily chosen, but it influences the constant in the $O$-notation of the algorithms' runtime, see \Cref{alg:BEPSMis}.

Still, using the language of opportunity decision processes we obtain that there are $|S|\cdot c\cdot \log \Delta$  opportunities in total for the nodes in a $4$-independent set of nodes $S$.  By \Cref{thm:probability-guarantee} the probability that each node in $S$ is undecided after running the algorithm is at most $p^{c\log \Delta|S|}\leq \Delta^{-(14+c_3)|S|}$, using that $c$ is a sufficiently large constant. This is sufficient to apply \Cref{lem:ShatteringGeneral,cor:shattering-ball-graph} to show that the process shatters the graph.

\paragraph{The analysis in \cite{BEPS16_jacm}.} We now explain the reasoning in  \cite{BEPS16_jacm} in more detail. In scale $k$, where all the nodes have degree at most $\Delta/2^{k-1}$, in each Luby step, a node of degree $>\Delta/2^k$ joins $\hGamma(I)$ w.p.\ at least $(1-e^{-1/2})e^{-1}>1/7$ (Lemma 3.1 in \cite{BEPS16_jacm}).  A node that is in the bad set $B$ at the end of the algorithm had degree above $\Delta/2^k$ throughout the scale $k$ at the end of which it was added to $B$. Therefore, each node has a probability of surviving that is at most $(6/7)^\ell$ where $\ell = c \log \Delta$ is the length of each scale, which is $< \Delta^{-c/5}$ as $(6/7)^5 < 1/2$. Additionally, for a $4$-independent set of nodes $S$ we have independence of the events that they get solved in any given Luby round, when conditioned on arbitrary behavior in previous rounds. The authors claim that this property implies that such a set $S$ stays unsolved with probability at most $\Delta^{-c' \card{S}}$ with $c$ taken so that $c \geq 5c'$, and with $c'$ a large enough constant, the algorithm shatters. An easy way to show this despite the fact that opportunities of different nodes in different rounds depend on each other is to use Theorem~\ref{thm:opportunityProcess}; here an opportunity for a node is \emph{successful} if it gets solved.

Another minor issue in the proof in \cite{BEPS16_jacm} is that nodes of higher degree should have priority over lower degree nodes when it comes to joining the MIS (see proof of Lemma 3.1). Otherwise, one does not get that a node which tries to join the MIS has a constant probability to join it. Indeed, consider a node $v$ of high degree, adjacent to many of small degree, which all try to join the MIS with much higher probability than $v$. If they can prevent $v$ from joining the MIS, $v$ does not have a good chance of success when making an attempt to join: it could be as low as $2^{-\deg(v)}$ if $v$ only has neighbors of degree $1$.

\subsection{Application IV: Maximal Matching (Section 4 in 
\texorpdfstring{\cite{BEPS16_jacm}}{[BEPS16, JACM]})}
\label{sec:mm-beps}
Their maximal matching algorithm of \cite{BEPS16_jacm} first uses $O(\log \Delta)$ rounds to reduce the maximum degree of the unsolved part to $\Delta'=\poly\log n$ w.h.p. Afterwards they use a classic shattering procedure yielding unsolved components of size $\poly\Delta'\log n=\poly\log n$ that can be solved independently in post-shattering.
We focus on their second contribution here, their shattering argument. The crux of the argument is the following lemma about their main procedure, inspired by an earlier algorithm by Israeli and Itai \cite{II86}.

\begin{lemma}[Lemma 4.2 in \cite{BEPS16_jacm}]
    \label{lem:mm-degree-drop}
    For any integer $i$ let $\deg_i(v)$ be the degree of $v$ in iteration $i$ of \cref{alg:mm-beps}. Then in any round $i$
    \[
    \Pr[\deg_{i+1}(v) \leq \frac{3}{4}\deg_i(v) ] \geq \frac{1}{4}
    \ .\]
\end{lemma}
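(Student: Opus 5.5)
The plan is a first-moment computation followed by a reverse Markov inequality. We have not reproduced \cref{alg:mm-beps} here, so we take it to be the Israeli--Itai style step of \cite{BEPS16_jacm}: in each iteration every unmatched node proposes along a random incident edge, proposals are resolved into a collection of short paths and cycles, a random matching is selected on that collection, and matched nodes together with their edges are removed. Fix an iteration $i$ and condition on the entire history before iteration $i$. This conditioning is legitimate because the lemma is a statement about a single round whose randomness is fresh. Let $\deg_i(v)=D$ and let $X$ be the number of edges of $v$ that disappear during the iteration, i.e., the number of neighbors of $v$ that become matched, with $X=D$ if $v$ itself is matched. The goal is to show
\[
\Pr\bigl[X \ge D/4\bigr] \ge 1/4 .
\]

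\textbf{Step 1: expectation bound.} We first lower bound $\mathbb{E}[X]$. For each neighbor $u$ of $v$, we bound from below the probability that $u$ gets matched, using the local structure of the step. For instance, if $v$ proposes to $u$, or some other neighbor proposes to $u$ and that edge survives the path/cycle resolution, then $u$ is matched with constant probability. Summing over $u\in N(v)$ gives $\mathbb{E}[X]\ge \gamma D$ for an explicit constant $\gamma$. If the per-neighbor bound is too weak for low-degree-biased configurations, we split into two cases. In the first case $v$ itself is matched with constant probability, and then $X=D$. In the second case, many neighbors of $v$ have small degree, each of them receives or makes a proposal that survives with constant probability, and the bound follows by linearity.

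\textbf{Step 2: reverse Markov.} Since $0\le X\le D$, applying Markov to $D-X$ gives
\[
\Pr[X < D/4] \le \frac{D-\mathbb{E}[X]}{3D/4},
\qquad\text{hence}\qquad
\Pr[X\ge D/4]\ge \frac{\mathbb{E}[X]-D/4}{3D/4}.
\]
This is at least $1/4$ once $\gamma\ge 7/16$. If Step 1 only yields a smaller constant, we instead combine the case split directly. On the event that $v$ is matched we have $X=D$ outright. On the complementary branch we would try to use that the matching decisions on distinct paths are independent, and apply a second-moment (Chebyshev) bound on $X$ restricted to neighbors lying in distinct components of the proposal graph.

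\textbf{Main obstacle.} We expect the main difficulty to be Step 1 with a constant large enough for the clean $1/4$. Neighbor matchings are correlated, since several neighbors can lie on the same proposal path, so a naive linearity bound may give only a small $\gamma$. The fix we would try first is to count only one matched neighbor per path through $v$'s neighborhood, and exploit that $v$'s own proposal and acceptance involve only $O(1)$ paths. The exact constants must be matched to the precise rules of \cref{alg:mm-beps} as stated in \cite{BEPS16_jacm}.
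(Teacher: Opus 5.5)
The paper does not prove this lemma; it imports it from \cite{BEPS16_jacm}. Your argument therefore has to stand on its own, and as written it has a genuine gap. Step~1 carries all the content, and it is never established; the per-neighbour reasoning you sketch for it is false. Let $H$ be the graph of accepted proposals. In- and out-degrees in $H$ are at most $1$, so $H$ is a disjoint union of paths and cycles, which need not be short. Only the target of $v$'s own proposal is guaranteed an incoming $H$-edge. Any other neighbour $u$ can have $H$-degree $0$ with probability $1-o(1)$. This happens, for example, if $u$'s other neighbours have degree much larger than $\deg(u)$ (so they rarely propose to $u$), and $u$ has low ID while its proposal usually lands on a vertex that also receives a higher-ID proposal. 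If all neighbours of $v$ look like this, the expected number of matched neighbours is $o(D)$, and no constant $\gamma$ comes out of linearity over $N(v)$. The Israeli--Itai condition ``many neighbours of small degree'' describes good vertices, which account for a constant fraction of the edges; the lemma, however, is claimed for every $v$.

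What you can actually prove is a dichotomy. First, condition on $H$. Then every vertex with positive $H$-degree is matched with probability at least $1/2$: an $H$-edge $y\to z$ enters the matching iff $b(y)=0$ and $b(z)=1$, and tails (resp.\ heads) of $H$-edges have bit $0$ (resp.\ $1$) with probability at least $1/2$. Second, $v$ has $H$-degree $0$ only if its uniformly random target received a higher-ID proposal from another vertex. Put $a=\Pr[\deg_H(v)>0]$ and let $Y$ be the number of matched neighbours. Then $\Pr[v \text{ matched}]\ge a/2$ and $\mathrm{E}[Y]\ge (1-a)D/2$. Hence you only get $\mathrm{E}[X]\ge \max\{a,1-a\}\,D/2\ge D/4$, and at $D/4$ your reverse-Markov step gives nothing. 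Even if you treat ``$v$ matched'' and $Y$ as independent, these bounds give exactly $\min_a\bigl(a/2+(1-a/2)(1-a)/2\bigr)=7/16$, so there is no slack.

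The expectation route is the wrong tool. If $a\ge 1/2$ you are done directly: $v$ is matched with probability at least $1/4$, and then $\deg_{i+1}(v)=0$. If $a<1/2$, you need a lower-tail bound $\Pr[Y\ge\lceil D/4\rceil]\ge 1/4$, possibly combined with the event that $v$ is matched. In this case the available lower bound on $\mathrm{E}[Y]$ is only marginally above $D/4$. A Chebyshev bound cannot give constant probability at a threshold that close to the mean. Your ``distinct components'' restriction also requires conditioning on the random graph $H$, whose components are coupled through shared proposers.

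Small $D$ needs separate treatment as well. For $D\le 4$ a single removed neighbour suffices, and this holds with probability at least $1/2$ because $v$'s target always has an incoming $H$-edge. For $D\ge 5$ you need at least two matched neighbours. Until the case $a<1/2$ is handled by a genuine concentration or median-type argument that is valid for every $D$, the proof is incomplete.
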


Furthermore, conditioned on previous rounds, whether a node is solved in any given round of their procedure is only dependent on random choices within distance $3$ of the node\footnote{They actually use a procedure with dependencies at higher distance, but only half of it is relevant for \cref{lem:mm-degree-drop}, which is the part presented in \cref{alg:mm-beps}}.

They say that a stage (iteration) of the algorithm is a \emph{successful stage} for a node $v$ if its degree reduces by at least a $3/4$ factor in that iteration, which \cref{lem:mm-degree-drop} shows occurs w.p.\ at least $1/4$. Observe that any stage is successful for a node once all its edges incident are decided.
Let $X_v$ be the random variable counting the number of successful stages of node $v$. As each stage is successful with probability at least $1/4$, for a $T$-round procedure the random variable $X_v$ is distributed like the number of heads when throwing $T$ coins that show head with probability $1/4$. But these random variables are not independent for different nodes. Still, by \Cref{thm:probability-guarantee} applied with $t_v=T/8$ we obtain  that the probability that all nodes in a $5$-independent set $S$ see less than $T/8$ successful stages is at most
\begin{align}
     \prod_{v\in S} \Pr[s_{T}\leq t_v]\leq e^{-T/12 |S|}=\Delta^{-(8+c_3)|S|}. 
 \end{align}
 Here $s_{x_v}$ is the random variable representing the number of successful outcomes (heads) when flipping $x_v$ coins with the probability of obtaining head being $1/4$. The last inequality holds when $T\geq 12\cdot (8+c_3)\log \Delta$. This is sufficient to apply \Cref{lem:ShatteringGeneral,cor:shattering-small-cc} to prove that their algorithm shatters the graph.

\paragraph{The analysis in \cite{BEPS16_jacm}} For a $5$-independent set $S$ let $X=\sum_{v\in S}X_v$ denote the collective number of successful stages of its nodes.  By linearity of expectation $E[X]\geq |S|/4 \cdot T$ if we run $T$ iterations of the algorithm. One can show  that if $X$ is close to its expectation and for a sufficiently large $T=O(\log_{4/3} \Delta')$ then deterministically at least one of the nodes in $S$ has experienced enough successful stages to not participate in the post-shattering phase. The authors proceed to show that $X$ is well-concentrated around its expectation bounding the probability that all of $S$ participates in the post-shattering phase. All these definitions appear in the proof of Lemma~4.3 \cite{BEPS16_jacm}. Unfortunately, the reasoning that $X$ is well-concentrated around its expectation is faulty. They apply a Chernoff bound for variables with negative correlation indicating that the variables $X_v, v\in S$ are negatively correlated.
However, no argument is provided for this negative correlation.

To establish a shattering argument one can use \Cref{thm:opportunityProcessGeneralized} to bound the probability that an independent set of nodes participates in the post-shattering phase despite potential long-range dependencies between occurring opportunities. Here, an opportunity  is \emph{successful} for a node if its degree goes down by a constant multiplicative factor. Every single round is an opportunity in their algorithm. Observe that once a node is solved, its degree is $0$, so it keeps decreasing by a constant factor in each subsequent round and each additional round is another successful opportunity for that node; a node with $\log \Delta$  or more successful opportunities is necessarily solved.

\section{Explicit Examples of Non-Independence in \texorpdfstring{\cite{FG17}}{Fischer--Ghaffari}}
\label{app:examples}

\begin{restatable}{example}{LLLFreezingExample}
\label{ex:lll-nonindependent-freezing}
\label{example:FG}
Consider a cycle where nodes are indexed from $v_1$ to $v_n$, and edges are similarly labeled from $e_1$ to $e_n$, such that two edges incident to the node $v_i$, $i \in [n]$, are the edges $e_i$ and $e_{i+1}$ (with indices taken $\mod n$. Let each edge sample $4$ bits, and let $S_i$ be the random variable of the bits sampled by edge $e_i$. Let the bad event at node $i$ be $(S_i = 0000 \wedge S_{i+1} \neq 0000) \vee (S_{i+1} = 0000 \wedge S_{i} \neq 0000)$.
Let $p=2\cdot \frac{1}{16}\cdot \frac{15}{16}$ be the probability of each bad event.
We consider a progressive sampling process where the order in which events sample their random variables is from $v_1$ to $v_n$, and vertex $v_i$ samples the variables at $e_i$ before $e_{i+1}$.
Equivalently, the edges are considered from $e_1$ to $e_n$ to have their bits sampled. If the probability of a bad event exceeds $\sqrt{p}$ given the current sampling, then all the unsampled random variables of that bad event are frozen, i.e., they will not be sampled later in the sampling process. Let $F_j$ be the event that event $j$ has one or more frozen variables, then for any $i \in [3,n-1]$,
\begin{align}
    \Pr[F_1 \wedge F_i] 
    & = \frac{115}{1536} + \parens*{-\frac{1}{8}}^{i-1} \cdot \frac{49}{12}
    &
    \Pr[\neg F_1 \wedge F_i] & = \frac{1127}{4608} - \parens*{-\frac{1}{8}}^{i-1} \cdot \frac{2401}{576}
    \\
    \Pr[F_1 \wedge \neg F_i] & = \frac{245}{1536} - \parens*{-\frac{1}{8}}^{i-1} \cdot \frac{49}{12}
    &
    \Pr[\neg F_1 \wedge \neg F_i] & = \frac{2401}{4608} + \parens*{-\frac{1}{8}}^{i-1} \cdot \frac{2401}{576}
\end{align}
Notably, $\Pr[F_1 \wedge F_i] \neq \Pr[F_1] \cdot \Pr [F_i]$.
\end{restatable}

\begin{figure}[h]
\centering
\begin{tikzpicture}[every node/.style={circle, draw, inner sep=1pt, minimum size=6mm}]
\node (0) at (0,0) {$v_n$};
  \node (1) at (1.5,0) {$v_1$};
  \node (2) at (3,0) {$v_2$};
  \node (3) at (4.5,0) {$v_3$};
  \node (4) at (6,0) {$v_4$};
  \node (5) at (7.5,0) {$v_5$};
\draw[dashed] (-1,0) -- (0);
  \draw[dashed] (5) -- (8.5,0);
\draw (0) -- node[above=3pt, draw=none, fill=none] {$e_1$} (1);
  \draw (1) -- node[above=3pt, draw=none, fill=none] {$e_2$} (2);
  \draw (2) -- node[above=3pt, draw=none, fill=none] {$e_3$} (3);
  \draw (3) -- node[above=3pt, draw=none, fill=none] {$e_4$} (4);
  \draw (4) -- node[above=3pt, draw=none, fill=none] {$e_5$} (5);
\end{tikzpicture}
\label{fig:counterExample}
\caption{Labeling of the vertices and edges in the counterexample.}
\end{figure}

\begin{proof}
    The probability that an edge samples $0000$ is $1/16$, and the probability that it samples a value $\neq 0000$ is $15/16$. As a result, the bad event at node $i$ occurs with probability $2\cdot \frac{1}{16} \cdot \frac{15}{16} = 15/128$. The threshold for freezing is therefore $\sqrt{p} = \sqrt{15/128}\approx 5.48/16 \approx 43.8/128 \in (1/4,3/8)$.

    Let $B_i$ be the indicator random variable for the event at node $v_i$ occurring. Let us denote unsampled random variables at an edge by ${*}$, e.g., $e_i =01{*}{*}$ indicates that the first two random bits of $e_i$ were sampled (with values $0$ and $1$), while the last two are still unsampled. Consider what happens when sampling the variables of the first edge, $e_1$. Initially, $\Pr[B_i] = 15/128$. If the first sampled bit is $1$ ($e_1 = 1{*}{*}{*}$), then $e_1 \neq 0000$ is guaranteed, so $\Pr[B_1 \mid e_1 = 1{*}{*}{*}] = \Pr[e_2=0000 \mid e_1 = 1{*}{*}{*}] = 1/16$, which is $<\sqrt{p}$. Otherwise ($e_1 = 0{*}{*}{*}$), $e_1$ now has a probability of $1/8$ of being all $0$, so $\Pr[B_1 \mid e_1 = 0{*}{*}{*}] = 1/8*15/16 + 7/8*1/16 = 22/128 < \sqrt{p}$. Sampling a second $0$ yields a probability of $\Pr[B_1 \mid e_1 = 00{*}{*}] = 1/4*15/16 + 3/4*1/16 = 18/64 < \sqrt{p}$ of $B_1$ occurring, and a third $0$ gives a probability $\Pr[B_1 \mid e_1 = 000{*}] = 1/2*15/16 + 1/2*1/16 = 15/32$, exceeding the freezing threshold $\sqrt{p}$. Thus, the probabilities of the various outcomes of $e_1$ when sampling its random bits in the progressive sampling process are $\Pr[e_1 = 000{\froz}] = 1/8$ and $\Pr[e_1 \neq 0000] = 7/8$ (where $\neq 0000$ corresponds to sampling one of the $14$ $4$-bit strings with a $1$ in the first $3$ bits).

    Doing a similar analysis for $e_2$ shows that:
    \begin{itemize}
      \item $\Pr[e_2 = \froz\froz\froz\froz \wedge e_1 = 000\froz] = 1/8$
      \item $\Pr[e_2 = 000\froz \wedge e_1 \neq 0000] = 1/8\cdot7/8 = 7/64$,
      \item $\Pr[e_2 \neq 0000 \wedge e_1 \neq 0000] = 7/8\cdot7/8 = 49/64$.
    \end{itemize}

    More generally, for any $A \in \set{F_1, \neg F_1,F_1\vee\neg F_1}$, we have the following:
    \begin{itemize}
        \item $\Pr[e_{i+1} = \froz\froz\froz\froz \mid e_i = 000\froz \wedge A] = 1$,
        \item $\Pr[e_{i+1} = 000\froz \mid e_i = \froz\froz\froz\froz \wedge A] = 1/8$,
        \item $\Pr[e_{i+1} = 000\froz \mid e_i \neq 0000 \wedge A] = 1/8$,
        \item $\Pr[e_{i+1} \neq 0000 \mid e_i \neq 0000 \wedge A] = 7/8$,
        \item $\Pr[e_{i+1} \neq 0000 \mid e_i = \froz\froz\froz\froz \wedge A] = 7/8$,
    \end{itemize}
    For the remaining combinations of possible outcomes of the random variables of $e_{i+1}$ and $e_i$, e.g., for $\Pr[e_{i+1} =\froz\froz\froz\froz \mid e_i = \froz\froz\froz\froz \wedge A]$, all equal to $0$. 
    
    Let $X_{A,i}$ be defined as
    \begin{equation}
    X_{A,i}
    =
        \begin{pmatrix}
        \Pr[e_{i+1} = \froz\froz\froz\froz \wedge e_i = 000\froz \wedge A]
        \\
        \Pr[e_{i+1} = 000\froz \wedge e_i = \froz\froz\froz\froz \wedge A]
        \\
        \Pr[e_{i+1} \neq 0000 \wedge e_i = \froz\froz\froz\froz \wedge A]
        \\
        \Pr[e_{i+1} = 000\froz \wedge e_i \neq 0000 \wedge A]
        \\
        \Pr[e_{i+1} \neq 0000 \wedge e_i \neq 0000 \wedge A]
        \end{pmatrix}
    \end{equation}
{\color{red}

}
    We have
    \begin{equation}
    \label{eqn:recursion}
    X_{A,i+1}
    =
        \begin{pmatrix}
        0 & 1 & 0 & 1 & 0
        \\
        1/8 & 0 & 0 & 0 & 0 
        \\
        7/8 & 0 & 0 & 0 & 0 
        \\
        0 & 0 & 1/8 & 0 & 1/8 
        \\
        0 & 0 & 7/8 & 0 & 7/8
        \\
        \end{pmatrix}
        X_{A,i}
    \end{equation}
To see equation, e.g., the first term of \ref{eqn:recursion}, observe that 
\begin{align}
&\Pr[e_{i+2} = \froz\froz\froz\froz \mid e_{i+1} = 000\froz \wedge A] \cdot\Pr[e_{i+1} = 000\froz \wedge e_i = \froz\froz\froz\froz \wedge A] \\
&+  \Pr[e_{i+2} = \froz\froz\froz\froz \mid e_{i+1} = 000\froz \wedge A]\cdot \Pr[e_{i+1} = 000\froz \wedge e_i \neq 0000 \wedge A]  \\
&= \Pr[e_{i+2} = \froz\froz\froz\froz \mid e_{i+1} = 000\froz \wedge e_i = \froz\froz\froz\froz \wedge A] \cdot\Pr[e_{i+1} = 000\froz \wedge e_i = \froz\froz\froz\froz \wedge A] \\
&+  \Pr[e_{i+2} = \froz\froz\froz\froz \mid e_{i+1} = 000\froz \wedge e_i\neq 0000\wedge A]\cdot \Pr[e_{i+1} = 000\froz \wedge e_i \neq 0000 \wedge A] \\
&= \Pr[e_{i+2} = \froz\froz\froz\froz \wedge e_{i+1} = 000\froz \wedge e_{i} = \froz\froz\froz\froz \wedge A]
\\
&+\Pr[e_{i+2} = \froz\froz\froz\froz \wedge e_{i+1} = 000\froz \wedge e_{i} \neq 0000 \wedge A] \\
&= 
\Pr[e_{i+2} = \froz\froz\froz\froz \wedge e_{i+1} = 000 \froz \wedge A \wedge e_i = \froz\froz\froz\froz] \\
& +
\Pr[e_{i+2} = \froz\froz\froz\froz \wedge e_{i+1} = 000 \froz \wedge A \wedge e_i = 000\froz] \qquad (=0)\\
& +
\Pr[e_{i+2} = \froz\froz\froz\froz \wedge e_{i+1} = 000 \froz \wedge A \wedge e_i \neq 0000]\\
& =\Pr[e_{i+2} = \froz\froz\froz\froz \wedge e_{i+1} = 000\froz \wedge A]
\end{align}

    And
    \begin{equation}
    X_{F_1,1}
    =
        \begin{pmatrix}
        1/8\\
        0 \\
        0 \\
        7/64 \\
        0 \\
        \end{pmatrix}
    , \qquad
    X_{\neg F_1,1}
    =
        \begin{pmatrix}
        0\\
        0 \\
        0 \\
        0 \\
        49/64 \\
        \end{pmatrix}
    \end{equation}
By applying the recursion from \ref{eqn:recursion} $3$ times we obtain

$X_{F_1,4}=(1/32768)[896,64,448,784,5488]^T$ and $X_{\neq F_1,4}=(1/32768)[2744,392,2744,2401,16807]^T$. 
As $F_4$ holds if at least one of its incident edges $e_4$ or $e_5$ contains a frozen variable, we obtain 
\begin{align*}
\Pr(F_4\wedge F_1)& =(896+64+448+784)/32768=2192/32768~,
\\
\Pr(\neg F_4\wedge F_1) & = 5488/32768~,
\\
\Pr(F_4\wedge \neg F_1)  & = (2744+392+2744+2401)/32768=8281/32768\text{~, and }\\
\Pr(\neg F_4\wedge \neg F_1) & = 16807/32768~.
\end{align*} We obtain $\Pr(F_1)=(5488+2192)/32768=7680/32768$, and $\Pr(F_4)=(2192+8281)/32768=10473/32768$.

\begin{align}
\frac{2192}{32768} = \Pr(F_1\wedge F_4)
\;\neq\;
\Pr(F_1)\cdot \Pr(F_4)=\frac{7680}{32768}\cdot \frac{10473}{32768}\approx \frac{2455}{32768}
\end{align}

Consider the Jordan decomposition of the transition matrix $P$ 
\begin{align*}
P=\begin{pmatrix}
        0 & 1 & 0 & 1 & 0
        \\
        1/8 & 0 & 0 & 0 & 0 
        \\
        7/8 & 0 & 0 & 0 & 0 
        \\
        0 & 0 & 1/8 & 0 & 1/8 
        \\
        0 & 0 & 7/8 & 0 & 7/8
        \\
\end{pmatrix} = S \cdot J \cdot S^{-1}
\end{align*}

\begin{align*}
S
=
\frac{1}{392}
\begin{pmatrix}
        64 & 0 & 0 & -392 & 64
        \\
        -64 & -392 & -49 & 0 & 8 
        \\
        -448 & 0 & -343 & 0 & 56
        \\
        56 & 392 & 49 & 0 & 56
        \\
        392 & 0 & 343 & 392 & 392
        \\
\end{pmatrix}
,\quad
J=
\frac{1}{8}
\begin{pmatrix}
        -1 & 0 & 0 & 0 & 0
        \\
        0 & 0 & 0 & 0 & 0 
        \\
        0 & 0 & 0 & 8 & 0 
        \\
        0 & 0 & 0 & 0 & 0 
        \\
        0 & 0 & 0 & 0 & 8
        \\
\end{pmatrix}
,\\
S^{-1} =
\frac{1}{504}
\begin{pmatrix}
        2744 & -21952 & 2744 & -21952 & 2744
        \\
        0 & -504 & 72 & 0 & 0 
        \\
        -3528 & 28728 & -4104 & 28728 & -3528
        \\
        0 & -3528 & 504 & -3528 & 504
        \\
        343 & 343 & 343 & 343 & 343
        \\
\end{pmatrix}
\end{align*}
Note that $J^k$ only contains two non-zero entries, on the diagonal, for $k\geq 2$. This allows us to compute the following formulas for $\Pr[F_1 \wedge F_j]$ and $\Pr[\neg F_1 \wedge F_j]$ when $j \geq 3$:

\begin{align}
    X_{F_1,j}
    &
    = S \cdot J^{j-1} \cdot S^{-1} \cdot X_{F_1,1}
    =
    S \cdot 
    \frac{1}{504 \cdot 64 \cdot 8^{j-1}}
\begin{pmatrix}
        131712 \cdot (-1)^{j}
        \\
        0
        \\
        0
        \\
        0
        \\
        5145 \cdot 8^{j-1}
        \\
\end{pmatrix}
\\
&=
\frac{131712 \cdot (-1)^{j}}{392 \cdot 504 \cdot 64 \cdot 8^{j-1}}
\begin{pmatrix}
        64\\
        - 64\\
        - 448\\
        56\\
        392\\
\end{pmatrix}
+
\frac{5145 \cdot 8^{j-1}}{392 \cdot 504 \cdot 64 \cdot 8^{j-1}}
\begin{pmatrix}
        64
        \\
        8
        \\
        56
        \\
        56
        \\
        392
        \\
\end{pmatrix}
\end{align}

\begin{align*}
\Pr[F_1 \wedge F_j] 
& = \frac{(64-64-448+56)\cdot 131712}{392 \cdot 504 \cdot 64 \cdot 8^{j-1}} (-1)^{j} + \frac{(64+8+56+56)\cdot 5145}{392 \cdot 504 \cdot 64}
\\
& = \frac{115}{1536} + (-\frac{1}{8})^{j-1} \cdot \frac{49}{12}
\\
\Pr[F_1 \wedge \neg F_j] 
& = \frac{392 \cdot 131712}{392 \cdot 504 \cdot 64 \cdot 8^{j-1}} (-1)^{j} + \frac{392 \cdot 5145}{392 \cdot 504 \cdot 64} 
\\
& = \frac{245}{1536} - (-\frac{1}{8})^{j-1} \cdot \frac{49}{12}
\end{align*}

\begin{align}
    X_{\neg F_1,j}
    &
    = S \cdot J^{j-1} \cdot S^{-1} \cdot X_{\neg F_1,1}
    =
    S \cdot 
\frac{1}{504 \cdot 64 \cdot 8^{j-1}}
\begin{pmatrix}
        134456 \cdot (-1)^{j-1}
        \\
        0
        \\
        0
        \\
        0
        \\
        16807 \cdot 8^{j-1}
        \\
\end{pmatrix}
\\
&=
\frac{134456 \cdot (-1)^{j-1}}{392 \cdot 504 \cdot 64 \cdot 8^{j-1}}
\begin{pmatrix}
        64\\
        - 64\\
        - 448\\
        56\\
        392\\
\end{pmatrix}
+
\frac{16807 \cdot 8^{j-1}}{392 \cdot 504 \cdot 64 \cdot 8^{j-1}}
\begin{pmatrix}
        64
        \\
        8
        \\
        56
        \\
        56
        \\
        392
        \\
\end{pmatrix}
\end{align}

\begin{align*}
    \Pr[\neg F_1 \wedge F_j] 
&
= \frac{(64-64-448+56)\cdot 134456}{392 \cdot 504 \cdot 64 \cdot 8^{j-1}} (-1)^{j-1} + \frac{(64+8+56+56)\cdot 16807}{392 \cdot 504 \cdot 64}
\\
& = \frac{1127}{4608} - (-\frac{1}{8})^{j-1} \cdot \frac{2401}{576}
\\
\Pr[\neg F_1 \wedge \neg F_j]
& = \frac{392 \cdot 134456}{392 \cdot 504 \cdot 64 \cdot 8^{j-1}} (-1)^{j-1} + \frac{392 \cdot 16807}{392 \cdot 504 \cdot 64}
\\
& = \frac{2401}{4608} + \parens*{-\frac{1}{8}}^{j-1} \cdot \frac{2401}{576}
\end{align*}

Note how the ratios $\Pr[\neg F_1 \wedge F_j] / \Pr[\neg F_1 \wedge \neg F_j]$ and $\Pr[F_1 \wedge F_j] / \Pr[F_1 \wedge \neg F_j]$ converge to the same value, $115/245 = 1127/2401 = 23/49$.
\end{proof}

\section{Additional Pseudocode of Analyzed Algorithms}

\begin{algorithm}[ht]

    \caption{Pre-shattering of \cite{BEPS16_jacm} for Maximal Independent Set}

    \Input{A graph $G=(V,E)$ of maximum degree $\Delta$.}
    
    \Output{An independent set $I$ s.t.\ the graph induced by unsolved nodes is shattered.}

    \smallskip

    Initialize $I=\emptyset$, $B=\emptyset$

    \nonl Throughout, let $V_\sIB:= V \setminus(\hGamma(I) \cup B)$, and let $\deg_\sIB(v)$ (resp., $N_\sIB(v)$) be the degree (resp., neighborhood) of $v$ in the subgraph induced by $V_\sIB$.

    \For{each \emph{scale} k from $1$ to $\log \Delta +1$}{
        
        \For({(each iteration a \emph{Luby step})}){$T = c \log \Delta$ iterations}{
            If $v \in V_\sIB$, pick $b(v) \in \set{0,1}$ at random with $\Pr[b(v)=1] = 1/(\deg_\sIB(v)+1)$

            If $b(v)=1$ and $\forall u \in N_\sIB(v), b(u)=0$, then $v$ joins the independent set $I$.
        }

        If $v \in V_\sIB$ and $\deg_\sIB(v) > \Delta/2^k$, $v$ joins $B$.
    }
    
    Update $B \gets B \setminus \hGamma(I)$. Return $I$.
    
    \label[algorithm]{alg:BEPSMis}
\end{algorithm}

\begin{algorithm}

    \Input{A graph $G=(V,E)$, a matching $M \subseteq E(G)$ such that $G[U]$ has maximum degree $\hat{\Delta}$ where $U = V(G) \setminus \bigcup_{uv \in M}\set{u,v}$.}

    \Output{An extended matching $M$ s.t.\ the graph $G[U]$ induced by $U = V(G) \setminus \bigcup_{uv \in M}\set{u,v}$ is shattered}    

    \smallskip

    \For{$c \log \hat{\Delta}$ iterations}{
        Each unsolved node $v \in U$ makes a proposal an unsolved neighbor $\prop(v)$ chosen u.a.r.\ in $N(v) \cap U$.

        Each node $v' \in U$ that received a proposal accepts the one whose origin $v\in \prop^{-1}(v')$ has highest ID.
        
        Build a directed graph $H$ as follows: for each accepted proposal emitted by a node $v$ towards a node $v'$, let $v\rightarrow v'$ be an oriented edge in $H$.

        \For{each $v \in H$ s.t.\ $\deg_H(v)>0$}{

            \uIf{$\indeg_H(v) = 0$}{
                $b(v) \gets 0$
            }
            \uElseIf{$\outdeg_H(v) = 0$}{
                $b(v) \gets 1$
            }
            \Else({($\indeg_H(v) = \outdeg_H(v) = 1$)}){
                $b(v) \gets 0$ w.p.\ $1/2$, $1$ otherwise.
            }
        }
            
        Each oriented edge $v\rightarrow v'$ with $b(v)=0$ and $b(v')=1$ is added to the matching $M$.

        Update the set of unsolved nodes $U$: remove nodes new in the matching $M$ and nodes of degree $0$ in $G[U]$. 
    }
    
    \caption{Pre-shattering of \cite{BEPS16_jacm} for Maximal Matching, after degree reduction}
    \label{alg:mm-beps}
\end{algorithm}

\end{document}